\pgfplotsset{compat=1.18}
\newtheorem{theorem}{Theorem}
\newenvironment{itheorem}[2]
{\vspace{0.5\baselineskip} \noindent \textbf{#1}~(#2)\textbf{.}~\itshape }{}
\newtheorem{definition}{Definition}
\newtheorem{corollary}{Corollary}
\newtheorem{proposition}{Proposition}
\newtheorem{conjecture}{Conjecture}
\newtheorem{remark}{Remark}
\def\autorefapp#1{\hyperref[#1]{Appendix~\ref{#1}}}
\renewcommand{\title}[1]{\vbox{\center\bf{\LARGE #1}}\vspace{5mm}}
\renewcommand{\author}[1]{\vbox{\center{#1}}\vspace{5mm}}
\newcommand{\address}[1]{\vbox{\center\em#1}}
\newcommand\emails[1]{\begingroup\renewcommand\thefootnote{}\footnote{#1}\addtocounter{footnote}{-1}\endgroup}
\def\ep{\varepsilon}
\def\vphi{\varphi}
\def\tr{{\rm tr}}
\def\op{{\cal O}}
\def\C{\mathbb{C}}
\def\iden{\mathbb{I}}
\def\and{\quad {\rm and} \quad}
\def\ra{\rightarrow}
\def\CC{{\cal C}}
\def\fn{\footnotesize}
\def\ketbra#1{ |{#1}\rangle\!\langle{#1}| }
\DeclareMathOperator*{\Ex}{\mathbb{E}}
\DeclareMathOperator*{\pr}{{\rm Pr}}
\def\dist{\mathcal{D}}
\def\bwrqc{\nu_{\rm bw}}
\def\pwrqc{\nu_{\rm pw}}
\def\poly{\text{poly}}
\def\q{q}
\def\na{n_A}
\def\nb{n_B}
\def\da{D_A}
\def\db{D_B}
\def\err{\delta}
\def\CC{\mathcal{C}_{\err}}
\def\cxl{r}
\def\ssa{\rho_A}
\def\F{\mathcal{F}}
\def\rank{{\rm rank}}
\numberwithin{equation}{section}
\begin{document}

\title{Sharp Transitions for Subsystem Complexity}

\author{Yale Fan,${}^{a,b,c}$ Nicholas Hunter-Jones,${}^{c,d}$ Andreas Karch,${}^c$ Shivan Mittal${}^c$}

\address{
${}^a$Department of Physics, University of Idaho, Moscow, ID 83844 \\[6pt]
${}^b$Center for Computing Research, Sandia National Laboratories, Albuquerque, NM 87185 \\[6pt]
${}^c$Department of Physics, University of Texas at Austin, Austin, TX 78712 \\[6pt]
${}^d$Department of Computer Science, University of Texas at Austin, Austin, TX 78712
}
\emails{\hspace*{-5mm} Emails: \href{mailto:yalefan@gmail.com}{\tt yalefan@gmail.com},
\href{mailto:nickrhj@utexas.edu}{\tt nickrhj@utexas.edu},
\href{mailto:karcha@utexas.edu}{\tt karcha@utexas.edu},
\href{mailto:shivan@utexas.edu}{\tt shivan@utexas.edu}.}

\begin{abstract}
The circuit complexity of time-evolved pure quantum states grows linearly in time for an exponentially long time. This behavior has been proven in certain models, is conjectured to hold for generic quantum many-body systems, and is believed to be dual to the long-time growth of black hole interiors in AdS/CFT. Achieving a similar understanding for mixed states remains an important problem. In this work, we study the circuit complexity of time-evolved subsystems of pure quantum states. We find that for greater-than-half subsystem sizes, the complexity grows linearly in time for an exponentially long time, similarly to that of the full state. However, for less-than-half subsystem sizes, the complexity rises and then falls, returning to low complexity as the subsystem equilibrates. Notably, the transition between these two regimes occurs sharply at half system size. We use holographic duality to map out this picture of subsystem complexity dynamics and rigorously prove the existence of the sharp transition in random quantum circuits. Furthermore, we use holography to predict features of complexity growth at finite temperature that lie beyond the reach of techniques based on random quantum circuits. In particular, at finite temperature, we argue for an additional sharp transition at a critical less-than-half subsystem size. Below this critical value, the subsystem complexity saturates nearly instantaneously rather than exhibiting a rise and fall. This novel phenomenon, as well as an analogous transition above half system size, provides a target for future studies based on rigorous methods.
\end{abstract}

\section{Introduction}

Holographic duality equates the dynamics of certain strongly coupled quantum field theories with that of classical theories of gravity in one higher dimension \cite{Maldacena:1997re}.  As such, holography has emerged as a practical tool for translating geometric problems in gravity into algebraic problems in quantum mechanics.  By way of holography, classical gravity can provide a shortcut to difficult calculations in quantum many-body physics and can also function as an oracle, suggesting sometimes heuristic results that provide targets for rigorous methods based on quantum information theory.

A growing body of evidence suggests that holography makes universal predictions about strongly interacting quantum many-body systems, e.g., those that are sufficiently chaotic (namely, those with spectral statistics described by random matrix theory).  For example, holographic reasoning has found great success in the study of both static and dynamical properties of entanglement entropy.  In holography, the entanglement entropy of a boundary subregion goes like the area of the minimal-area extremal surface satisfying a homology constraint \cite{Ryu:2006bv, Ryu:2006ef, Hubeny:2007xt}.  Thus the entanglement structure of holographic states mimics that of quantum states prepared by certain tensor networks \cite{Swingle:2009bg}.  Moreover, competition between extremal surfaces leads the entanglement entropy to grow and then saturate at the thermalization time, reproducing the expected time evolution of entanglement entropy in generic quantum systems \cite{Hartman:2013qma}.

However, entanglement entropy alone is a limited probe of many-body physics.  A significantly more subtle quantity is quantum circuit complexity (or simply ``complexity''), which measures the difficulty of creating and manipulating quantum states \cite{Aaronson:2016vto}.  Via quantum Hamiltonian complexity, it can be viewed as a fundamental property of physical systems \cite{Gharibian:2015rnl}.

Complexity underlies a great two-way flow of insight between quantum gravity and many-body physics.  Studying the time evolution of complexity in quantum systems has yielded clues as to how to extend the holographic dictionary beyond black hole horizons \cite{Susskind:2014moa, Bouland:2019pvu}, shedding light on the similarities and differences between complexity and entanglement entropy as long-time probes of the black hole interior \cite{Hartman:2013qma, Brown:2017jil}.  Brown and Susskind conjectured that the circuit complexity of a pure state in a holographic system---or a strongly-interacting many-body system more broadly---grows linearly in time before saturating at a time exponential in the system size.  While the circuit complexity of specific states or unitaries is difficult to compute, this conjecture has been proven under certain assumptions, including the case where the system evolves via a local random quantum circuit (RQC) \cite{brandao2021models, Haferkamp:2021uxo, chen2024incompressibility}.  The argument is that the set of circuits constructed from random 2-local gates converges to a unitary $k$-design in depth linear in $k$, and the complexity of a $k$-design element is lower-bounded by $k$.

Our goal in this paper is to formulate and prove a version of the Brown-Susskind conjecture for subsystems and mixed states. This is a natural problem from both fundamental and practical points of view. For instance, bulk locality in holography provides a natural geometric framework for computing properties of subsystems.  This fact has been exploited to great effect in the study of entanglement entropy.  It is believed that all information about a boundary subsystem is encoded in a corresponding bulk subregion, the entanglement wedge \cite{Czech:2012bh}.  The entanglement wedge allows us to probe more subtle dynamical properties of subsystems than entropy, yet its full potential remains untapped in the realm of complexity. Moreover, understanding the dynamics of quantum circuit complexity in open quantum systems is a problem of practical value, with implications for the preparation and simulation of mixed states as well as for open-system approaches to universal quantum computation \cite{Aharonov:1998zf, Briegel:2009inn, Verstraete:2009zet, McGinley:2025fpi}.

We describe the implications of the AdS/CFT correspondence for the time evolution of quantum circuit complexity in subsystems of a quantum many-body system, and we find agreement with expectations from random quantum circuits.

We first show that holography predicts universal behavior for subsystem complexity, where the complexity of a mixed state is defined by purification.  In particular, it predicts a sharp transition at half size as well as the behavior before and after the transition.  Previous work has defined a notion of purification complexity for mixed states and studied its relation to holographic subregion complexity (based on the ``complexity = volume'' (CV) \cite{Susskind:2014rva, Stanford:2014jda} and ``complexity = action'' (CA) \cite{Brown:2015bva, Brown:2015lvg} proposals and their variants) \cite{Alishahiha:2015rta, Ben-Ami:2016qex, Abt:2017pmf, Agon:2018zso, Caceres:2018blh, Caceres:2019pgf}, but has not studied its dynamics.  On the other hand, the time dependence of holographic complexity \cite{Carmi:2017jqz, Chen:2018mcc, Auzzi:2019mah} has not, thus far, been studied for proper subregions that comprise a finite fraction of the boundary.  We aim to fill both of these gaps.  While many potential geometrizations of complexity have been proposed on the holographic side, we argue that all of these definitions exhibit the following feature: holographic subsystem complexity grows for exponentially long times if the subsystem is at least half the size of the whole system, and otherwise saturates rapidly to a constant value.  This phenomenon is robust against ambigui\-ties in the definition of holographic complexity because the difference in behavior with respect to subsystem size is driven by the entanglement wedge, the subregion of the gravitational spacetime that contains all of the information about the subsystem density matrix.  For subsystems smaller than half the total system size, the entanglement wedge does not include the growing region of the bulk spacetime that gives rise to linear growth of \emph{any} subsystem-related quantity.

We then argue that this holographic understanding is reflected in chaotic quantum many-body systems more generally. RQCs are a solvable model of local quantum-chaotic dynamics in which quantities concerning the spread of local information and the onset of thermalization can be computed precisely. The growth of circuit complexity in this model has been well-understood, where lower bounds rigorously establish that the complexity grows linearly in time for an exponentially long time. This statement holds for the circuit complexity of the state (prepared from an initial product state) as well as that of the unitary evolution itself. In this work, we examine the circuit complexity of subsystems of states time-evolved by random quantum circuits. Similar to the picture understood in AdS/CFT, we find that the subsystem complexity for random circuits also displays distinctly different behaviors for subsystems greater than and less than half system sizes. We consider $n$-qubit random quantum circuits in one spatial dimension, where the evolution is given by a depth-$t$ circuit consisting of $t$ layers of randomly chosen gates. Each gate is chosen randomly with respect to the Haar measure on the 2-site unitary group $U(4)$, and the gates are arranged in a brickwork fashion. More precisely, we consider two similar 1D RQCs, which we term {\it brickwork} and {\it patchwork} RQCs. We strongly believe that the qualitative features of subsystem complexity growth should not depend on these choices of random circuits, and should hold for any connected architecture and universal gate set. The reason for our focus on these two models is that they allow certain quantities to be computed very precisely.

Prior work has studied the time evolution of several quantum information-theoretic properties of subsystems, although from different perspectives than ours.  Examples include the time evolution of subsystem complexity in spin chains (which, in some cases, qualitatively resembles that of entanglement entropy) \cite{DiGiulio:2021oal, DiGiulio:2021noo}; subsystem complexity phase transitions due to measurements \cite{Jian:2023mdh}; the dynamics of subsystem information capacity in random quantum circuits \cite{Chen:2024abj}; and the dynamics of entanglement asymmetry (defined in terms of a R\'enyi entropy) in RQCs as a function of subsystem size, for which a qualitative change in behavior is found when the subsystem size crosses half the system size \cite{Ares:2025ljj}.

This paper is organized as follows. In \autoref{sec:motivation}, we present the basic holographic construction and describe the various sharp transitions. In \autoref{sec:holography}, we give a quantitative description of the relevant minimal surfaces, leaving technical details to \autoref{app:holodetails}. In \autoref{sec:rqcs}, we derive rigorous bounds on subsystem complexity in random unitary circuits, which exhibit behavior consistent with holographic predictions, leaving technical details to \autoref{app:rqccomp}. In \autoref{sec:discussion}, we summarize our results and outline some future prospects. In \autoref{app:hawkingpage}, we provide some background material about the holographic description of the eternal black hole background relevant for our discussion.

While this paper was in preparation, we learned that complementary results have been obtained in the concurrent work \cite{Haah:2025hyf}.

\section{Motivation: Observations in Holographic Systems} \label{sec:motivation}

Our study is motivated by highly fruitful analogies between a black hole of entropy $S$ and a random quantum circuit on $n$ qudits ($S = n\log q$ for qudits of local dimension $q$): 
\begin{itemize}
\item On one hand, the volume of the interior of a black hole (or ``Einstein-Rosen bridge'') should grow linearly with time for a time exponential in $S$ \cite{Susskind:2014moa, Bouland:2019pvu}.\footnote{An expectation for the rate of this linear growth is given in \cite{Susskind:2014moa}.}  Classically, it would grow forever, but at times of order $e^S$, we would expect classical gravity to break down and exponentially suppressed geometries to contribute significantly.  Indeed, the finiteness of the quantum gravity Hilbert space requires that the growth saturate, and it has been argued to do so on the basis of nonperturbative effects in gravity (e.g., in \cite{Iliesiu:2021ari, Balasubramanian:2022gmo}).\footnote{In the large-$S$ limit, we can work purely in classical gravity, where a boundary state is dual to a single geometry.}
\item On the other hand, we would expect the complexity of an $n$-qudit quantum state prepared via a random quantum circuit to grow linearly with time (circuit depth) until it saturates at a time exponential in $n$.  One possible approach to formulating and proving this statement is given in \cite{Haferkamp:2021uxo}.
\end{itemize}
We ask: do these analogies extend to \emph{subsystems} of $n$ qudits?

This is a natural question from the holographic point of view.  Holography yields simple prescriptions for computing properties of subsystems of a non-gravitational quantum system, most notably their entanglement entropy via the Ryu-Takayanagi (RT) formula \cite{Ryu:2006bv, Ryu:2006ef}.  However, entanglement entropy evolves very differently than complexity over time \cite{Hartman:2013qma} (\autoref{fig:HMtransitionredone}).  If one chooses identical subregions of linear size $L$ on each side of an eternal black hole, then there exist two candidate extremal-area bulk surfaces ending on their boundary, which we refer to as RT surfaces: a connected surface that traverses the Einstein-Rosen (ER) bridge and a disconnected surface with two caps.  Typically, the area of the connected surface is initially minimal and grows linearly with time, but at a time of order $L$ (the thermalization time), it exceeds that of the disconnected surface and quantities based on RT surfaces become time-independent.  The volume of the ER bridge continues to grow even after the entropy stops growing.  Hence entanglement entropy-based quantities cannot capture the exponentially long linear growth of the eternal black hole, motivating its dual description as a complexity \cite{Susskind:2014moa}.

\begin{figure}[!htb]
\centering
\includegraphics[width=0.8\textwidth]{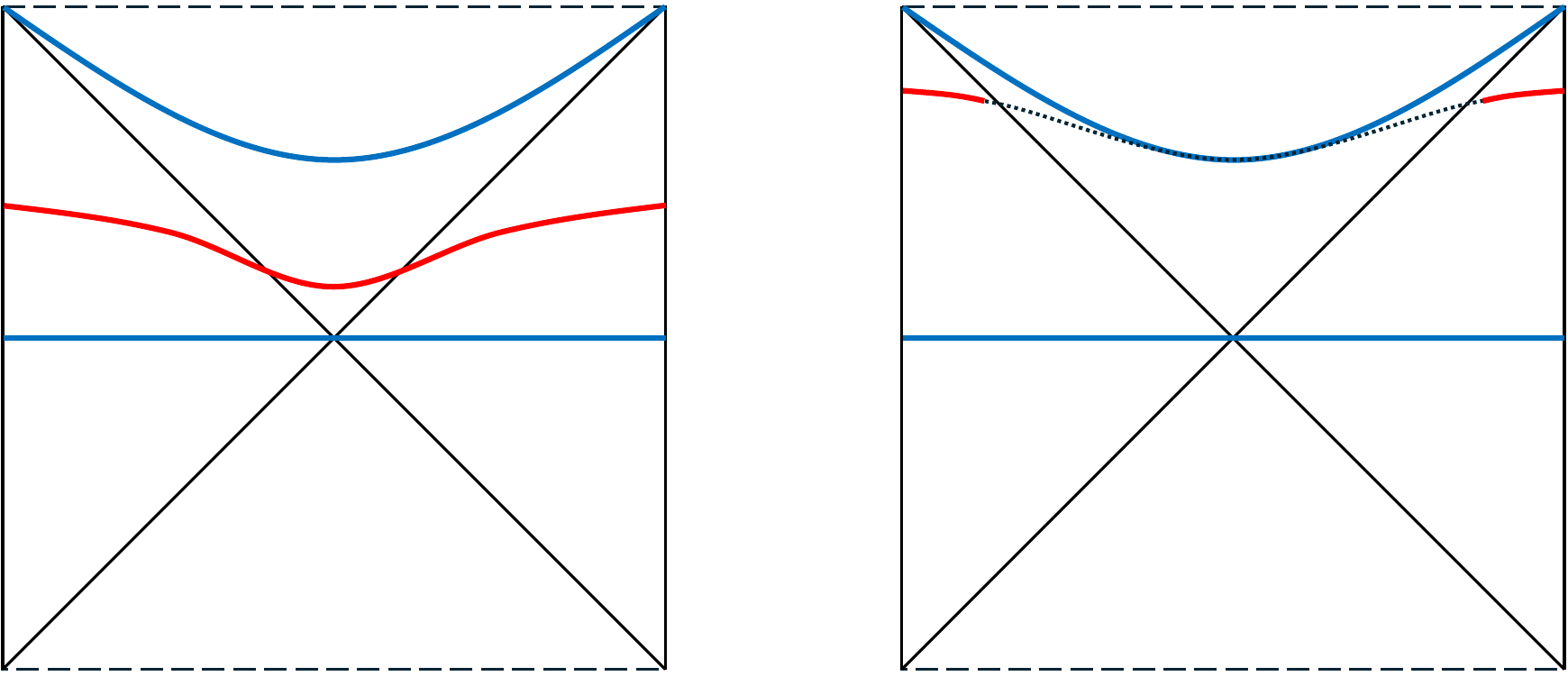}
\caption{Hartman-Maldacena transition.  Left: early-time (connected) extremal surface crossing the ER bridge.  Right: late-time (disconnected) extremal surface terminating outside the horizon.  The two surfaces exchange dominance at the thermalization time.}
\label{fig:HMtransitionredone}
\end{figure}

What does holography predict for the complexity growth of a subsystem?  Any property of a subsystem should be derivable from its reduced density matrix.  In holography, the reduced density matrix of a subregion is encoded in its bulk entanglement wedge \cite{Czech:2012bh}, which is bounded by its RT surface.\footnote{We are being loose with terminology.  More precisely, the entanglement wedge is the bulk domain of dependence of a subregion of a bulk Cauchy slice bounded by the boundary subregion and its RT surface.  Note also that there exist many distinct notions of subregion duality, as summarized in, e.g., \cite{Bao:2024hwy}.}  This means that any holographic prescription for computing subsystem complexity should be sensitive to the Hartman-Maldacena (HM) transition between connected and disconnected RT surfaces.  Nonetheless, the behavior of holographic subsystem complexity need not mirror the behavior of holographic entanglement entropy because the former depends on the entire entanglement wedge and not only on the RT surface.

To see this, consider the boundary subregion $I_L\cup I_R$ in the two-sided setup, where $I_{L, R}$ are intervals and each boundary has the topology of a circle, for simplicity.  Regardless of the interval sizes, the RT surface undergoes a transition in time.  Before the transition, the interior of the RT surface (i.e., the entanglement wedge) threads the ER bridge.  However, after the transition:
\begin{itemize}
\item If the intervals are less than half the size of each boundary, then the entanglement wedge is the ``inside'' of the disconnected RT surface.  Hence the entanglement wedge is disconnected and does not traverse the ER bridge.
\item If the intervals are more than half the size of each boundary, then the entanglement wedge is the ``outside'' of the disconnected RT surface.  Hence the entanglement wedge is connected and includes the ER bridge.
\end{itemize}
Qualitatively, therefore, holography predicts that complexity grows for exponentially long times only if the subsystem is at least half the size of the whole system.

This conclusion accords with intuition from quantum mechanics: for subsystems of size less than half that of the total system, we expect the reduced density matrix to quickly become close to maximally mixed, and the maximally mixed state has low complexity. (Here, we have in mind approximate circuit complexity, which is robust under infinitesimal perturbations of the state; it may be the case that small deviations from the maximally mixed state result in large fluctuations in exact circuit complexity \cite{Haferkamp:2021uxo}.) We are led to the somewhat surprising conclusion that a sharp transition occurs from polynomially long growth to exponentially long growth of subsystem complexity at exactly half the system size.

In addition to this transition at half system size, we demonstrate that the holographic theory exhibits two more sharp transitions:
\begin{itemize}
\item At finite temperatures, there exists a second critical subsystem size, less than half, below which the complexity saturates essentially instantaneously but above which the complexity saturates in finite time. This transition disappears in the infinite-temperature limit.  Likewise, there exists a critical greater-than-half subsystem size, symmetric about half system size with respect to the previous one, above which the complexity grows for an exponentially long time at a constant rate but below which the complexity grows at two different rates.
\item In the range of subsystem sizes where the complexity saturates in finite (polynomially long) time, it drops suddenly to a constant value at the saturation time after the initial linear rise.  Likewise, in the regime where the complexity grows at two different rates, it jumps upward discontinuously at the (polynomially long) time when the slope of the linear growth increases.
\end{itemize}
The various transitions in both space and time are summarized in \autoref{figtable}.

\begin{table}[!htb]
\centering
\definecolor{db}{rgb}{0.271, 0.459, 0.706}
\begin{tabular}{|c|c|c|c|} \hline
Subsystem Size & Early Times & Late Times & Complexity vs.\ Time \\ \hline
$p < p_\text{crit}$ & \includegraphics[scale=0.17, valign=c]{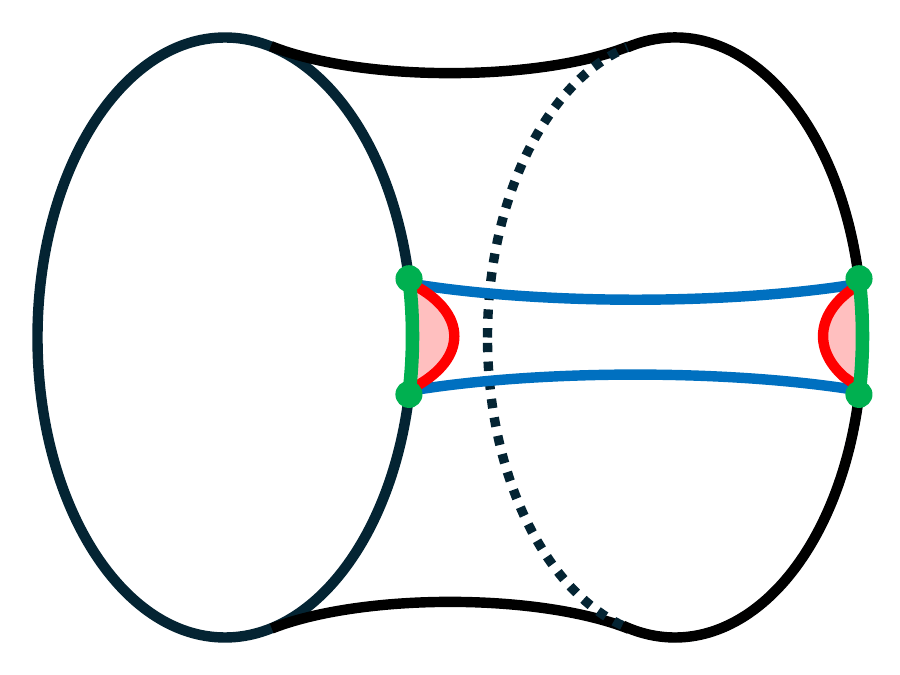} & \includegraphics[scale=0.17, valign=c]{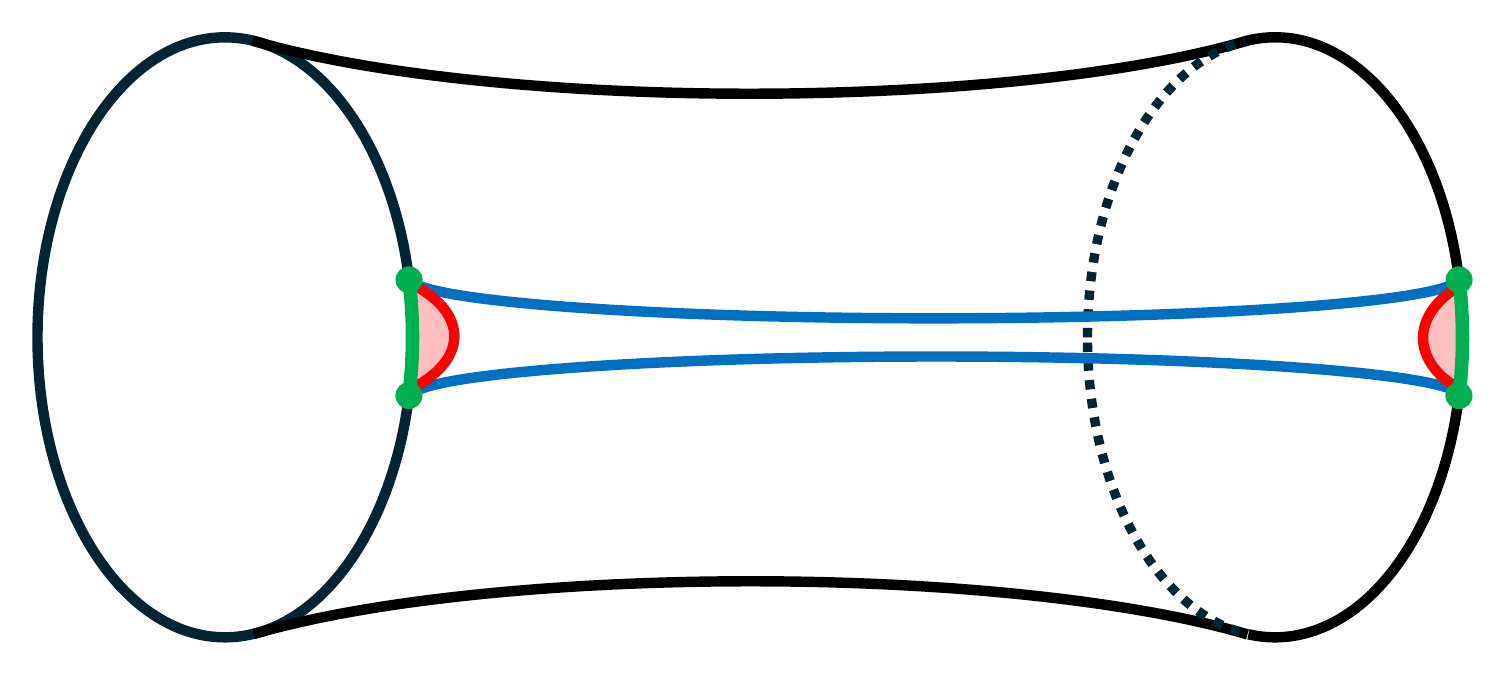} & 
\hspace*{-6pt}
\scalebox{0.8}{
\begin{tikzpicture}[baseline=12mm]
    \draw[thick,->] (0,0) -- (5,0);
    \draw[thick,->] (0,0) -- (0,3);
    \node at (2.5,-0.25) {{\small time}};
    \node[rotate=90] at (-0.3,1.5) {{\small complexity}};
    \draw[line width=1.2pt,red] (0,0.5) -- (4.9,0.5);
    \node at (0,3) {};
\end{tikzpicture}}~
\\ \hline
$p_\text{crit} < p < \dfrac{1}{2}$ & \includegraphics[scale=0.17, valign=c]{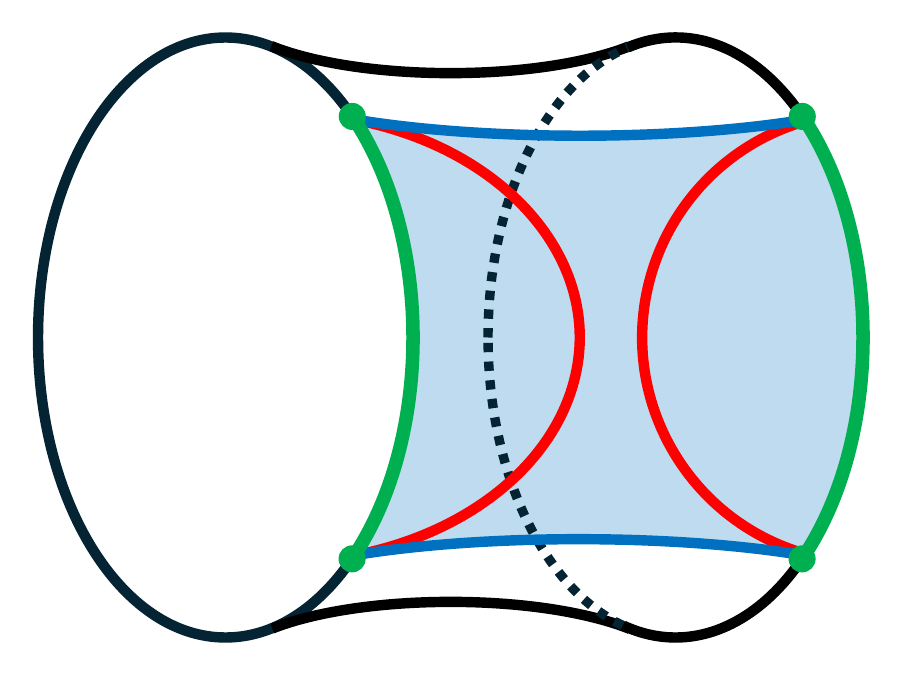} & \includegraphics[scale=0.17, valign=c]{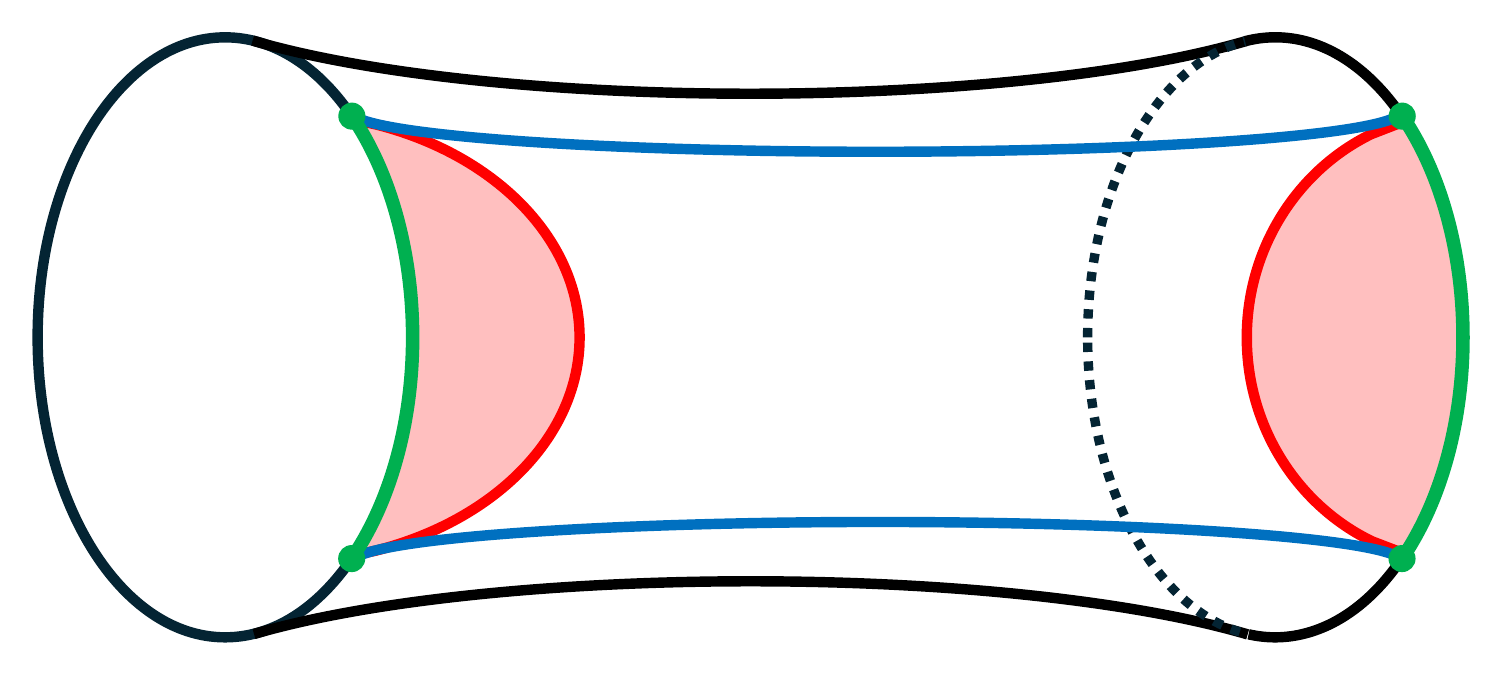} & 
\hspace*{-6pt}
\scalebox{0.8}{
\begin{tikzpicture}[baseline=12mm]
    \draw[thick,->] (0,0) -- (5,0);
    \draw[thick,->] (0,0) -- (0,3);
    \node at (2.5,-0.25) {{\small time}};
    \node[rotate=90] at (-0.3,1.5) {{\small complexity}};
    \draw[line width=1.2pt,db] (0,0.5) -- (2.1,2);
    \draw[line width=1.2pt,dashed,black] (2.1,2) -- (2.1,0.8);
    \draw[line width=1.2pt,db,black!10!red] (2.1,0.8) -- (4.9,0.8);
    \node at (0,3) {};
\end{tikzpicture}}~
\\ \hline
$\dfrac{1}{2} < p < 1 - p_\text{crit}$ & \includegraphics[scale=0.17, valign=c]{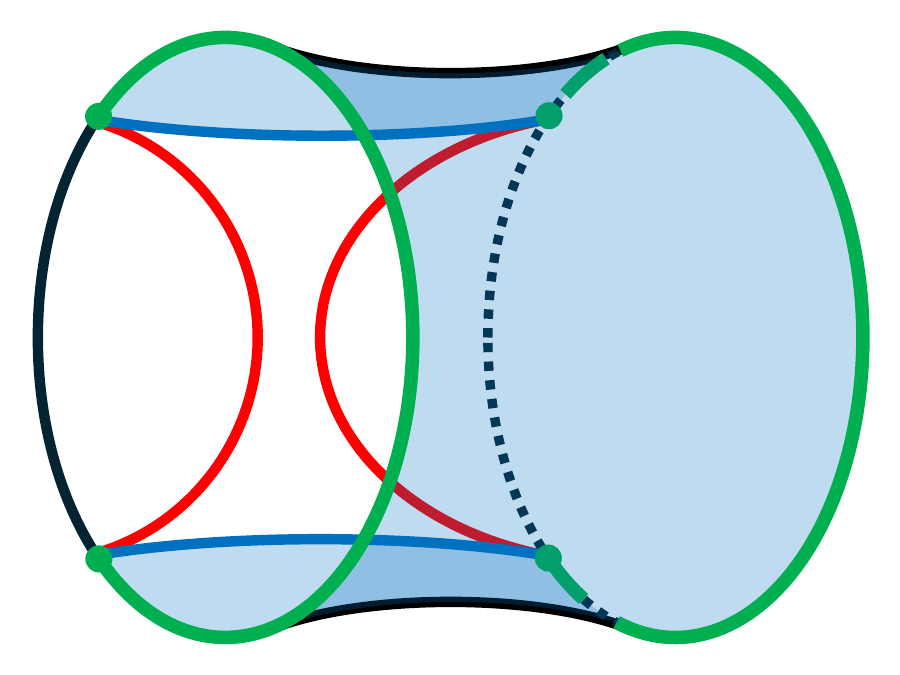} & \includegraphics[scale=0.17, valign=c]{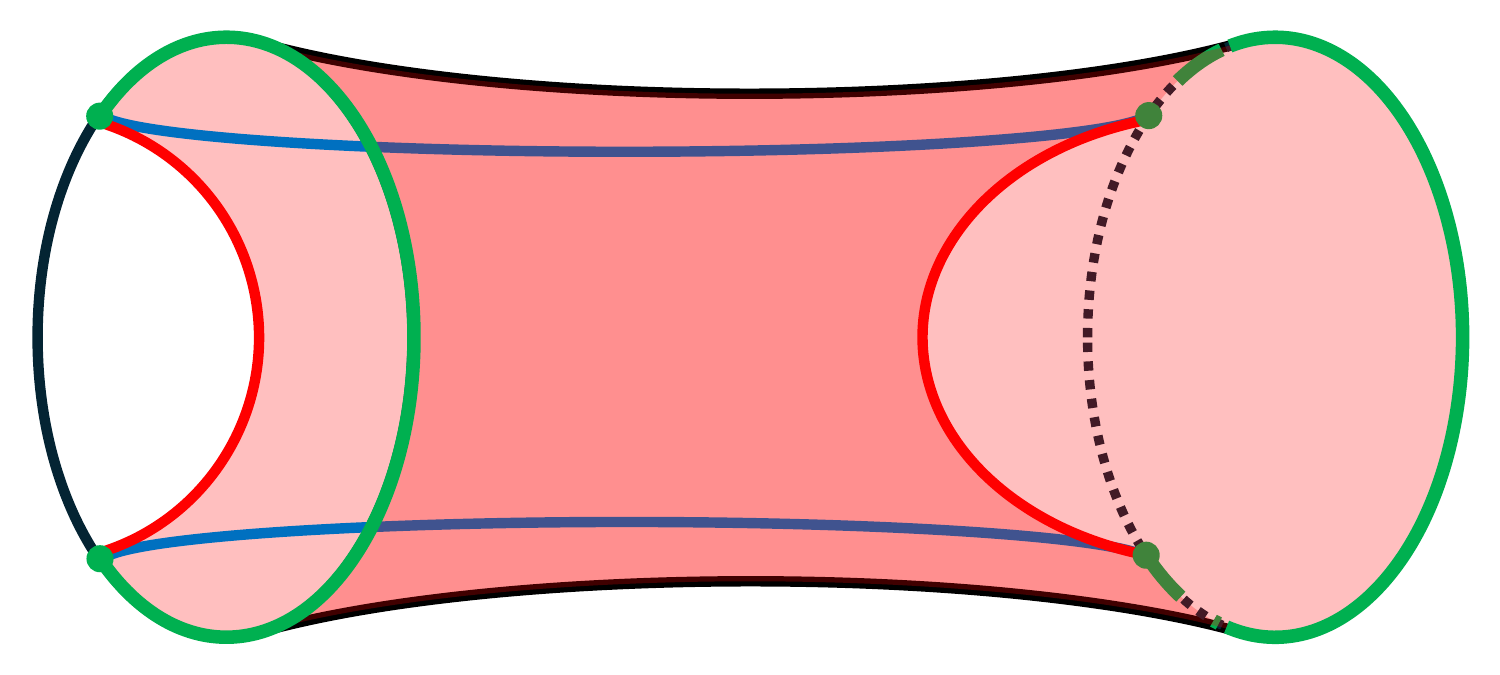} & 
\hspace*{-6pt}
\scalebox{0.8}{
\begin{tikzpicture}[baseline=12mm]
    \draw[thick,-{Bar[width=4pt]}] (0,0) -- (2.2,0);
    \draw[thick,{Bar[width=4pt]}->] (2.8,0) -- (5,0);
    \node at (2.53,0) {$\cdots$};
    \draw[thick,-{Bar[width=4pt]}] (0,0) -- (0,1.2);
    \draw[thick,{Bar[width=4pt]}->] (0,1.8) -- (0,3);
    \node at (0,1.6) {$\vdots$};
    \node at (2.5,-0.25) {{\small time}};
    \node[rotate=90] at (-0.3,1.5) {{\small complexity}};
    \draw[line width=1.2pt,db] (0,0.2) -- (1,0.5);
    \draw[line width=1.2pt,dashed,black] (1,0.5) -- (1,1.01875);
    \draw[line width=1.2pt,db,black!10!red] (1,1.01875) -- (3.2,2.6) -- (4.9,2.6);
    \node at (0,3) {};
\end{tikzpicture}}~
\\ \hline
$p > 1 - p_\text{crit}$ & \includegraphics[scale=0.17, valign=c]{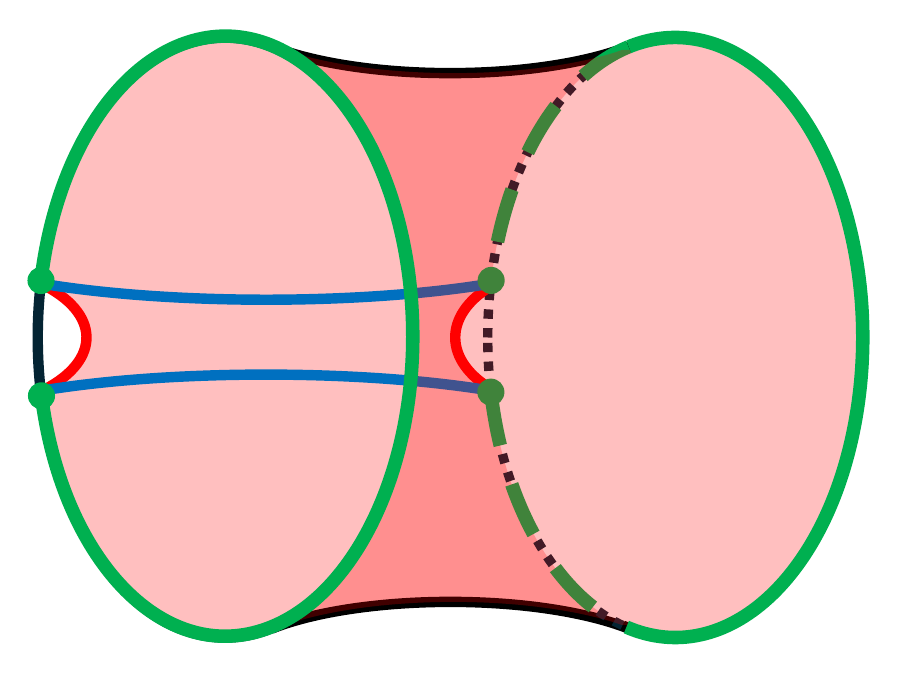} & \includegraphics[scale=0.17, valign=c]{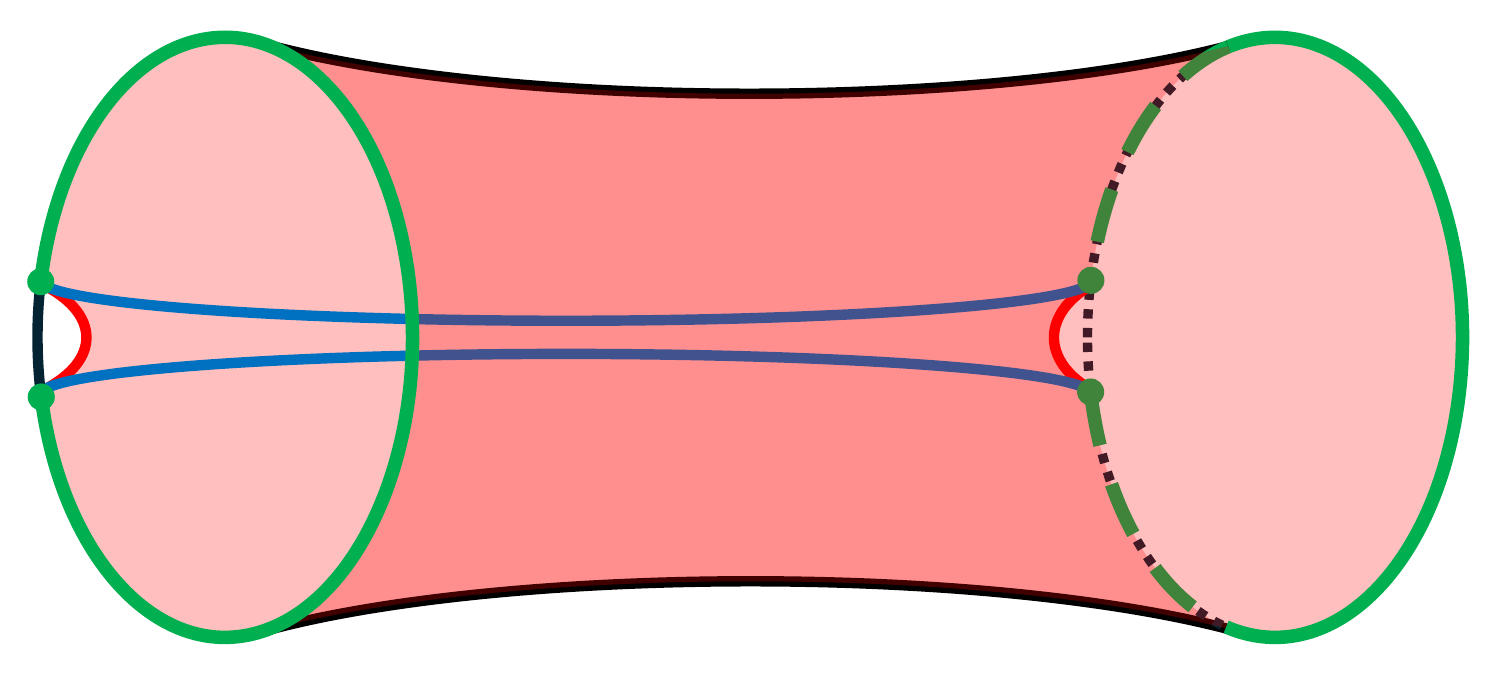} & 
\hspace*{-6pt}
\scalebox{0.8}{
\begin{tikzpicture}[baseline=12mm]
    \draw[thick,-{Bar[width=4pt]}] (0,0) -- (2.2,0);
    \draw[thick,{Bar[width=4pt]}->] (2.8,0) -- (5,0);
    \node at (2.53,0) {$\cdots$};
    \draw[thick,-{Bar[width=4pt]}] (0,0) -- (0,1.2);
    \draw[thick,{Bar[width=4pt]}->] (0,1.8) -- (0,3);
    \node at (0,1.6) {$\vdots$};
    \node at (2.5,-0.25) {{\small time}};
    \node[rotate=90] at (-0.3,1.5) {{\small complexity}};
    \draw[line width=1.2pt,db,black!10!red] (0,0.5) -- (3.2,2.8) -- (4.9,2.8);
    \node at (0,3) {};
\end{tikzpicture}}~
\\ \hline
\end{tabular}
\caption{Complexity growth for different subsystem sizes $p$ ($0 < p < 1$), where the critical parameter $p_\text{crit}$ ($0 < p_\text{crit} < 1/2$) decreases with temperature.  Quantum complexity transitions in both space and time are mapped, via holography, to the behavior of minimal surfaces in negatively curved spaces (in this case, ``wormholes'' with two circular boundaries).  Boundary subsystems are drawn as green intervals.  Candidate minimal surfaces are drawn with red and blue lines: red lines indicate the static extremal surface, while blue lines indicate the growing extremal surface.  Shaded red or blue regions indicate the volume of the dominant (minimal-area) extremal surface, which measures the complexity of the subsystem.  The red and blue surfaces may exchange dominance as the wormhole grows in length, leading to holographic predictions for complexity dynamics.  First row: red shaded region is time-independent.  Second row: shaded region transitions from blue (time-dependent) to red (time-independent).  Third row: shaded region transitions from blue (time-dependent) to red (also time-dependent, but with a faster rate of growth).  Fourth row: red shaded region dominates from the beginning and grows with time because it includes the wormhole (ER bridge).}
\label{figtable}
\end{table}

To extract quantitative predictions for subsystem complexity growth from gravity, we use the CV proposal \cite{Susskind:2014rva, Stanford:2014jda} for convenience.  To model a quantum system of finite spatial extent, we put the dual field theory on a compact space by considering two-sided spherical AdS black holes and ball-shaped subregions of variable size.  We find that holography predicts that the complexity of a sufficiently small subsystem saturates after a polynomially short time.

On the random quantum circuit side, we adopt a definition of subsystem complexity that en\-tails minimizing over all purifications with a polynomial number of ancillas (other definitions are proposed in, e.g., \cite{Agon:2018zso}).  We find that the above holographic expectations are mirrored in the be\-hav\-ior of subsystems of random quantum circuits, which provides additional evidence that gravity makes universal predictions for generic quantum systems.\footnote{Nonetheless, random quantum circuits of two-qudit gates may still be imperfect models of black hole dynamics.  For instance, it has been argued that such models cannot produce computationally pseudorandom quantum states (states indistinguishable from Haar-random by an observer with polynomial resources) in the $\log n$ depth needed to qualify black holes as the fastest scramblers in nature \cite{Chamon:2023lsv}.}

\section{Subsystem Complexity Dynamics in Holography} \label{sec:holography}

We now describe the basic holographic mechanism in more detail.  Consider an eternal AdS black hole, which is dual to a thermofield double (TFD) state on the boundary \cite{Maldacena:2001kr}, with complementary regions $A, B$ comprising the left boundary and corresponding regions $A', B'$ comprising the right boundary.  The inside of the black hole takes the form of a wormhole extending between the left and right horizons.  Consider the two-sided region $R_A = A\cup A'$ and its complement $R_B = B\cup B'$.  Because the TFD state is pure, $R_A$ and $R_B$ have the same RT surface.  We focus on late times, when this RT surface does not thread the wormhole.

Let $\gamma_A$ and $\gamma_B$ be the RT surfaces for regions $A$ and $B$ individually.  If $\gamma_A$ is shorter than $\gamma_B$ (see \autoref{fig:RTfigureredone}, adapted from Figure 5 of \cite{Ryu:2006ef}), then the RT surface of the previous paragraph is $\gamma_A\cup \gamma_{A'}$ and the entanglement wedge of $R_A$ does not include the black hole interior: it has two connected components.  However, the entanglement wedge of $R_B$ is connected and does include the black hole interior. It is the same as the entanglement wedge of the global TFD state minus the pieces belonging to the entanglement wedge of $R_A$.  Thus the complexity of the state associated to $R_A$ has saturated, while that of the state associated to $R_B$ continues growing linearly with time.  This conclusion makes sense because the complexity of a subsystem that is only slightly smaller than the whole system should behave almost identically to that of the whole system.  What may not have been obvious from the beginning is the existence of a sharp transition at half system size.

\begin{figure}[!htb]
\centering
\includegraphics[width=0.8\textwidth]{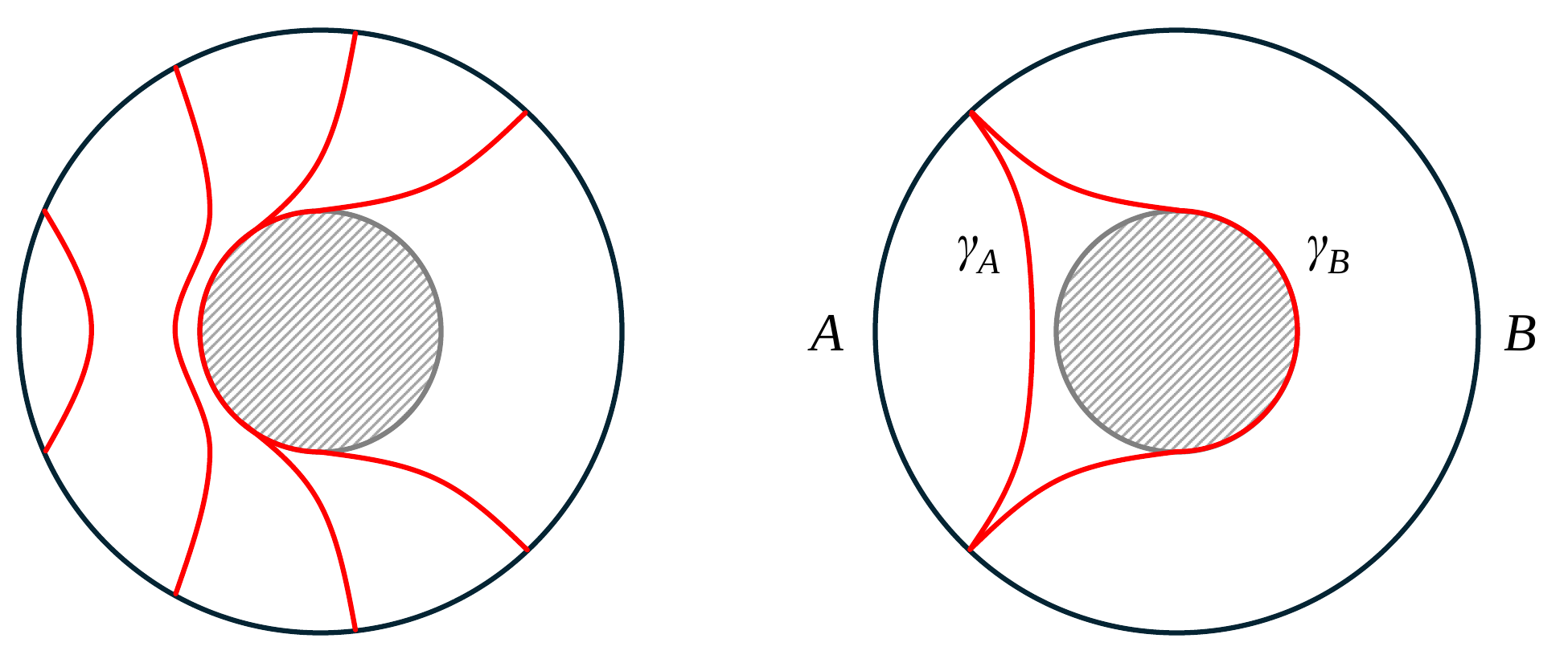}
\caption{The one-sided BTZ black hole \cite{Banados:1992wn} (black circle = boundary, gray circle = horizon).  Left: RT surfaces for various subregion sizes.  Right: RT surfaces for complementary regions $A$ and $B$.  In the two-sided case, only the smaller surface $\gamma_A$ is relevant.}
\label{fig:RTfigureredone}
\end{figure}

In general, the geometry dual to a mixed CFT state has a horizon; there are two different RT surfaces in light of the associated homology constraint, reflecting the fact that the von Neumann entropies differ for the two subsystems in a bipartition.  The key point is that the TFD state is pure even though it looks thermal from the point of view of a single boundary, so the RT surfaces for the regions $R_A$ and $R_B$ (which have support on both boundaries) are the same even though the RT surfaces for $A$ and $B$ (which have support on a single boundary) differ.  In other words, if we consider the single-sided region $B$, then only the surface $\gamma_B$ satisfies the constraint of being homologous to $B$, but if we consider the two-sided region $R_B$, then the topology of the wormhole allows the surface $\gamma_A\cup \gamma_{A'}$ to satisfy the constraint that it be homologous to $R_B$.

In the following discussion, we flesh out this general picture by explicitly calculating the relevant RT surfaces as functions of time and subsystem size. One intriguing finding is that a second transition in fact occurs at a critical subsystem size below half the system size. At very small subsystem sizes, the complexity saturates almost instantaneously, whereas at subsystem sizes between the critical size and half that of the full system, the complexity grows for a time that is of order the subsystem size (and that does not scale with the local qubit density, which is very large in systems with holographic duals). As argued on general grounds, above half system size, the complexity grows linearly for a time that is exponential in the local qubit density and hence in the total number of qubits.

\subsection{Setup}

We consider the eternal (large) black hole in AdS$_{d+1}$, which is dual to the TFD state of two copies of the CFT$_d$ \cite{Maldacena:2001kr}.\footnote{The precise CFT dual to an eternal AdS black hole depends on the internal space.  The black hole geometry only captures macroscopic thermodynamic quantities, not the CFT spectrum (black hole microstates).}  We adapt the calculation of \cite{Hartman:2013qma} to the case of a spherical boundary and examine several aspects in greater detail.\footnote{Similar calculations can be found in \cite{Ryu:2006ef, Alishahiha:2015rta, Carmi:2016wjl, Agon:2018zso, Caceres:2018blh, Caceres:2019pgf, Geng:2020fxl}.  The holographic subsystem CV and CA prescriptions are introduced in \cite{Alishahiha:2015rta} and \cite{Carmi:2016wjl}, respectively: \cite{Alishahiha:2015rta} studies the CV proposal in the static single-sided setting, while \cite{Carmi:2016wjl} generalizes it to time-dependent geometries.  \cite{Agon:2018zso} raises issues with CV for subsystems in the context of static two-sided AdS black holes where the subsystem is a constant-time slice of a \emph{single} boundary.}

First consider Euclidean AdS$_{d+1}$ with $d\geq 2$ \cite{Witten:1998zw}.  Let the asymptotic boundary be $S^{d-1}_\ell\times S^1_\beta$, where $\ell$ is the radius of the $S^{d-1}$ and $\beta$ is the circumference of the $S^1$.  The AdS-Schwarzschild solution of mass $M$ takes the form
\begin{equation}
ds^2 = f(r)\, dt_E^2 + \frac{dr^2}{f(r)} + r^2\, d\Omega_{d-1}^2, \qquad f(r) = 1 + \frac{r^2}{\ell^2} - \frac{\mu}{r^{d-2}}, \qquad t_E\sim t_E + \beta,
\end{equation}
where $t_E$ is Euclidean time, $\mu = 16\pi G_N M/(d - 1)\omega_{d-1}$, and
\begin{equation}
\omega_{d-1} = \operatorname{vol}(S^{d-1}_1) = \frac{2\pi^{d/2}}{\Gamma(\frac{d}{2})}
\end{equation}
is the volume of the unit $(d - 1)$-sphere.  Note that in $d + 1$ bulk dimensions, the mass dimension of $G_N$ is $[G_N] = 1 - d$.  The metric obeys the thermal boundary condition
\begin{equation}
ds^2\to \frac{r^2}{\ell^2}\, dt_E^2 + \frac{\ell^2}{r^2}\, dr^2 + r^2\, d\Omega_{d-1}^2, \qquad t_E\sim t_E + \beta
\label{thermalBC}
\end{equation}
as $r\to\infty$.  The horizon is the outermost solution to $f(r_h) = 0$ (in terms of $\ell, \mu$), and smoothness at $r = r_h$ sets $\beta(r_h)$:
\begin{equation}
\beta(r_h) := \frac{4\pi}{f'(r_h)} = \frac{4\pi\ell^2 r_h}{dr_h^2 + (d - 2)\ell^2}.
\end{equation}
(When $d = 2$, the horizon radius is $r_h = \ell\sqrt{\mu - 1}$, making clear the black hole threshold $\mu > 1$ in AdS$_3$.) Note that the maximum $\beta$ (minimum temperature) and corresponding $r_h$ are
\begin{equation}
\beta_\text{max} = \frac{2\pi\ell}{\sqrt{d(d - 2)}}, \qquad r_\ast = \sqrt{\frac{d - 2}{d}}\ell.
\end{equation}
For any other value of $\beta$, there are two different black hole solutions: small ($r_h < r_\ast$) and large ($r_h > r_\ast$).  We are interested in the large black hole, as it is thermodynamically preferred.  The coordinates $(r, t_E)$ with $r\geq r_h$ and $t_E\in [0, \beta)$ parametrize a disk (cigar).  The origin is smooth and corresponds to the Euclidean horizon.  The thermal circle is contractible.  In thermal AdS, $(r, t_E)$ parametrize a cylinder rather than a disk.  The thermal circle is non-contractible.

To obtain the time-evolved TFD state (with time going up on both sides, in contrast to the Schwarzschild time), we consider the Lorentzian black hole.  In this case, the metric is
\begin{equation}
ds^2 = -f(r)\, dt^2 + \frac{dr^2}{f(r)} + r^2\, d\Omega_{d-1}^2,
\label{lormetric}
\end{equation}
with $\beta$ given in terms of $r_h$ as before (although the time direction is no longer periodic).\footnote{We use $(r, t)$ coordinates.  To make contact with the $(\rho, t)$ coordinates of \cite{Hartman:2013qma}, we write $\rho = \int_{r_h}^r dr'/\sqrt{f(r')}$, so that the horizon lies at $\rho = 0$.  Then we have
\[
ds^2 = -g(\rho)^2\, dt^2 + h(\rho)^2\, d\Omega_{d-1}^2 + d\rho^2, \qquad g(\rho) = \sqrt{f(r(\rho))}, \qquad h(\rho) = r(\rho).
\]
Near $\rho = 0$, we have $r(\rho) = r_h + \frac{1}{4}f'(r_h)\rho^2 + O(\rho^4)$, where $r'(\rho) = \sqrt{f(r)}$ implies that $r^{(n)}(0) = 0$ for $n$ odd.  Near $r = r_h$, we have $f(r_h + \epsilon) = f'(r_h)\epsilon + O(\epsilon^2)$.  Hence
\[
g(\rho) = \frac{1}{2}f'(r_h)\rho + O(\rho^3) = \frac{2\pi}{\beta}\rho + O(\rho^3), \qquad h(\rho) = r_h + O(\rho^2),
\]
as in \cite{Hartman:2013qma} (where they set $\beta = 2\pi$).}

To orient ourselves, before considering subregions, we first consider spatial slices of maximal volume that are anchored on the entire spherical boundary.  Such slices are rotationally invariant.  If we specify such a slice by a function $r(t)$, then the induced metric is
\begin{equation}
ds^2 = \left[-f(r) + \frac{\dot{r}^2}{f(r)}\right]dt^2 + r^2\, d\Omega_{d-1}^2
\end{equation}
and the volume functional is
\begin{equation}
V[r(t)] = \omega_{d-1}\int dt\, r^{d-1}\sqrt{-f(r) + \frac{\dot{r}^2}{f(r)}}.
\label{volume}
\end{equation}
Since the integrand lacks explicit $t$-dependence, the following ``energy'' is conserved:
\begin{equation}
\frac{r^{d-1}f(r)}{\sqrt{-f(r) + \frac{\dot{r}^2}{f(r)}}} = \text{constant}.
\end{equation}
The final ($t = \infty$) slice is located at constant $r = r_f$, and this value can be found by maximizing $r^{2(d-1)}|f(r)|$ (as in \cite{Susskind:2014moa}).  While this codimension-one slice (wormhole) is connected for all time, the codimension-two surface of interest should undergo a transition from connected to disconnected.

We wish to consider not the total volume of a spatial slice, but the volume of the part of that slice enclosed by the RT surface.  In general, this means that we wish to consider the intersection of the entanglement wedge with the maximal slice that is anchored on the boundary subregion.  However, for a time-independent geometry, there is a preferred time slice; because the exterior is static, an RT surface outside the black hole (e.g., at late times) is straightforward to calculate. So our first task is to understand the RT surface at early and late times; all complexity measures will then follow.

It is convenient to think of $S^{d-1}$ as an $S^{d-2}$ that is fibered over an interval and that shrinks to zero size at the endpoints.  Concretely, we can write the metric of the unit $S^{d-1}$ as
\begin{equation}
d\Omega_{d-1}^2 = d\theta^2 + \sin^2\theta\, d\Omega_{d-2}^2
\label{spheremetric}
\end{equation}
where $\theta\in [0, \pi]$.  Consider the spatial subregion $\theta\in [0, \theta_0]$ for some fixed $\theta_0$.  This subregion is a ball of variable size; its boundary extends along the $S^{d-2}$ at $\theta = \theta_0$.  We consider both static and growing RT (area-extremizing) surfaces anchored to this boundary, which are rotationally symmetric with respect to the $S^{d-2}$.  Spherical symmetry guarantees that the transverse slices are spheres, but the size of the transverse sphere may vary with $r$.

\subsection{Area Comparison}

The transition in subregion complexity is driven by a transition in the RT surface.  To determine the correct RT surface at any given instant in time, we need to carefully compare the areas of the two candidate surfaces.  For a given subregion size, we numerically estimate the transition time at which the area of the (growing) connected RT surface overtakes the area of the (static) disconnected RT surface.  This dictates how the volume evolves with time:
\begin{itemize}
\item If the subregion size is less than half the total size, then the volume enclosed stops growing and saturates to the volume enclosed by the disconnected caps.
\item If the subregion size is greater than half the total size, then the volume enclosed keeps growing as the total volume of the spatial slice minus the volume enclosed by the disconnected caps.
\end{itemize}
We focus on comparing the (divergent) areas to identify the minimal-area surface.  See \autoref{app:holodetails} for details.

The areas involve integrating to the boundary of AdS and need to be regulated.  To compare the areas of the dynamic (HM) and static solutions and thereby obtain the transition time as a function of $\theta_0$, it suffices to integrate up to a cutoff on the radius: the divergences cancel in the difference of the two areas.  The difference between the two areas should approach a finite value as the cutoff is taken to infinity.  Assigning a finite and cutoff-independent value to the area of each surface individually would require holographic renormalization, or subtracting local counterterms depending on the induced metric but not on the extrinsic curvature of some cutoff surface \cite{deHaro:2000vlm, Graham:1999pm}.

We parametrize the static surface, which lives at constant $t$, by a function $\theta(r)$ that asymptotes to $\theta_0$ as $r\to\infty$.  We parametrize the growing surface by two functions $t(r)$ and $\theta(r)$, where the latter approaches $\theta_0$ and the former approaches a fixed boundary time $t_b$ as $r\to\infty$.  Each surface is determined by minimizing the corresponding area functional that follows from \eqref{lormetric} and \eqref{spheremetric}.  The static surface and the growing surface coincide for asymptotically large $r$, as required for the difference between their areas to be well-defined as the radial cutoff goes to infinity.  More precisely, we can expand $t(r)$ and $\theta(r)$ for the growing surface as power series in $1/r$ near the boundary ($r\to\infty$).  We expect on general grounds \cite{Graham:1999pm}, and can verify through the equations of motion at leading order in $r$, that both series start at order $1/r^2$:
\begin{equation}
t(r) = t_b + O(r^{-2}), \qquad \theta(r) = \theta_0 + O(r^{-2}).
\end{equation}
Hence $t'(r) = O(r^{-3})$ and $\theta'(r) = O(r^{-3})$, which implies that the integrands of the static area functional \eqref{areafunctionalstaticr} and the growing area functional \eqref{areafunctionalgrowingr} have the same asymptotic behavior as $r\to\infty$.  Note that for $d\geq 2$, we have
\begin{equation}
f(r) = \frac{r^2}{\ell^2} + O(1).
\end{equation}
The static surface is described by the special case where $t(r)$ is constant.

For the sake of numerics, we find it more convenient to impose boundary conditions at finite $r$ than at $r = \infty$.  For the static surface, this amounts to specifying $r_0$, the value of $r$ at the innermost point $\theta = 0$.  For the growing surface, this amounts to specifying $r_s$ and $\theta_s$, the values of $r$ and $\theta$ at the symmetric point $t = 0$.

In principle, we can determine how the difference between the HM surface area and the static surface area behaves as a function of both the boundary time $t_b$ and the boundary angle $\theta_0$.  This requires numerically scanning over many values of $(r_s, \theta_s)$ to obtain the corresponding $t_b$, $\theta_0$, and HM surface area.  One can empirically determine the boundaries of the parameter space of $(r_s, \theta_s)$ (i.e., the space of allowed values) by scanning over $r_s$ starting from slightly below $r_h$, where each value of $r_s$ has a corresponding range of $\theta_s$.  Given $\theta_0$, we can compute $r_0$ for the corresponding static surface as well as the corresponding RT surface area.

In practice, since we lack an analytical method to fix the value of $\theta_0$ as $r_s$ and $\theta_s$ are varied, computing the full time dependence of the area difference for a given $\theta_0$ requires sampling many data points and taking a one-dimensional slice through a two-dimensional dataset.  It is far more feasible simply to compute the transition time (the time at which the area difference vanishes) as a function of $\theta_0$.

The angle $\theta_0\in [0, \pi]$ determines the subsystem size $p$ (a fraction $0 < p < 1$), where $\theta_0 = \pi/2$ corresponds to $p = 1/2$.  We consider primarily $\theta_0$ between 0 and $\pi/2$.  In summary, holography suggests that there exists a critical subsystem size $p_\text{crit} < 1/2$ such that:
\begin{itemize}
\item For $0 < p < p_\text{crit}$, the complexity saturates essentially instantaneously.
\item For $p_\text{crit} < p < 1/2$, the complexity grows linearly and then saturates after a short time that grows sublinearly with $p$ (in particular, as a concave function of $p$).
\item For $1/2 < p < 1$, the complexity grows linearly for an exponentially long time.  However, as indicated in \autoref{figtable}, both the rate of the linear growth and the value of the complexity jump discontinuously for $1/2 < p < 1 - p_\text{crit}$.
\end{itemize}
The existence of a critical fraction $p_\text{crit}$ for which the transition time vanishes is clear from geometry: smaller subregions correspond to smaller static surfaces, so sufficiently small subregions have a static surface with smaller area than the growing HM surface for all times.  The above conclusions are subject to $1/N$ corrections that are exponentially small in the system size.  See \autoref{fig:saturationvsp}.

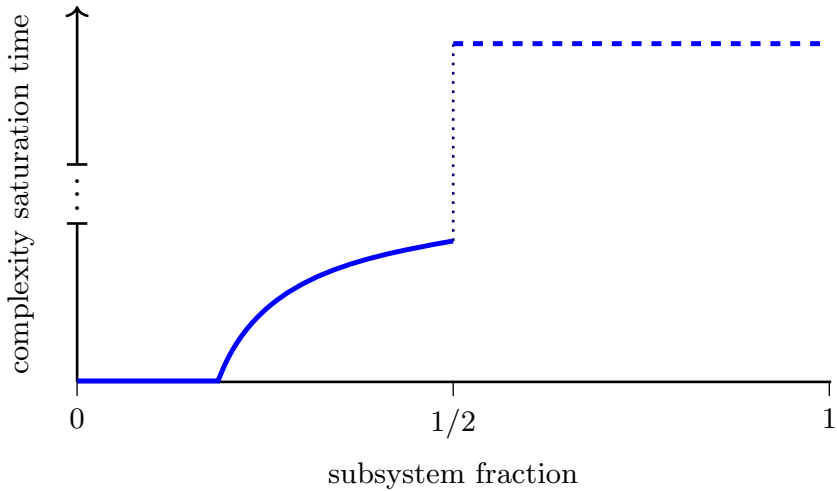
\begin{figure}[!htb]
\centering
\definecolor{db}{rgb}{0,0,0.4}
\definecolor{mb}{rgb}{0.37,0.51,0.71}
\hspace{-1 cm}
\scalebox{1.25}
{
\begin{tikzpicture}
    \draw[thick] (0,0) -- (8.02,0);
    \draw[thick,-|] (0,0) -- (0,1.7);
    \draw[thick,|->] (0,2.3) -- (0,4);
    \node at (0,2.1) {$\vdots$};
    \node at (4,-1) {{\fn subsystem fraction}};
    \node[rotate=90] at (-0.6,2) {{\fn complexity saturation time}};
    \draw[blue,line width=1.5pt] (0,0.01) -- (1.5,0.01) to[in=190,out=70] (4,1.5);
    \draw[thick,dotted,color=db] (4,1.5) -- (4,3.6);
    \draw[line width=1.5pt,dashed,blue] (4,3.6) -- (8,3.6);
    \foreach \x/\l in {0/0,4/{1/2},8/1}
    {\draw (\x,0) -- (\x,-0.15);
    \node[anchor=north] at (\x,-0.15) {\fn \l};}
\end{tikzpicture}
}
\caption{Schematic dependence of saturation time on subsystem fraction $p$ for the entire range of $p$.  The dashed top line should be understood as an approximate expectation, and merely indicates an exponential separation between the saturation times for $p < 1/2$ and $p > 1/2$ (the saturation time should in fact continue to increase with $p$ for $p > 1/2$, which can be understood more precisely on the quantum information side).}
\label{fig:saturationvsp}
\end{figure}

Importantly, the value of $p_\text{crit}$ depends on $\beta$ (see \autoref{fig:saturationvstemp}).  It would be interesting to understand whether the curves at finite $\beta$ reflect complexity dynamics in random quantum circuits with conserved quantities (e.g., those comprised of gates with continuous symmetries).  We leave this conjecture for the quantum information community.

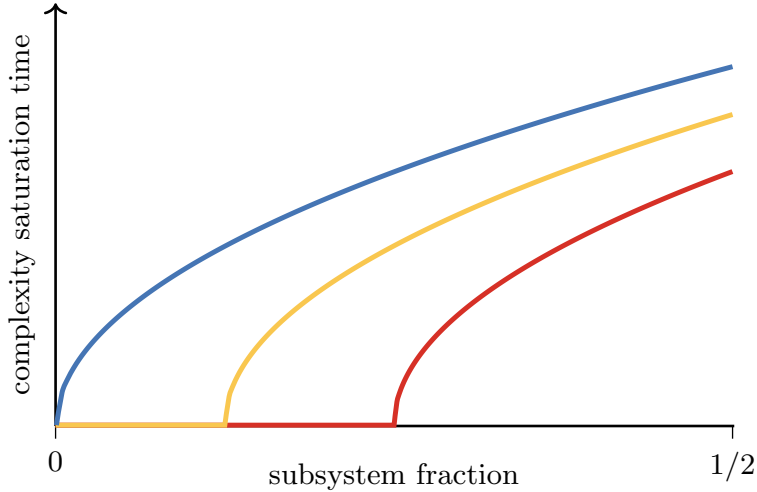
\begin{figure}[!htb]
\centering
\definecolor{lt}{rgb}{0.84, 0.19, 0.15}
\definecolor{mt}{rgb}{0.976, 0.780, 0.310}
\definecolor{ht}{rgb}{0.271, 0.459, 0.706}
\scalebox{1.25}
{
\begin{tikzpicture}[scale=0.9]
    \draw[thick] (0,0) -- (8.02,0);
    \draw[thick,->] (0,0) -- (0,5);
    \node at (4,-0.6) {{\fn subsystem fraction}};
    \node[rotate=90] at (-0.4,2.5) {{\fn complexity saturation time}};
    \foreach \x/\l in {0/0,8/{1/2}}
    {\draw (\x,0) -- (\x,-0.15);
    \node[anchor=north] at (\x,-0.15) {\fn \l};}
    \draw[line width=1.5pt,lt,smooth] (0,0.01) -- plot[domain=4:8,samples=120] (\x,{1.50*sqrt(\x-4)+0.01});
    \draw[line width=1.5pt,mt,smooth] (0,0.01) -- plot[domain=2:8,samples=120] (\x,{1.50*sqrt(\x-2)+0.01});
    \draw[line width=1.5pt,ht,smooth] plot[domain=0.01:8,samples=120] (\x,{1.50*sqrt(\x-0.01)+0.01});
\end{tikzpicture}
}
\caption{Schematic dependence of saturation time on temperature.  The different curves correspond to infinite temperature (blue) and successively lower finite temperatures (orange, red).  The blue curve ($\beta = 0$) is well-understood in both random quantum circuits and holography: it describes a $p^{1/(d - 1)}$ power law and starts at $p = 0$ (for instance, in 1D RQCs with $d = 2$, this curve would be linear).  The other curves have finite $p_\text{crit}$, and their shape is conjectural.  They may reflect the behavior of, e.g., $U(1)$ charge-conserving random circuits.}
\label{fig:saturationvstemp}
\end{figure}

We now describe our results in more (analytical and numerical) detail.

\subsubsection{Analytics}

When the radius of the boundary sphere is large compared to $\beta$, one can use simple dimensional analysis to estimate the following:
\begin{itemize}
\item The dependence of the critical subsystem size on the black hole parameters $\ell$ and $\mu$, which determine the size and the temperature of the dual quantum system (see \eqref{predictioncritical}).
\item The form of the saturation time as a function of subsystem size $p$ in the quickly saturating regime ($p < 1/2$), which is a simple fractional power law (see \eqref{predictionsaturation}).
\end{itemize}
Recall that we are interested in the AdS$_{d+1}$-Schwarzschild black hole with $d\geq 2$ and asymptotic boundary $S^{d-1}_\ell \linebreak[1] \times \linebreak[1] \mathbb{R}$:
\begin{equation}
ds^2 = -f(r)\, dt^2 + \frac{dr^2}{f(r)} + r^2\, d\Omega_{d-1}^2, \qquad f(r) = 1 + \frac{r^2}{\ell^2} - \frac{\mu}{r^{d-2}}.
\end{equation}
In terms of $\ell, \mu$, the horizon radius $r_h$ is the outermost solution to $f(r_h) = 0$, and
\begin{equation}
\beta = \frac{4\pi\ell^2 r_h}{dr_h^2 + (d - 2)\ell^2}.
\end{equation}
The boundary metric is not fixed by the bulk metric: we have the freedom to multiply it by a non-unique \emph{defining function} \cite{Witten:1998zw, Graham:1999pm} with a double zero at infinity.  A sensible choice (and the one that we make) is $\ell^2/r^2$.  The ``number of qubits'' in the dual quantum system equals the volume of the boundary sphere $S^{d-1}_\ell$ in Planck units, which, for fixed $G_N$, is proportional to $\ell^{d-1}$. (Since $G_N$ is an overall prefactor in the action, it does not appear in the equations of motion and is independent of the parameters of the solution.  Since $G_N\propto \ell_p^{d-1}$, the number of qubits is proportional to $\ell^{d-1}/G_N$, or simply $\ell^{d-1}$.  Hence Newton's constant $G_N$ gives the local ``qubit density.'') Writing the metric of the unit sphere $S^{d-1}_1$ as before,
\begin{equation}
d\Omega_{d-1}^2 = d\theta^2 + \sin^2\theta\, d\Omega_{d-2}^2
\end{equation}
where $\theta\in [0, \pi]$, we consider the ball given by $\theta\in [0, \theta_0]$ for some fixed $\theta_0$.  There are thus three length scales in the problem:
\begin{itemize}
\item $\ell$, the curvature radius of AdS (which, via our choice of defining function, equals the radius of the boundary sphere).
\item $L$, the linear size of the subsystem (which equals $\ell$ times a dimensionless function of $\theta_0$).
\item $\beta$, the temperature of the black hole.
\end{itemize}
Only two dimensionless ratios of these scales are meaningful, as our theory is conformal:
\begin{itemize}
\item The dimensionless ratio $\beta/\ell$ (or, alternatively, $\mu^{1/(d - 2)}/\ell$) is solely a property of the black hole and unrelated to the subsystem size.  Indeed, setting $r = \ell r_0$ and $t = \ell t_0$, the metric becomes
\begin{equation}
ds^2 = \ell^2\left[-f(r_0)\, dt_0^2 + \frac{dr_0^2}{f(r_0)} + r_0^2\, d\Omega_{d-1}^2\right], \qquad f(r_0) = 1 + r_0^2 - \frac{\mu/\ell^{d-2}}{r_0^{d-2}},
\end{equation}
which (up to an overall factor of $\ell^2$) is a function of $\mu^{1/(d - 2)}/\ell$.
\item The dimensionless ratio $L/\ell$ characterizes the subsystem size.
\end{itemize}
In the limit of small $\beta/\ell$ (high temperature and large volume), the black hole solution becomes planar.  The Hawking-Page transition (see \autoref{app:hawkingpage} for a review) occurs at the critical value $\beta/\ell = \frac{2\pi}{d - 1}$; for $\beta/\ell$ above the critical value, the black hole is no longer thermodynamically relevant.  For any fixed $\beta/\ell\in (0, \frac{2\pi}{d - 1})$ (between the flat-space limit and the Hawking-Page threshold), we aim to determine the saturation time of complexity as a function of $L/\ell$ and, in particular, the critical value of $L/\ell$ (the subsystem fraction at which the saturation time vanishes).

To begin, we consider the planar black hole of \cite{Hartman:2013qma}, which has one fewer length scale and hence only one dimensionless ratio, $\beta/L$.  Following \cite{Hartman:2013qma}, consider the union of two identical subregions $R$ on each asymptotic boundary of a planar two-sided black hole, and let $R$ (whose boundary is assumed to be sufficiently regular) have linear size $L$:
\begin{itemize}
\item If $L\gg \beta$, then the minimal-area extremal surface crosses the wormhole at early times, and its area grows linearly with boundary time $t_b$ with a coefficient proportional to $\operatorname{vol}(\partial R)$.
\item After a time $t_b\sim L$ (the thermalization time), the minimal-area extremal surface becomes the static surface that does not cross the horizon.  More precisely, the thermalization time is $L/2$ (from considering light cones propagating inward from $\partial R$).
\end{itemize}
In the case of the planar black hole, one can tune the subsystem size $L$ arbitrarily.  In our case, however, the size of $L$ is limited by $\ell$, so one needs a large black hole to achieve a parametric separation between $L$ and $\beta$.  We focus on the regime of small $\beta/\ell$ by taking $\mu$ large, so that
\begin{equation}
r_h\approx (\mu\ell^2)^{1/d}, \qquad \beta\approx \frac{4\pi\ell^2}{d(\mu\ell^2)^{1/d}}.
\end{equation}
Since $r_h = \ell$ is the critical value corresponding to the Hawking-Page transition, we require that $r_h\gg \ell$, which is equivalent to $\mu\gg \ell^{d-2}$.  For small $\beta/\ell$, the critical angle should still occur when $L\sim \beta$ (as in \cite{Hartman:2013qma}), i.e., when
\begin{equation}
\frac{L}{\ell}\sim \frac{\beta}{\ell}.
\end{equation}
As $\beta/\ell$ increases, this relation may be corrected to $L/\ell\sim f(\beta/\ell)$ for some function $f$.

The relevant formulas for the spherical black hole are as follows.  The volume of the spatial subregion of $S^{d-1}_\ell$ corresponding to the boundary angle $\theta_0$ is
\begin{equation}
\omega_{d-2}\ell^{d-1}\int_0^{\theta_0} d\theta\, \sin^{d-2}\theta,
\end{equation}
so the fraction of the entire system that the subsystem occupies is
\begin{equation}
p(\theta_0) = \frac{\omega_{d-2}}{\omega_{d-1}}\int_0^{\theta_0} d\theta\, \sin^{d-2}\theta = \frac{\Gamma(\frac{d}{2})}{\pi^{1/2}\Gamma(\frac{d - 1}{2})}\int_0^{\theta_0} d\theta\, \sin^{d-2}\theta.
\end{equation}
Therefore, in our case, $R$ is a $B^{d-1}$ with volume
\begin{equation}
\operatorname{vol}(R) = p(\theta_0)\omega_{d-1}\ell^{d-1},
\end{equation}
and $\partial R$ is an $S^{d-2}$ of radius $\sin\theta_0$ and volume
\begin{equation}
\operatorname{vol}(\partial R) = \omega_{d-2}\sin^{d-2}\theta_0.
\end{equation}
We restrict our attention to the range $p < 1/2$, for which the complexity saturates at the thermalization time.  The linear size of $R$ goes like
\begin{equation}
\boxed{L\sim p(\theta_0)^{\frac{1}{d - 1}}\ell,}
\label{predictionsaturation}
\end{equation}
which, as a purely geometrical quantity, is independent of $G_N$ and hence of the large local qubit density. From \eqref{predictionsaturation}, we see that for $L\gg \beta$, holography predicts that the complexity saturates at a time that grows as $p^{1/(d - 1)}$.  In particular, regardless of the subsystem fraction $p$ (provided that $p < 1/2$), the complexity will have saturated by a time of order
\begin{equation}
\ell\sim (\text{number of qubits})^{\frac{1}{d - 1}},
\end{equation}
which is polynomial (indeed, sublinear) in the number of qubits and depends on the dimensionality of the arrangement of qubits.  By contrast, for $p > 1/2$, the complexity saturates after a time \emph{exponential} in the number of qubits.

On the other hand, for $L\lesssim \beta$, the static surface always dominates and holography predicts that thermalization will occur essentially instantaneously.  We see that $L\sim \beta$ when
\begin{equation}
\boxed{p(\theta_0)\sim \left(\frac{\beta}{\ell}\right)^{d-1}.}
\label{predictioncritical}
\end{equation}
This relation defines the critical angle (below which complexity saturates ``instantaneously''), which \emph{decreases} with temperature.  Although one might expect the complexity to saturate more quickly as the temperature increases, this relation suggests that the threshold for saturation may also increase with temperature.

\subsubsection{Numerics}

Our numerical analysis allows us to go beyond the planar limit, i.e., the predictions of \cite{Hartman:2013qma}.  We consider $d = 2, 3, 4$, with a focus on $d = 3$ (the smallest ``non-special'' dimension).

For numerics, we set $\ell = 1$ without loss of generality.  Note that if $1/\ell^2 = \mu - 1$, then $r_h = 1$ for any $d$.  Thus setting $\ell = 1$ and $\mu = 2$ yields a black hole at the Hawking-Page threshold for any $d$; this is the smallest large black hole.

\begin{figure}[!htb]
\centering
\includegraphics[width=0.7\textwidth]{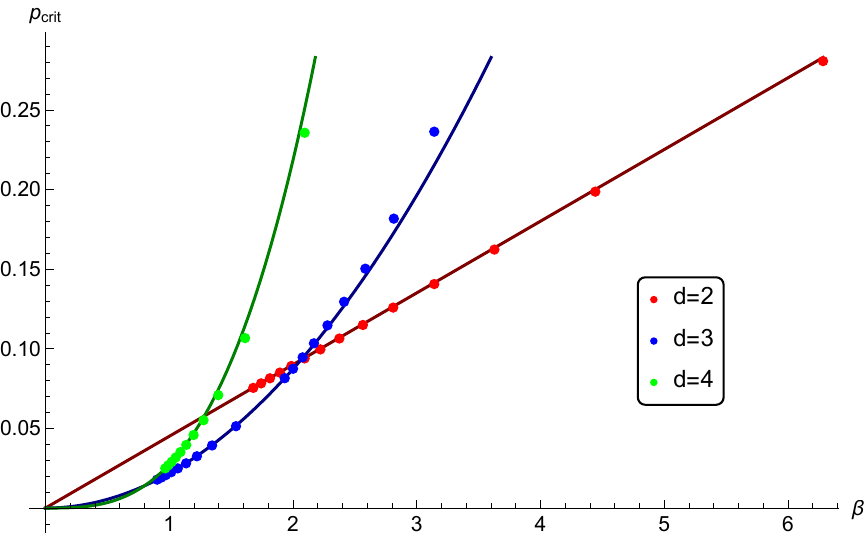}
\caption{$p_\text{crit}$ versus $\beta$ for $d = 2, 3, 4$ with fits (in darker colors) to linear, quadratic, and cubic power laws, respectively.  Deviations from power-law behavior are visible at large $\beta$ (small $\mu$): the data points are higher for odd $d$ and lower for even $d$.}
\label{fig:criticalpvsbeta}
\end{figure}

We examine $p_\text{crit}$ versus $\beta$ in \autoref{fig:criticalpvsbeta}.  To compute $p_\text{crit}$ for a given $\beta$, we set $r_s$ to be slightly below $r_h$ (which results in a slight overestimate for $p_\text{crit}$) and scan over $\theta_s$.  We also impose that the absolute value of the area difference be less than some tolerance:
\begin{itemize}
\item For $d = 2$, we consider $\mu$ from 2 to 15 in steps of 1 ($r_h$ from 1 to approximately 3.7) and use a tolerance of 0.0002.
\item For $d = 3$, we consider $\mu$ from 2 to 10 in steps of 1 and from 10 to 100 in steps of 10 ($r_h$ from 1 to approximately 4.6) and use a tolerance of 0.005.
\item For $d = 4$, we consider $\mu$ from 2 to 100 in steps of roughly 10 ($r_h$ from 1 to approximately 3.1) and use a tolerance of 0.005.
\end{itemize}
The numerical routine for the static RT surface breaks down below a minimum $\theta_0$, which makes going to large $\mu$ difficult because the critical angle at large $\mu$ might fall below that minimum value.  We check that $p_\text{crit}\sim \beta^{d-1}$ in all cases.  Deviations from this power law for small $\mu$ are noticeable when $d = 3$ and $d = 4$, but hardly so when $d = 2$.

We examine the $p$-dependence of the saturation time as follows:
\begin{itemize}
\item In \autoref{fig:saturationvspford234}, we plot the transition time versus subsystem size for black holes at the Hawking-Page threshold ($\ell = 1$ and $\mu = 2$) for $d = 2, 3, 4$.  We find that $p_\text{crit}$ becomes smaller and that the curve rises more slowly as $d$ increases.
\item For $d = 3$, we also consider larger values of $\mu$ (\autoref{fig:saturationvspford3}).  For $\mu = 2$, we can explore a large range of $r_s$, but for $\mu = 10$ and $\mu = 100$, we cannot take $r_s$ far below $r_h$ numerically.
\end{itemize}
For $p$ near $p_\text{crit}$, the curves are not simple power laws.  While the saturation time should go like $p^{1/(d - 1)}$ for $p\gg p_\text{crit}$, this regime is not accessible numerically.  On one hand, since $p < 1/2$, seeing this power-law behavior would require making $p_\text{crit}$ small, which requires taking $\mu$ large.  On the other hand, taking $p$ large is numerically challenging for large $\mu$.

\begin{figure}[!htb]
\centering
\includegraphics[width=0.7\textwidth]{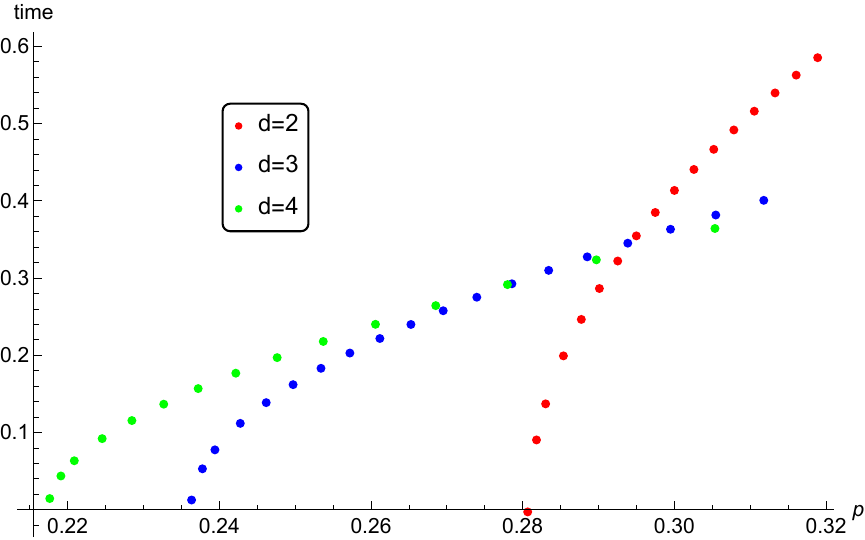}
\caption{Saturation time versus subsystem fraction $p$ for black holes at the Hawking-Page threshold and $p < 1/2$.}
\label{fig:saturationvspford234}
\end{figure}

\begin{figure}[!htb]
\centering
\includegraphics[width=0.7\textwidth]{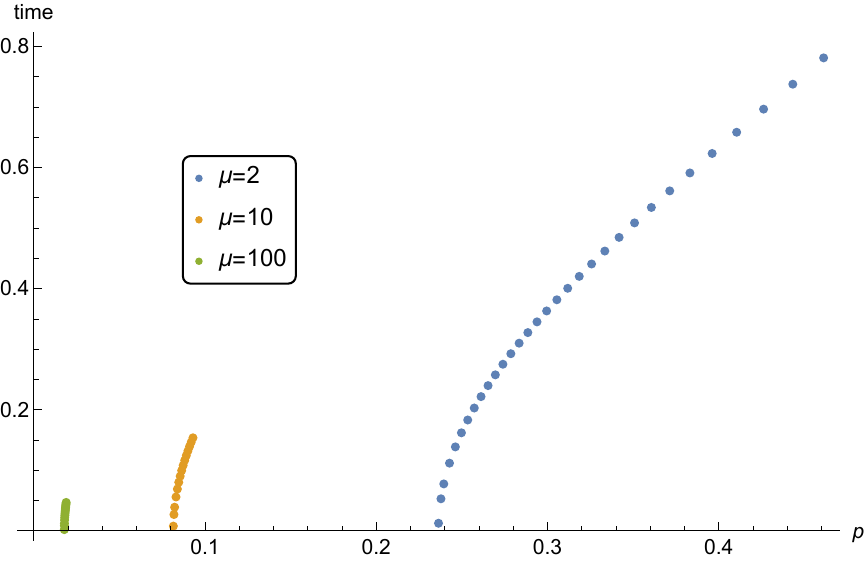}
\caption{Saturation time versus subsystem size for $d = 3$, $\ell = 1$, and $\mu = 2, 10, 100$.  Larger $\mu$ corresponds to higher temperature (smaller $\beta$).  Below $\mu = 2$ (the Hawking-Page threshold), the black hole is no longer a good model for thermalization in the boundary theory (nonetheless, it remains interesting to ask about the behavior of the boundary theory at such low temperatures).}
\label{fig:saturationvspford3}
\end{figure}

\subsection{Volume Transition}

The area (complexity) transition time is a universal and robust prediction from holography, in contrast to the precise values of the complexity itself.  Unlike entanglement entropy, which is quantified by area, the complexity (for which we use enclosed volume as a proxy) changes discontinuously at the transition.  In the regime $p_\text{crit} < p < 1/2$, the complexity grows linearly and then drops down to a small constant value at the transition.

Let us refer to the growing surface as the HM surface and to the static surface as the RT surface.  The HM surface approaches the RT surface asymptotically, while the RT surface caps off outside the horizon.  Assuming that the RT surface is always bounded by the HM surface, the RT surface will always enclose a smaller volume.  Then the constant value of the ``sawtooth'' (the rightmost panel in the second row of \autoref{figtable}) will be lower than the peak of the linear rise.

At the transition, we would expect any parametric separation between the growing and static volumes to be at most polynomial in $\ell$, not exponential.  The precise multiplicative constant depends on the definition of complexity (this is true in both quantum information theory and in holography).  Based on ``complexity = volume'' and the aforementioned inclusion property of volumes, we surmise that most reasonable complexity measures will exhibit a drop.  We summarize our expectations in \autoref{figtable}.

\subsection{Interpretation in Quantum Mechanics}

So far, we have discussed geometrizations of complexity, but we have yet to define complexity for subsystems of ordinary quantum systems.  A natural way to define the complexity (the ``purification complexity'') of a mixed state is by minimizing the unitary circuit complexity over all possible purifications.  By this definition, the complexity of a subsystem is upper-bounded by the complexity of the whole system.  If the size of a subsystem is at least half the size of the whole system, then the most ``efficient'' purification (the one that requires the fewest ancillas) is generically the state of the whole system.  On the other hand, if the size of a subsystem is less than half the size of the whole system, then the most efficient purification is generically a state on twice as many qubits.

As an alternative to optimizing over purifications, focusing on a particular purification with a fixed number of ancillas yields a notion of subsystem complexity that is easier to compute in practice.  Consider a subsystem $A$ and an auxiliary system $B$, with total Hilbert space $\mathcal{H}_A\otimes \mathcal{H}_B$.  A general state of this bipartite system is $\rho_{AB}$.  Given a density matrix $\rho_A = \tr_B\rho_{AB}\in \operatorname{End}(\mathcal{H}_A)$, its \emph{canonical purification} is given by the pure state $|\sqrt{\rho_A}\rangle\in \mathcal{H}_A\otimes \mathcal{H}_A^\ast$, which satisfies
\begin{equation}
\tr_{A^\ast}|\sqrt{\rho_A}\rangle\langle\sqrt{\rho_A}| = \rho_A.
\end{equation}
If $\rho_A = \sum_i \lambda_i|i\rangle\langle i|$, then we have the Schmidt decomposition $|\sqrt{\rho_A}\rangle = \sum_i \sqrt{\lambda_i}|i\rangle|i^\ast\rangle$.  The complexity of the canonical purification in the doubled Hilbert space provides an upper bound on the purification complexity.  The canonical purification, of which the TFD state is an example, has a simple geometric realization \cite{Dutta:2019gen}.  To construct the spacetime dual to $|\sqrt{\rho_A}\rangle$, we glue the entanglement wedge of a boundary state $\rho_A$ to its CPT conjugate across a quantum extremal surface.

Our results turn out to be largely agnostic of which definition of complexity we use in quantum mechanics.

We have also discussed what temperature means geometrically (in the bulk of a two-sided black hole), but we have yet to discuss its precise meaning on the boundary.  What is the interpretation of $\beta$ for a generic many-body system?  We are considering a non-equilibrium state rather than a thermal state.  The two-sided black hole Hilbert space is $\mathcal{H}_\text{CFT}^{\otimes 2}$.  Our choice of Hamiltonian acting on the doubled system is one for which the TFD state evolves in time (the phases do not cancel), so it is not an equilibrium state of the entire system:
\begin{equation}
|\text{TFD}(t_b)\rangle = \frac{1}{\sqrt{Z(\beta)}}\sum_n e^{-\beta E_n/2 - 2iE_n t_b}|E_n\rangle\otimes |E_n\rangle.
\end{equation}
However, the thermal state that we get by tracing out one side is still time-independent.  This time-dependent TFD state is a model for thermalization.\footnote{An alternative model is a quantum quench \cite{Chen:2018mcc, Auzzi:2019mah}.}  From the point of view of a time-dependent pure state in $\mathcal{H}_\text{CFT}$ undergoing thermalization, we interpret $\beta$ as the value for which the state at long times ``looks'' like a thermal state at inverse temperature $\beta$ (in the sense of reproducing the average energy of the corresponding Boltzmann distribution).\footnote{Of course, $\mathcal{H}_\text{CFT}$ is infinite-dimensional.  Consider the spectrum of the CFT quantized on $S^{d-1}_\ell$.  To get a finite-dimensional Hilbert space, we might cut off the CFT spectrum with an $\ell$-independent UV cutoff; as $\ell$ increases, the number of states below the cutoff increases because the level spacings decrease.}

We now comment on the implications of our results for generic quantum systems.  For convenience, we summarize the translation between discrete and continuous systems in \autoref{discretevscontinuous}.

\begin{table}[ht!]
    \centering
    \begin{tabular}{|l|l||l|l|}
        \hline
          \multicolumn{2}{|c||}{Discrete} & \multicolumn{2}{c|}{Continuous} \\
         \hline
         \hline
        $n$ & total number of qudits & $S$ & total entropy \\
        \hline
         $\q$ & local qudit dimension & $\sim S/\ell^{d-1}$ &local entropy density \\
         \hline
        $\na$, $\nb=n-\na$ & number of qudits in subsystem & $\sim S(L/\ell)^{d-1}$ & subsystem entropy \\
         \hline
    \end{tabular}
    \caption{Characteristic quantities in discrete and continuous systems. The $\sim$ symbol suppresses factors of $\pi$ associated with volumes of unit spheres and fractions thereof (they are displayed in the main text). Here, $\ell$ is the total system size and $L$ is the physical size of the subsystem. The subsystem dimensions are $\da =  q^{\na}$ and $\db = q^{\nb}$.}
    \label{discretevscontinuous}
\end{table}

As a simple check, we compare the complexity saturation time in holography and in RQCs for $p < 1/2$ at $\beta = 0$.  To define a local temperature, one needs a conserved quantity.  RQCs are local but do not conserve energy, unlike time-independent Hamiltonian evolution, so they equilibrate to infinite temperature ($\beta = 0$).  For $p < 1/2$, the saturation time in an RQC corresponds to the scrambling time, which we expect to be of order $n_A$ to the power of the inverse spatial dimension of the circuit geometry (by locality).  On the other hand, in holography, the scaling law \eqref{predictionsaturation} is exact at $\beta = 0$ and there is no critical subsystem size.  Since $p = n_A/(n_A + n_B)$ and $n_A + n_B\propto \ell^{d-1}$, the saturation time goes like
\begin{equation}
p^{1/(d - 1)}\ell\propto n_A^{1/(d - 1)},
\end{equation}
precisely as in RQCs \cite{harrow2023approximate}. Note that holography suggests that, for $n\gg 1$, the saturation time depends only on the fraction $p$.  This is a consequence of conformal symmetry.  Holography also allows one to quantify the behavior of the complexity before the saturation time.

Mixed states that are close to the maximally mixed state in trace distance have (purifications of) low complexity.  Under random evolution, we expect that the complexity of a subsystem that is less than half the size of the whole system saturates quickly to a small value because its reduced density matrix quickly approaches the maximally mixed state.  Holography suggests that for less-than-half subsystems, the complexity moreover exhibits a downward jump once the subsystem equilibrates.

\section{Subsystem Complexity Dynamics in Random Quantum Circuits} \label{sec:rqcs}

Thus far, we have performed classical gravity calculations to identify three sharp geometric ``phase transitions'' that map, via holography, to sharp transitions in the complexity of reduced states of a non-gravitational quantum field theory in one fewer dimension. We now imagine discretizing such a quantum field theory and representing it by $n$ qubits in a 1D layout that evolve in time by random quantum circuits with nearest-neighbor two-qubit gates. We use tools from quantum information to rigorously define a natural measure of state complexity for subsystems and prove the existence of sharp transitions (as in \autoref{figtable}) in the subsystem complexity of 1D random quantum circuits. Specifically, we prove one sharp transition as a function of the subsystem size and discuss evidence for another sharp transition in time.

The growth of circuit complexity, both for the evolved state and for the circuit itself, is well-understood in RQCs. Rigorous lower bounds on the circuit complexity \cite{brandao2021models, Haferkamp:2021uxo, chen2024incompressibility} have now established that complexity grows (essentially) linearly in time for a time exponential in the system size. Here, we study complexity growth and saturation for subsystems of states evolved by random quantum circuits. Our definition of subsystem complexity is the natural one: the number of gates required to construct a good approximation to the target state starting from a fixed fiducial state on a number of ancillas at most polynomial in the system size. We define this complexity measure more precisely in \autoref{def:complexity}. Before we give an overview of our results, we try to argue why we might expect a complexity transition in the first place. In the following, $O(f(x))$ means asymptotically upper-bounded by $f(x)$ and $\Omega(f(x))$ means asymptotically lower-bounded by $f(x)$.

Here is a simple argument for a complexity transition. First, we observe that the complexity of typical pure states in the Hilbert space \cite{knill1995approximation} and typical output states of $O(\exp(n))$-time random quantum circuits \cite{brandao2021models} is at least $\Omega(\exp(n))$. Next, consider the state $\rho_A$ on a subsystem of fraction $p$ ($0 < p < 1$) of an $n$-qubit Haar-random state. It is well-known that the typical trace distance between $\rho_A$ and the maximally mixed state on $pn$ qubits is given by \cite{lubkin78, lloyd1988, page1993}
\begin{equation}
\dist\left(\rho_A, \frac{\iden_A}{\da}\right)\leq \frac{1}{2}\sqrt{\da\tr\rho_A^2 - 1} \approx \frac{1}{2}\sqrt{\da \left(\frac{\da + \db}{\da\db + 1}\right) - 1} \approx 2^{(p - 1/2)n - 1},
\end{equation}
which is exponentially small in $n$ when $p < 1/2$. The first inequality follows from the generalized mean inequality, the second approximate equality follows from the average value of $\tr\rho_A^2$ for an $n$-qubit Haar-random pure state, and the third approximate equality follows from taking $n\gg 1$. The subsystem state being very close to maximally mixed implies that we may approximate it by preparing the maximally mixed state. Since we expect the maximally mixed state to be of low ($O(pn)$) complexity by any measure of complexity, the subsystem complexity is small for $p<1/2$. The behavior of the complexity for subsystems of Haar-random states at the two extreme sizes $p \linebreak[1] = \linebreak[1] 1$ and $p < 1/2$ suggests that it undergoes a transition with respect to subsystem size. Furthermore, the same bound on the trace distance---and the same conclusion---holds for $p<1/2$ subsystems of states evolved by unitaries drawn from an approximate unitary $2$-design, which is substantially weaker than requiring that the pure state be completely Haar-random.

However, this argument alone cannot be used to infer a \emph{sharp} transition in complexity. Should we expect there to be a sharp transition in any complexity or entanglement measure defined on a finite-dimensional quantum system? The authors of \cite{aubrun2014entanglement} prove that the separability of subsystems of typical states undergoes a sharp transition (from non-separable to separable) as the ratio of the subsystem size to the total system size varies from 1 to a fraction less than 1. In the case of qudits, this fraction is approximately $2/5$. This is an example of a sharp transition in finite-dimensional quantum systems. Perhaps, then, it is not unreasonable to expect a sharp transition in complexity (which can also be thought of as a multiparty entanglement measure) as a function of subsystem size fraction $p$.

We now state our results and depict them schematically in \autoref{fig:qi_expectation}. We prove that $n$-qubit states evolved by random quantum circuits (see \autoref{def:bwrqc} and \autoref{def:pwrqc}) in one dimension with periodic boundary conditions obey the following.
\begin{itemize}
    \item \autoref{fig:qi_expectation}~(a): for subsystems of size less than half, the complexity $\CC$ (\autoref{def:complexity}) grows in time and then decreases (\autoref{thm:complexity_growth_na_lt_nb}), never exceeding $O(\poly(n))$ complexity and equilibrating to $O(n_A)$ complexity (\autoref{thm:cotler2022fluctuations}, \autoref{cor:cotler2022fluctuations}, \autoref{thm:patchwork_purity}, \autoref{cor:patchwork_purity}).
    
    \item \autoref{fig:qi_expectation}~(b): for subsystems of size greater than half, the complexity grows linearly in time (\autoref{thm:complexity_growth_na_gt_nb}) and saturates at a value of $O(\exp(n))$.
\end{itemize}

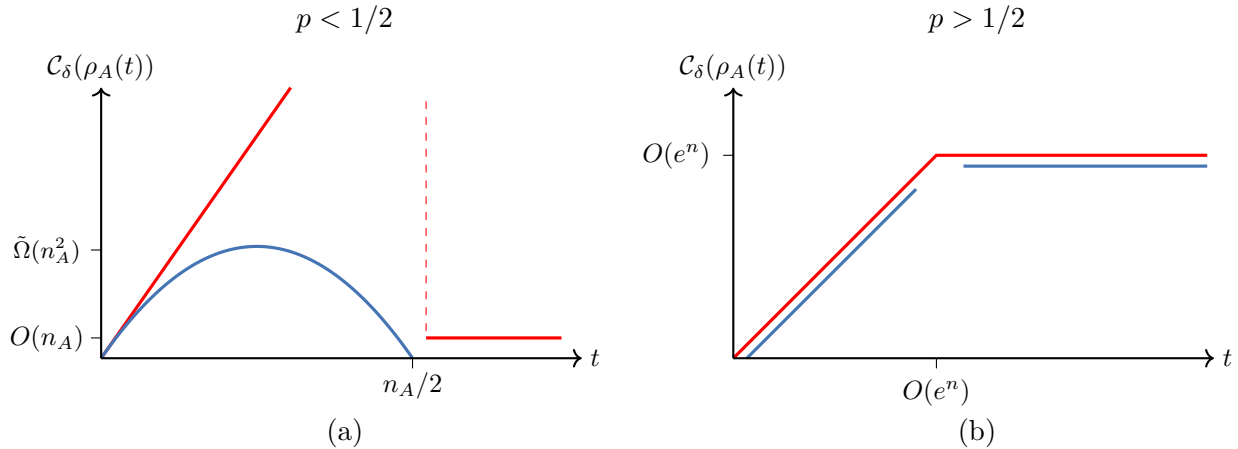
\begin{figure}[t!]
\definecolor{db}{rgb}{0.271, 0.459, 0.706}
\begin{tikzpicture}[scale=0.9,baseline=0mm]
\draw[red,very thick] (0,0.2) -- (4.75,4.4);
\draw[red,very thick] (4.75,0.3) -- (6.8,0.3);
\draw[red,dashed] (4.75,4.4) -- (4.75,0.3);
\draw[very thick,db] plot[domain=0:4.6, samples=201, variable=\t] (\t, {\t*(9.2 - 2*\t)/9});
\draw[thick,->] (0,0) -- (7.1,0);
\draw[thick,->] (0,0) -- (0,4);
\node at (7.3,0) {{\small $t$}};
\node at (0,4.3) {{\small $\CC(\rho_A(t))$}};
\draw (-0.15,0.3) -- (0,0.3);
\node at (-0.8,0.3) {\small $O(n_A)$};
\draw (-0.15,1.6) -- (0,1.6);
\node at (-0.8,1.6) {\small $\tilde \Omega(n_A^2)$};
\draw (4.6,-0.15) -- (4.6,0);
\node at (4.6,-0.4) {\small $n_A/2$};
\node at (3.6,5) {$p<1/2$};
\node at (3.6,-1.1) {(a)};
\end{tikzpicture}~~
\begin{tikzpicture}[scale=0.9,baseline=0mm]
\draw[red,very thick] (0,0) -- (3,3) -- (7,3);
\draw[db,very thick] (0.2,0) -- (2.7,2.5);
\draw[db,very thick] (3.4,2.84) -- (7,2.84);
\draw[thick,->] (0,0) -- (7.1,0);
\draw[thick,->] (0,0) -- (0,4);
\node at (7.3,0) {{\small $t$}};
\node at (0,4.3) {{\small $\CC(\rho_A(t))$}};
\node at (-0.84,3) {\small $O(e^n)$};
\draw (-0.15,3) -- (0,3);
\node at (3,-0.5) {\small $O(e^n)$};
\draw (3,-0.15) -- (3,0);
\node at (3.6,5) {$p>1/2$};
\node at (3.6,-1.1) {(b)};
\end{tikzpicture}

\caption{A schematic plot of our results for the subsystem complexity growth of random quantum circuits. On the left: complexity growth for $p<1/2$, less-than-half subsystem size ($\na < \nb$). On the right: complexity growth for $p>1/2$, greater-than-half subsystem size ($\na > \nb$). Red lines indicate circuit complexity {\it upper bounds}, whereas blue lines indicate the circuit complexity {\it lower bounds}. The colors in this plot should not be identified with the colors in \autoref{figtable}.}
\label{fig:qi_expectation}
\end{figure}

Taken together, our results evince a sharp transition in the saturation complexity of subsystems of random quantum circuits with respect to subsystem size fraction. This transition corresponds to that between the second and third rows in \autoref{figtable}. Below, we list our theorems informally and defer their rigorous formulations until \autoref{app:rqccomp}. First, we consider the subsystem complexity growth in the two regimes $\na > \nb$ (Informal~\autoref{thm:complexity_growth_na_gt_nb}) and $\na \linebreak[1] < \linebreak[1] \nb$ (Informal~\autoref{thm:complexity_growth_na_lt_nb}).  We write $U\sim \nu$ to indicate that $U$ is drawn from some distribution $\nu$ on the unitary group, and we let $\ket{\psi}$ denote an initially unentangled state.

\begin{itheorem}{Informal \autoref{thm:complexity_growth_na_gt_nb}}{Complexity growth for $\na > \nb$}
    Assume $A$ is a contiguous subsystem of a one-dimensional $n$-qubit system with periodic boundary conditions. The time-evolved states $\ssa(t) = \tr_{B}(U \ketbra{\psi} U^\dagger)$ of depth-$t$ random quantum circuits $U \sim \nu$ satisfy
    \begin{align}
        \CC(\ssa(t)) \gtrsim \frac{t}{\log(n) \log^7(t)}
    \end{align}
    with high probability over the choice of circuits $U \sim \nu$.
\end{itheorem}

\begin{itheorem}{Informal \autoref{thm:complexity_growth_na_lt_nb}}{Complexity growth for $\na < \nb$}
    Assume $A$ is a contiguous subsystem of a one-dimensional $n$-qubit system with periodic boundary conditions. The time-evolved states $\ssa(t) = \tr_{B}(U \ketbra{\psi} U^\dagger)$ of depth-$t$ random quantum circuits $U \sim \nu$ satisfy
    \begin{align}
        \CC(\ssa(t)) \gtrsim \frac{\na}{\log^2(n)  \log^7(t)} \left(t - \frac{2t^2}{\na}\right)
				\label{eq:lowerboundnalessnb}
    \end{align}
    with high probability over the choice of circuits $U \sim \nu$.
\end{itheorem}

\vspace{0.5\baselineskip}

At time $t = \na / 2$, the right-hand side of \autoref{eq:lowerboundnalessnb} goes to zero and the complexity lower bound becomes trivial, as depicted in \autoref{fig:qi_expectation}~(a). Furthermore, at time $t \approx \na / 2$, the late-time complexity upper bound tells us that the complexity remains small for all later times.

We note that a simple complexity upper bound for $\na > \nb$ is the complexity of the corresponding pure state, which is in turn upper-bounded by the smaller of: (1) the total number of gates in the circuit and (2) the upper bound on the number of gates required to construct an arbitrary unitary (Refs.~\cite{barenco1995elementary, knill1995approximation} combined with the Solovay-Kitaev theorem \cite{dawson2005solovaykitaev}). This complexity upper bound is shown in \autoref{fig:qi_expectation}~(b)  as a linearly growing red line, saturating at exponential complexity. We now provide a more detailed discussion of the complexity upper bounds for $\na < \nb$. At early times, we again have a growing upper bound simply from counting the number of gates in the light cone of the subsystem $A$. At later times, Informal \autoref{thm:cotler2022fluctuations} and \autoref{thm:patchwork_purity} address the closeness of subsystem states to the maximally mixed state.

\begin{itheorem}{Informal \autoref{thm:cotler2022fluctuations} and \autoref{thm:patchwork_purity}}{Distance to maximally mixed state for $\na < \nb$}
    Assume $A$ is a contiguous subsystem of a one-dimensional $n$-qudit system with periodic boundary conditions. For some $\err>0$, the time-evolved states $\ssa(t) = \tr_{B}(U \ketbra{\psi} U^\dagger)$ of depth-$t$ random quantum circuits $U \sim \nu$ are close to maximally mixed, i.e., they satisfy
    \begin{align}
        \Pr_{U \sim \nu}\Big( \dist\big(\ssa(t),\iden_A/\da\big) \geq \err\Big) \lesssim \frac{\da e^{-\lambda t}}{\err^2}\,,
    \end{align}
    where $\lambda$ is a constant.
\end{itheorem}

\begin{itheorem}{Informal \autoref{cor:cotler2022fluctuations} and \autoref{cor:patchwork_purity}}{Complexity saturation for $\na < \nb$}
    Assume $A$ is a contiguous subsystem of a one-dimensional $n$-qudit system with periodic boundary conditions. For some $\err>0$, the time-evolved states $\ssa(t) = \tr_{B}(U \ketbra{\psi} U^\dagger)$ of depth-$t = O(\na)$ random quantum circuits $U \sim \nu$ satisfy
    \begin{align}
        \CC(\ssa(t)) = \CC(\iden_A / \da)
    \end{align}
    with high probability.
\end{itheorem}

\vspace{0.5\baselineskip}

\autoref{cor:cotler2022fluctuations} and \autoref{cor:patchwork_purity} follow directly from \autoref{thm:cotler2022fluctuations} and \autoref{thm:patchwork_purity}, as the theorems conclude that states $\rho_A(t)$ are close to maximally mixed after some time. As we can prepare the maximally mixed state on $n_A$ qudits with $O(n_A)$ gates by making $n_A$ EPR pairs on $2n_A$ qudits and throwing away one qudit from each pair, it follows that random circuit states at late times have complexity at most $O(\na)$.

\autoref{thm:complexity_growth_na_gt_nb}, \autoref{cor:cotler2022fluctuations}, and \autoref{cor:patchwork_purity} demonstrate the same sharp complexity transition that we find in holography, specifically between the second and third rows of \autoref{figtable} as a function of the subsystem size. If the fraction is greater than $1/2$, then the complexity grows linearly for an exponential time, and if the fraction is below $1/2$, then the complexity saturates to that of the maximally mixed state.

In addition, the results illustrated in \autoref{fig:qi_expectation}~(a) indicate the possibility of a second sharp complexity transition with respect to the random quantum circuit time (instead of subsystem size fraction), similar to the sawtooth complexity drop in the rightmost column of \autoref{figtable}. In particular, for subsystems of size less than half, our results indicate the following: the subsystem complexity grows as a function of time to order quadratic in the subsystem size and then, at a time of order the subsystem size, the complexity drops to order linear in the subsystem size. This rise and fall is likely smooth for random quantum circuits on local qubits. But considering that a large number of degrees of freedom per site could conceivably capture more features of the holographic picture, we conjecture that this transition becomes sharp for random quantum circuits in the limit of large local qudit dimension $q\ra\infty$. If this sharp transition were proven, it would imply that subsystems of certain random quantum circuits thermalize instantly. Although our results cannot establish this sharp transition, the upper and lower bounds are consistent with this behavior. As an analogous sharp complexity transition in time is present in holography for less-than-half-sized subsystems, we conjecture that:
\begin{conjecture}
    \label{conj:second_transition}
    In the limit of large local dimension $q\ra\infty$, the complexity of an $\na$-qudit subsystem of an $n$-qudit random quantum circuit for $\na < n/2$ grows linearly to $\CC(\rho_A(t))=O(\na^2)$ at a time $t=O(\na)$, after which the complexity drops sharply to $\CC(\rho_A(t))=O(\na)$.
\end{conjecture}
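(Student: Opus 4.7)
The plan is to combine the existing upper bound from a light-cone gate count with the lower bounds provided by the Informal Theorems stated above, and then argue that in the $q\to\infty$ limit the transition window around $t=\na/2$ collapses to a single point. First I would establish the $\Theta(\na^2)$ value at the peak, and then I would use large-$q$ refinements of the equilibration bounds to pin down the subsequent drop.

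For the rising phase, the upper bound follows from the light-cone structure of the one-dimensional brickwork circuit: at time $t$, the reduced state $\ssa(t)$ depends only on the gates and initial state inside the backward light cone of $A$, which contains $\na+2t$ sites and $O(\na t)$ gates. Taking the pure state on the entire light cone as a purification of $\ssa(t)$, with the complementary lightcone sites serving as ancillas, yields $\CC(\ssa(t)) \lesssim \na t$, which is $O(\na^2)$ at $t = O(\na)$. For the matching lower bound I would invoke the Informal Theorem on complexity growth for $\na<\nb$: the factor $t - 2t^2/\na$ is maximized at $t=\na/4$ with value $\na/8$, giving $\CC(\ssa(t)) \gtrsim \na^2/(\log^2 n\, \log^7 \na)$, which matches the upper bound up to polylogarithmic corrections.

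For the saturated phase and the sharp drop, I would invoke the Informal Theorem on the distance to the maximally mixed state, which yields $\dist(\ssa(t),\iden_A/\da) \lesssim \da e^{-\lambda t}/\err^2$ with high probability. Once $\ssa(t)$ is within trace distance $\err$ of $\iden_A/\da$, the complexity satisfies $\CC(\ssa(t)) = \CC(\iden_A/\da) = O(\na)$, since the maximally mixed state is prepared from $\na$ EPR pairs on $2\na$ qudits followed by partial trace. The sharpness in the $q\to\infty$ limit requires that the equilibration rate satisfy $\lambda = \Theta(\log q)$ per gate layer, so that the saturation time is $t_{\mathrm{sat}} = \na/2 + o(\na)$ while the lower bound of the previous paragraph persists up to $t = \na/2 - o(\na)$. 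Intuitively, at $t \approx \na/2$ the lightcone of $A$ spans $\approx 2\na$ sites, and in the large-$q$ limit the induced near-Haar state on this lightcone has a marginal on $A$ that is within trace distance $q^{-\Theta(\na)}$ of $\iden_A/\da$ by standard concentration estimates.

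The hardest step will be making both halves of the transition quantitative with the correct $q$-dependence. First, one needs an equilibration rate $\lambda = \Theta(\log q)$ in the distance-to-maximally-mixed theorem, which is natural from the large-$q$ structure of the Weingarten calculus---where the moment operator's spectrum cleanly separates in powers of $1/q$---but must be carried through the full 1D random circuit argument rather than just a single lightcone slice. Perhaps harder, one needs a complexity lower bound that remains $\Omega(\na^2)$ up to $t = \na/2$ without the polylogarithmic losses of the current lower bound, so that the rising and saturated phases meet at a common point as $q\to\infty$. Tightening the lower bound is likely the critical step: I would try to adapt the design-based techniques underlying the existing lower bounds to the partial-trace setting while exploiting large-$q$ simplifications, plausibly by reducing to higher-moment quantities computed through Weingarten calculus where the dominant contributions at large $q$ correspond to non-crossing permutations and the subleading corrections are suppressed by explicit powers of $1/q$.
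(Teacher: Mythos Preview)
The statement is a \emph{conjecture}, and the paper explicitly does not prove it. Immediately after stating it, the paper remarks that if the $-2t^2/\na$ term in the lower bound of \autoref{thm:complexity_growth_na_lt_nb} were absent at large $q$ for all times up to order $\na$, the conjecture would follow; but that term arises from the rank bound used in the proof, may be an artifact of the technique, and is left for future work. So there is no ``paper's own proof'' to compare against---only a one-paragraph diagnosis of the obstacle.

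Your proposal is not a proof but a reasonable outline of what a proof would require, and you correctly identify the same obstacle the paper does. The upper bound via the light cone, the $O(\na)$ late-time value via closeness to $\iden_A/\da$, and the large-$q$ equilibration rate are all in hand: in fact the paper already records that $\lambda = 2\log\big(\frac{q^2+1}{2q}\big)$ for general $q$, so the saturation time is $t\approx \na/2$ as $q\to\infty$ without further work on your part. The genuine gap is exactly the one you flag at the end: the existing lower bound $\propto t - 2t^2/\na$ peaks at $t=\na/4$ and vanishes at $t=\na/2$, so it cannot meet the upper bound at the transition point. Your claim that ``the lower bound of the previous paragraph persists up to $t=\na/2 - o(\na)$'' is aspirational, not established---the current bound is already trivial well before that. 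Your suggested remedy (large-$q$ Weingarten/non-crossing simplifications to sharpen the design-based lower bound) is plausible heuristics but not an argument; the paper likewise observes that the rank-based $-2t^2/\na$ term may not be intrinsic, but offers no method to remove it. Until someone produces a lower bound that stays $\Omega(\na^2)$ up to $t=\na/2$ (even up to polylogs), the conjecture remains open, and your proposal does not close it.
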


We note that if the complexity lower bound in Informal \autoref{thm:complexity_growth_na_lt_nb} were such that the term $-2t^2/\na$ were absent at large local dimension for all times until order $O(\na)$, then \autoref{conj:second_transition} would follow. However, the $-2t^2/\na$ appears because we use the increasing rank of subsystems of random quantum circuits in our complexity lower bounds. Such terms may not be inherent to the problem and may only be an artifact of our proof technique. We leave this for future work.

\section{Discussion} \label{sec:discussion}

Using holographic arguments, we have identified three sharp transitions in the complexity dynamics of subsystems of quantum many-body systems, two in space and one in time:
\begin{enumerate}
\item[(1)] a transition in the long-time behavior of complexity as the subsystem size $p$ crosses the value $p = 1/2$, which corresponds to passing from the first two rows to the last two rows of \autoref{figtable};
\item[(2)] a transition in the early-time behavior of complexity as the subsystem size $p$ crosses the values $p = p_\text{crit}$ or $p = 1 - p_\text{crit}$, which corresponds to passing between the first two rows or between the last two rows of \autoref{figtable} (this is fundamentally a single transition between minimal-area extremal surfaces, which simply has different interpretations below and above $p = 1/2$);
\item[(3)] and a discontinuous transition in the complexity as a function of time, corresponding to the sawtooth profiles in the right column of \autoref{figtable} for $p_\text{crit} < p < 1/2$ and $1/2 < p < 1 - p_\text{crit}$.
\end{enumerate}
Using tools from quantum information, we have rigorously established the transition at $p = 1/2$ and provided evidence for the discontinuous transition in time.  We leave the transition at $p = p_\text{crit}$ as a conjecture for future work, which could foreseeably be resolved by studying the $\beta$-dependence of complexity in random quantum circuits with continuous symmetries.

Throughout our discussion, we have devoted more attention to the transition in $p$ at $p = p_\text{crit}$ and the transition in time for $p_\text{crit} < p < 1/2$ than to their counterparts at $p = 1 - p_\text{crit}$ and for $1/2 < p < 1 - p_\text{crit}$.  From the holographic point of view, this was simply a matter of convenience, as the transitions in space and time are equally manifest whether $p < 1/2$ or $p > 1/2$.  However, the transition in time for $1/2 < p < 1 - p_\text{crit}$ is far less clear in random quantum circuits than the equivalent one in time for $p_\text{crit} < p < 1/2$.  In RQCs, the time transition for $p < 1/2$ can be explained by the subsystem approaching the maximally mixed state and equilibrating to infinite temperature.  Whether there exists a physical justification for a sudden \emph{increase} in the value and growth rate of subsystem complexity for $p > 1/2$ remains to be seen. (Suggestively, our complexity lower bounds in \autoref{subsec:complexity_growth_na_gt_nb} depend on the dynamically growing rank of the subsystem of interest, which potentially captures a discontinuous change in slope when $p > 1/2$.) Finally, as already mentioned, the transitions at $p = p_\text{crit}$ and $p = 1 - p_\text{crit}$ are equally invisible in RQCs (for which $p_\text{crit} = 0$) but are presumably present in systems that equilibrate to finite temperature locally.

Looking ahead, a natural goal is to more fully understand the implications of our results for generic quantum systems.  This may entail achieving a more precise match between holographic and information-theoretic estimates for the saturation time, as well as understanding where this correspondence breaks down.  Various questions could be asked about the limitations of random quantum circuit models of black hole dynamics:
\begin{itemize}
\item Can they reproduce the expected scrambling behavior \cite{Chamon:2023lsv}?
\item Would an analogy between black holes and continuous random circuits be more appropriate?  Ref.~\cite{Magan:2024aet} argues in the affirmative by relaxing the requirement that the black hole evolve via a time-independent Hamiltonian.  But note that the opposite direction---keeping the time-independent Hamiltonian of the black hole while replacing the random quantum circuit model with a time-independent Hamiltonian model---would be closer in spirit to the original con\-jec\-ture of Brown and Susskind \cite{Brown:2017jil}.
\item Could classical gravity model quantum systems more reasonably than we have a right to expect?  Ref.~\cite{Milekhin:2024mce} uses circuit depth as a proxy for circuit complexity, which is justified for translationally invariant systems, and argues that the area of the connected HM surface of any subsystem (after suitable rescaling) provides a lower bound on the circuit complexity of the \emph{whole} system for all times, even beyond the time when the HM surface ceases to be the minimum-area extremal surface.  This bound is insensitive to the choice of subsystem and, therefore, to the complexity-theoretic significance of the HM transition.
\end{itemize}
It may be useful to go beyond the circuit model altogether, perhaps via random \cite{Hayden:2016cfa} or dynamical \cite{May:2016dgv, Osborne:2017woa} tensor networks.  For example, Ref.~\cite{Abt:2017pmf} gives a definition of subregion complexity for ten\-sor network states that reduces to counting tensors and examines how the holographic subregion complexity undergoes phase transitions as the boundary subregion varies in size and topology---i.e., phase transitions in space but not in time.

Other complexity measures offer alternatives to circuit complexity.  For example, what is the relation between subsystem complexity and complexity measures that assign costs only to multi-party gates \cite{Balasubramanian:2018hsu, Baiguera:2023bhm}?  Subsystem complexity is also meaningful in the context of evolution by a fixed Hamiltonian, e.g., Krylov complexity \cite{Parker:2018yvk},\footnote{We thank Moshe Rozali for this suggestion.} which can be studied in the broader setting of unitary circuit dynamics \cite{Suchsland:2023cmb, Sahu:2024urf}.\footnote{In fact, the narrowing of operator size under random unitary dynamics \cite{Schuster:2021uvg} suggests a sharp transition in Krylov complexity.}  However, the behavior of Krylov complexity in quantum field theory depends on the precise UV definition \cite{Kar:2021nbm, Avdoshkin:2022xuw}, and whether it is a good proxy for circuit or holographic complexity, particularly in relation to subsystems \cite{Das:2024zuu}, remains an open question.

This line of investigation opens the door to even broader questions.  How can we define the circuit complexity of mixed states without purification?  One option is to work with quantum channels directly \cite{Araiza:2023dlq}.  How can we generalize rigorous results about the growth of circuit complexity in random quantum circuits to Hamiltonian evolution?  The original Brown-Susskind conjecture concerns \emph{generic} Hamiltonians, i.e., chaotic many-body systems undergoing time-independent Hamiltonian evolution.  In the spirit of this conjecture, we would like a proof of circuit complexity growth in a time-independent Hamiltonian model \cite{Kotowski:2023ayh} rather than the time-dependent Markovian model exhibited by random quantum circuits.  Since time-independent Hamiltonians do not generate unitary designs, such a proof would require completely different techniques.  Finally, how can we use these Hamiltonian models to probe the dynamics of chaotic open quantum systems?

\subsubsection*{Acknowledgements}

We thank Scott Aaronson, Jeongwan Haah, Vedika Khemani, John Preskill, Daniel Ranard, Moshe Rozali, Thomas Schuster, and especially Juan Maldacena for helpful discussions. We also thank Jeongwan Haah and Douglas Stanford for coordinating arXiv submission of the related work \cite{Haah:2025hyf}. The work of YF was supported in part by NSF grant PHY-2210562. NHJ and SM acknowledge support in part from DOE grant DE-SC0025615. The work of AK was supported in part by DOE grant DE-SC0022021 and by a grant from the Simons Foundation (Grant 651678, AK). NHJ would like to thank the Simons Institute for the Theory of Computing for their hospitality during the completion of part of this work. Sandia National Laboratories is a multi-program laboratory managed and operated by National Technology and Engineering Solutions of Sandia, LLC, a wholly owned subsidiary of Honeywell International Inc., for the U.S. Department of Energy's National Nuclear Security Administration under contract DE-NA-0003525. All views and conclusions contained herein are those of the authors and should not be construed as representing the official views or policies of the U.S. Department of Energy or the U.S. Government.

\appendix

\section{Phases of Quantum Field Theory and Quantum Gravity} \label{app:hawkingpage}

For convenience, we recall some standard facts about how AdS/CFT relates phases of gravity to phases of quantum field theory \cite{Witten:1998zw}.

In quantum gravity, we define the thermal partition function as the path integral on a Euclidean manifold with the boundary condition that Euclidean time is a circle of proper size $\beta$:
\begin{equation}
Z(\beta) = \int Dg\, e^{-S_E[g]}, \qquad t_E\sim t_E + \beta, \qquad g_{tt}\to 1,
\end{equation}
as $r\to\infty$ (for pure gravity, we turn off all bulk fields aside from the metric).  Thermodynamic properties such as free energy, entropy, and energy are given by the standard formulas:
\begin{equation}
Z = e^{-\beta F}, \qquad S = (1 - \beta\partial_\beta)\log Z, \qquad E = -\partial_\beta\log Z.
\end{equation}
We approximate the path integral by expanding around classical solutions:
\begin{equation}
Z\approx \sum_{\bar{g}} e^{-S_E[\bar{g}]}.
\end{equation}
This approximation does not include semiclassical (one-loop) corrections.  $S_E$ contains a factor of $1/G_N$, and to leading order in $G_N$, the thermal free energy is the Euclidean on-shell action.

Thermal states in CFT are dual to black holes in AdS quantum gravity.  We impose boundary conditions appropriate for thermal field theory:
\begin{equation}
Z_\text{CFT}[M] = Z_\text{grav}[\partial = M], \qquad M = \Sigma_{d-1}\times S^1_\beta.
\end{equation}
We compute $Z_\text{grav}(\beta)$ classically as a sum over saddles and interpret the result in CFT; it gives a (de)con\-fine\-ment phase transition as a function of temperature.  To do so, we find all classical solutions that obey the thermal boundary condition \eqref{thermalBC} and evaluate their on-shell actions using the Einstein-Hilbert action of AdS$_{d+1}$.  There are three such solutions in pure gravity: small black holes, large black holes, and thermal AdS.  Large black holes have positive specific heat, while small black holes are unstable (like those in flat spacetime).  Thermal AdS has
\begin{equation}
f(r) = 1 + \frac{r^2}{\ell^2}
\end{equation}
(there is no horizon, and $\beta$ is a free parameter).  After summing over these allowed solutions, the thermodynamics in the field theory is determined by the solution with lowest free energy (the one that dominates the canonical ensemble at fixed $\beta$).  The large black hole always has lower free energy than the small black hole:
\begin{equation}
S_E^\text{small}(\beta)\geq S_E^\text{large}(\beta).
\end{equation}
Hence we compare the large black hole and thermal AdS:
\begin{align}
Z_\text{grav}(\beta) &\approx e^{-S_E^\text{large}} + e^{-S_E^\text{thermal}}, \\
\log Z_\text{grav}(\beta) &\approx \max(-S_E^\text{large}, -S_E^\text{thermal}).
\end{align}
The exponents are very large, of order $1/G_N$.  We find that the black hole dominates at large $r_h$ and that thermal AdS dominates at small $r_h$.  There is a first-order phase transition when the two solutions exchange dominance:
\begin{equation}
S_E^\text{large} = S_E^\text{thermal} \implies \beta_\text{crit} = \frac{2\pi\ell}{d - 1}, \text{ } r_h^\text{crit} = \ell.
\end{equation}
This is the Hawking-Page transition.

For the AdS black hole, since $\beta$ attains a maximum value as a function of $r_h$, the black hole only contributes to the thermodynamics if the temperature is sufficiently high ($\beta$ is sufficiently small).  The large-temperature limit can be obtained by taking $r_h\to 0$ or $r_h\to \infty$, but since the small black hole is thermodynamically disfavored, we must take $r_h\to \infty$ (corresponding to large $\mu$).  The low-temperature phase is thermal AdS; the high-temperature phase is the stable black hole.

The entropy $(1 - \beta\partial_\beta)\log Z$ of thermal AdS is zero at this order: that is, it is $O(G_N^0)$ due to quantum corrections.  Hence the full thermal entropy $S(\beta)$ is $O(G_N^0)$ at low temperatures and jumps to a large number $O(1/G_N)$ at $\beta_\text{crit}$.  The Hawking-Page transition implies that theories with a semiclassical gravity description must have a small number of states at low energy but a very large number at high energy, with a sharp transition in between.  This is reminiscent of confinement in nonabelian gauge theory.  In a confining $SU(N)$ gauge theory, $F = O(1)$ in the confined phase (the physical DOFs are color singlets) and $F = O(N^2)$ in the deconfined phase (the physical DOFs are gluons).  Hence the black hole corresponds to the deconfined phase, while thermal AdS corresponds to the confined phase.\footnote{Another geometric manifestation of this fact is the following.  In $SU(N)$ gauge theory, a Wilson loop wrapping the thermal circle (Polyakov loop) serves as an order parameter for the (de)confinement transition.  One can think of this temporal Wilson loop as a free quark: $\langle W\rangle = e^{-\beta F}$ is nonzero or zero depending on whether the quark has finite or infinite free energy.  One can also think of it as an order parameter for the spontaneous breaking of the $\mathbb{Z}_N$ center symmetry, which is a one-form symmetry \cite{Gaiotto:2014kfa} since the charged objects are lines.  Indeed, the holographic rule for computing $\langle W\rangle$ at large $N$ \cite{Maldacena:1998im} is to compute $e^{-S_\text{NG}}$ for a Euclidean string worldsheet ending on the contour of $W$.  Such a worldsheet exists when the thermal circle is contractible in the bulk, e.g., for the Euclidean black hole but not for thermal AdS.  This topological criterion captures the preservation or breaking of a one-form symmetry: a higher-form symmetry is spontaneously broken if its charged objects have nonzero VEV when they are large.  For example, a one-form symmetry in 4D is unbroken if its charged loop operators exhibit an area law and broken if they exhibit a perimeter or Coulomb law (because the latter two laws can be set to zero by a local counterterm, leading to a nonzero VEV).  In summary, $\langle W\rangle\neq 0$ indicates the breaking of center symmetry and hence deconfinement.}

One caveat is that in CFT, only the ratio $\ell/\beta$ matters, so high temperature means large $\ell$.  So the theory on $\mathbb{R}^3$ only has the black hole phase, and there is no Hawking-Page transition.  QCD has a (de)confinement transition in infinite volume, i.e., on $\mathbb{R}^3$; $\mathcal{N} = 4$ SYM does not, since it is a CFT.  Instead, $\mathcal{N} = 4$ SYM has a (de)confinement transition on $S^3$, but only in the $N\to\infty$ limit (kinematic confinement occurs at low energies due to the Gauss law).  Normally, phase transitions do not occur in finite volume since with finitely many DOFs, the free energy is analytic in $\beta$.

\section{Holography Details} \label{app:holodetails}

\subsection{Static RT Surface}

The late-time RT surface is static and described by a time-independent $r(\theta)$.  Fix a $t_b$; then the induced metric is
\begin{equation}
ds^2 = \left[r^2 + \frac{r'^2}{f(r)}\right]d\theta^2 + r^2\sin^2\theta\, d\Omega_{d-2}^2,
\end{equation}
and the area functional is
\begin{equation}
A[r(\theta)] = \omega_{d-2}\int d\theta\, (r\sin\theta)^{d-2}\sqrt{r^2 + \frac{r'^2}{f(r)}}.
\label{areafunctionalstatic}
\end{equation}
The integrand now has explicit $\theta$-dependence, so we are forced to solve the variational equation, which simplifies to
\begin{equation}
2(d - 2)(r'^2 + r^2 f(r))r'\cot\theta - 2(d - 1)r^3 f(r)^2 + 2rf(r)(rr'' - dr'^2) - r^2 r'^2 f'(r) = 0.
\end{equation}
This is a second-order nonlinear ODE.  We impose the boundary conditions $r(0) = r_0 > r_h$ and $r'(0) = 0$ at the innermost point of the extremal surface.  We make a few observations:
\begin{itemize}
\item One can show numerically that a solution for $r(\theta)$ exists for any $r_0 > r_h$.
\item The solution diverges at some value $\theta = \theta_0$ between 0 and $\pi$.  Up to a caveat that we address below, we can determine how $\theta_0$ changes as a function of $r_0$ (e.g., by extracting the location of the singularity as the upper limit on the domain of the interpolating function for the numerical solution).  This is easier than imposing the boundary condition $r(\theta_0) = \infty$ directly.
\item $\theta_0$ is a monotonically decreasing function of $r_0$ that ranges from $\pi$ to 0 as $r_0$ ranges from $r_h$ to $\infty$.  Therefore, a solution exists for any subregion size (determined by $\theta_0$).
\end{itemize}
Note that we expect to find two extremal surface solutions for any given $\theta_0$, with one of them having smaller area (except when $\theta_0 = \pi/2$, in which case both surfaces have the same area).  One of these solutions is the one described above; the other one is the reflected version of that solution for $\pi - \theta_0$.  For instance, for small $\theta_0$, the above solution lies near the boundary, while the other wraps nearly the entire horizon.

The solution described above, which has minimal area when $\theta_0 < \pi/2$, is always homologous to the boundary subregion defined by $[0, \theta_0]$.  So for a single-sided black hole, where the entanglement wedge of a subregion never contains the black hole, this solution is the one compatible with the homology constraint.  In the two-sided case, however, only the surface of smaller area is relevant.  This corresponds to the above solution when $\theta_0 < \pi/2$ and to the reflected solution when $\theta_0 > \pi/2$.  What changes at $\theta_0 = \pi/2$ is whether the entanglement wedge contains the black hole.  Therefore, we only need to consider the solutions with $\theta_0 < \pi/2$.

If one computes $\theta(r)$ instead of $r(\theta)$, then one can read off the asymptotic value $\theta = \theta_0$ for large $r$ rather than looking for a singularity.  To compute $\theta(r)$, we write the area functional as
\begin{equation}
A[\theta(r)] = \omega_{d-2}\int dr\, (r\sin\theta)^{d-2}\sqrt{\frac{1}{f(r)} + r^2\theta'^2}.
\label{areafunctionalstaticr}
\end{equation}
The corresponding variational equation is
\begin{equation}
2(d - 2)(1 + r^2 f(r)\theta'^2)\cot\theta - r(rf'(r) + 2df(r))\theta' - 2(d - 1)r^3 f(r)^2\theta'^3 - 2r^2 f(r)\theta'' = 0.
\end{equation}
We impose the initial condition $\theta'(r_0) = \infty$.

We now return to the caveat mentioned earlier.  When computing $\theta(r)$ numerically, we observe non-monotonicity for $d > 2$ (although not in the BTZ case).  Importantly, the non-monotonicity of $\theta(r)$ is an obstacle to solving for $r(\theta)$: it implies that $r(\theta)$ is a multivalued function, only one branch of which diverges at $\theta = \theta_0$.  To a numerical solver, it means that $r(\theta)$ breaks down not at $\theta_0$, but at the turnaround point.  Therefore, in practice, we use $\theta(r)$ to solve for $\theta_0$.

\subsection{Growing RT Surface}

Spherical symmetry guarantees that the growing RT (or HM) surface can be described by $r(t)$ and $\theta(t)$, or $t(r)$ and $\theta(r)$.  For any solution, $r$ ranges from some minimal radius $r_s < r_h$ to $\infty$.

In terms of $r(t)$ and $\theta(t)$, the induced metric is
\begin{equation}
ds^2 = \left[-f(r) + \frac{\dot{r}^2}{f(r)} + r^2\dot{\theta}^2\right]dt^2 + r^2\sin^2\theta\, d\Omega_{d-2}^2.
\end{equation}
The area functional is
\begin{equation}
A[r(t), \theta(t)] = \omega_{d-2}\int dt\, (r\sin\theta)^{d-2}\sqrt{-f(r) + \frac{\dot{r}^2}{f(r)} + r^2\dot{\theta}^2}.
\label{areafunctionalgrowingt}
\end{equation}
We could impose the boundary conditions
\begin{equation}
r(t_b) = \infty, \qquad \dot{r}(0) = 0, \qquad \theta(t_b) = \theta_0, \qquad \dot{\theta}(0) = 0.
\end{equation}
However, it is simpler to impose all conditions at $t = 0$:
\begin{equation}
r(0) = r_s, \qquad \dot{r}(0) = 0, \qquad \theta(0) = \theta_s, \qquad \dot{\theta}(0) = 0.
\end{equation}
This is as it should be: we specify two free constants as inputs to the ODEs that determine the asymptotic values $t_b$ and $\theta_0$ at $r = \infty$ (for this pair of second-order ODEs, the other two initial conditions are determined by symmetry).

In terms of $t(r)$ and $\theta(r)$, the induced metric is
\begin{equation}
ds^2 = \left[\frac{1}{f(r)} - f(r)t'^2 + r^2\theta'^2\right]dr^2 + r^2\sin^2\theta\, d\Omega_{d-2}^2,
\end{equation}
where primes denote $r$-derivatives.  The area functional is
\begin{equation}
A[t(r), \theta(r)] = \omega_{d-2}\int dr\, (r\sin\theta)^{d-2}\sqrt{\frac{1}{f(r)} - f(r)t'^2 + r^2\theta'^2}.
\label{areafunctionalgrowingr}
\end{equation}
We could impose the boundary conditions
\begin{equation}
t(r = \infty) = t_b, \qquad t'(r = \infty) = 0, \qquad \theta(r = \infty) = \theta_0, \qquad \theta'(r = \infty) = 0.
\end{equation}
However, it is simpler to impose all conditions at $r = r_s$ rather than $r = \infty$:
\begin{equation}
t(r_s) = 0, \qquad t'(r_s) = \infty, \qquad \theta(r_s) = \theta_s, \qquad \theta'(r_s) = \text{TBD}.
\end{equation}
To determine what $\theta'(r_s)$ should be, note that the boundary conditions for $r(t)$ and $\theta(t)$ imply that, near $t = 0$,
\begin{equation}
r(t) = r_s + \frac{1}{2}\ddot{r}(0)t^2 + \cdots, \qquad \theta(t) = \theta_s + \frac{1}{2}\ddot{\theta}(0)t^2 + \cdots.
\end{equation}
Hence we can write
\begin{equation}
t^2 = \frac{2(r - r_s)}{\ddot{r}(0)} + \cdots \quad\implies\quad \theta(r) = \theta_s + \frac{\ddot{\theta}(0)}{\ddot{r}(0)}(r - r_s) + \cdots.
\end{equation}
The boundary conditions ensure that there is no non-analytic $\sqrt{r - r_s}$ term in $\theta(r)$, and in turn that $\theta'(r)$ is nonsingular near $r_s$ (as long as $\ddot{r}(0)\neq 0$).  We are essentially using L'H\^{o}pital's rule to evaluate $d\theta/dr = \dot{\theta}/\dot{r}$ by going to the second derivatives when the first derivatives vanish:
\begin{equation}
\frac{d\theta}{dr}(r_s) = \frac{\frac{d\theta}{dt}(0)}{\frac{dr}{dt}(0)} = \frac{\ddot{\theta}(0)}{\ddot{r}(0)}.
\end{equation}
We can determine $\ddot{r}(0)$ and $\ddot{\theta}(0)$ analytically in terms of $r_s$ and $\theta_s$ using the EOMs.  To leading order in $t$, the $r(t)$ and $\theta(t)$ equations of motion give, respectively:
\begin{align}
2r_s\ddot{r}(0) + r_s f(r_s)f'(r_s) + 2(d - 2)f(r_s)^2 + O(t^2) &= 0, \\
r_s^2\ddot{\theta}(0) + (d - 2)f(r_s)\cot\theta_s + O(t^2) &= 0.
\end{align}
Hence we get
\begin{equation}
\theta'(r_s) = \frac{\ddot{\theta}(0)}{\ddot{r}(0)} = \frac{2(d - 2)\cot\theta_s}{2(d - 2)r_s f(r_s) + r_s^2 f'(r_s)}.
\end{equation}
This vanishes for the constant solution $\theta = \theta_0 = \pi/2$.

If the action is a function of $\dot{\theta}^2$ and independent of $\theta$, then $\theta = \text{constant}$ is a solution of the equation of motion for $\theta$.  More generally, if the action is a function of $\dot{\theta}^2$ and its variation with respect to $\theta$ vanishes for $\theta = \text{some constant}$, then $\theta = \text{that constant}$ is a solution of the equation of motion.  This is the case for us when $\theta = \pi/2$ (the hemisphere) because the action depends on $\theta$ through $\sin^{d-2}\theta$, so its variation contains a factor of $\cos\theta$.  The HM case (half-space in Poincar\'e coordinates) can be viewed as the limit of an infinitely large sphere, where any plane corresponds to a great circle.  However, $\theta = \theta_0$ is not a solution of the equation of motion for arbitrary $\theta_0$.  In particular, for $t_b = 0$, the surface lies strictly at $t = 0$ and the area functional reduces to that for the static surface, \eqref{areafunctionalstatic} or \eqref{areafunctionalstaticr}; to obtain the HM surface, we would solve the same equation of motion with horizon-crossing boundary conditions.

The equations of motion for $(r, \theta)$ or $(t, \theta)$ are coupled second-order nonlinear ODEs.  We still have a conserved quantity:
\begin{equation}
\frac{(r\sin\theta)^{d-2}f(r)}{\sqrt{-f(r) + \frac{\dot{r}^2}{f(r)} + r^2\dot{\theta}^2}} = \frac{(r\sin\theta)^{d-2}f(r)t'}{\sqrt{\frac{1}{f(r)} - f(r)t'^2 + r^2\theta'^2}} = C.
\end{equation}
However, it does not fix the solution.  To determine the HM surface, our strategy is to solve for $t(r)$ and $\theta(r)$ numerically for fixed $r_s$ and $\theta_s$, inferring the boundary values $t_b$ and $\theta_0$ at $r = \infty$.  We must understand the behavior of the solutions at $r = r_h$, where we expect a singularity.  We choose to work with $t(r)$ rather than $r(t)$ because $t(r)$ has a singularity at $r = r_h$, which means that $r(t)$ would be multivalued.

\begin{figure}[!htb]
\centering
\includegraphics[width=0.7\textwidth]{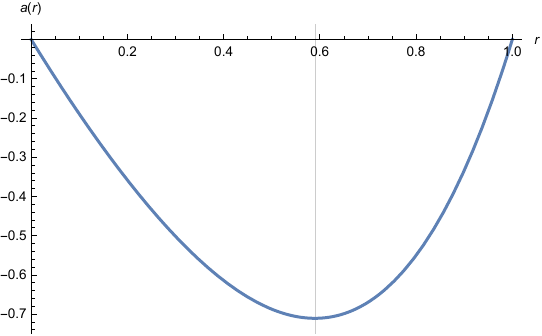}
\caption{$a(r)$ for $r\in [0, r_h]$ with black hole parameters $d = 3$, $\ell = 1$, $\mu = 2$.  $a(r)$ vanishes at $r = 0$ and $r = r_h = 1$; it achieves a minimum at $r = r_m\approx 0.589755$.}
\label{fig:ar}
\end{figure}

We first consider the instructive example with constant $\theta = \theta_0 = \pi/2$.  This requires solving for $t(r)$ only.  In this case, one can argue for the linear growth analytically (as in \cite{Hartman:2013qma}).  Specifying the growing RT surface by a function $r(t)$, the induced metric (with $\sin\theta_0 = 1$) is
\begin{equation}
ds^2 = \left[-f(r) + \frac{\dot{r}^2}{f(r)}\right]dt^2 + r^2\, d\Omega_{d-2}^2.
\end{equation}
The corresponding ``area'' functional is very similar to \eqref{volume}:
\begin{equation}
A[r(t)] = \omega_{d-2}\int dt\, r^{d-2}\sqrt{-f(r) + \frac{\dot{r}^2}{f(r)}}.
\label{areafunctional}
\end{equation}
We obtain the conserved quantity
\begin{equation}
\frac{r^{d-2}f(r)}{\sqrt{-f(r) + \frac{\dot{r}^2}{f(r)}}} = C,
\end{equation}
where the constant $C$ is determined by boundary conditions (i.e., the boundary time $t_b = t(r = \infty)$).  Solving this equation gives
\begin{equation}
\dot{r} = -f(r)\sqrt{1 + \frac{a(r)}{C^2}}, \qquad a(r) := r^{2(d-2)}f(r).
\end{equation}
Note that $f(r) < 0$ (and hence $\dot{r} > 0$) for $r < r_h$ while $f(r) > 0$ (and hence $\dot{r} < 0$) for $r > r_h$, reflecting the change in the spacetime signature across the horizon.  The overall sign is chosen so that $t_b$ increases as the innermost value of $r$ attained by the solution decreases.  We expect the RT surface to be reflection-symmetric under $t\leftrightarrow -t$, and therefore that $\dot{r} = 0$ at $t = 0$.  At this symmetric point, we have $r = r_s$ where
\begin{equation}
a(r_s) = -C^2.
\label{symmetricpoint}
\end{equation}
Since this requires $f(r_s) < 0$, we have $r_s < r_h$.  Note that $a(r)$ attains a single minimum at some $r\in (0, r_h)$, which gives an upper bound on the values of $C^2$ for which a solution to \eqref{symmetricpoint} exists (see \autoref{fig:ar}).  In view of the boundary condition $t(r_s) = 0$, we have
\begin{align}
t(r) &= -\int_{r_s}^r \frac{dr'}{f(r')\sqrt{1 - \frac{a(r')}{a(r_s)}}} \\
&:= \lim_{\epsilon\to 0}\left(-\int_{r_s}^{r_h - \epsilon} \frac{dr'}{f(r')\sqrt{1 - \frac{a(r')}{a(r_s)}}} - \int_{r_h + \epsilon}^r \frac{dr'}{f(r')\sqrt{1 - \frac{a(r')}{a(r_s)}}}\right), \label{regulatedtime}
\end{align}
where we have defined $t(r)$ as a principal value integral due to the pole of the integrand at $r = r_h$.  The logarithmic divergences cancel, as the quantity
\begin{equation}
\int_{r_h - \epsilon}^{r_h + \epsilon} \frac{dr'}{f(r')\sqrt{1 - \frac{a(r')}{a(r_s)}}} = \frac{1}{f'(r_h)}\int_{-\epsilon}^\epsilon \frac{d\epsilon'}{\epsilon'} + O(\epsilon) = O(\epsilon)
\end{equation}
is independent of $r_s$ to leading order in $\epsilon$.  Suppose that $a(r)$ is minimized at
\begin{equation}
r = r_m < r_s < r_h.
\end{equation}
For $r' > r_h$, we have $a(r')/a(r_s) < 0$, and the integrand of \eqref{regulatedtime} is well-behaved.  For $r' < r_h$, we have $a(r')/a(r_s) > 0$, and the integrand of \eqref{regulatedtime} diverges at $r' = r_s$.  However, this is a mild (square-root) divergence, which is still integrable: setting $r' = r_s + \epsilon$, we have
\begin{equation}
-\int_{r_s} \frac{dr'}{f(r')\sqrt{1 - \frac{a(r')}{a(r_s)}}} = -\int_0 \frac{d\epsilon}{f(r_s)\sqrt{-\frac{a'(r_s)}{a(r_s)}\epsilon}}(1 + O(\epsilon))
\end{equation}
(note that $a(r_s) < 0$ while $a'(r_s) > 0$).  The trouble arises when $r_s\to r_m$ (and hence $a'(r_s)\to 0$): in this limit, the integral diverges.  This is precisely the limit that $t_b\to\infty$.  In this limit, the main contribution to \eqref{regulatedtime} comes from the part of the integral near $r_m$.  Formally, we write
\begin{equation}
t_b(r_s) = t(r = \infty)\big|_{t(r_s) = 0} = \lim_{\epsilon\to 0}\left(-\int_{r_s}^{r_h - \epsilon} \frac{dr}{f(r)\sqrt{1 - \frac{a(r)}{a(r_s)}}} - \int_{r_h + \epsilon}^\infty \frac{dr}{f(r)\sqrt{1 - \frac{a(r)}{a(r_s)}}}\right).
\end{equation}
We find that $t_b(r_s)$ decreases monotonically from $t_b(r_m) = \infty$ to
\begin{equation}
t_b(r_h) = \lim_{\delta\to 0}\lim_{\epsilon\to 0}\left(-\int_{r_h - \delta}^{r_h - \epsilon} \frac{dr}{f(r)\sqrt{1 - \frac{a(r)}{a(r_h - \delta)}}} - \int_{r_h + \epsilon}^\infty \frac{dr}{f(r)\sqrt{1 - \frac{a(r)}{a(r_h - \delta)}}}\right) = 0
\end{equation}
as $r_s$ ranges from $r_m$ to $r_h$.  Note that $t(r)$ does not increase monotonically as $r$ ranges from $r_s$ to $\infty$.  We can compute the area by integrating from $r = r_s$ to $r = \infty$ and multiplying by two, using the $t\leftrightarrow -t$ symmetry, as follows:
\begin{align}
A_\text{HM}\big|_{\theta_0 = \pi/2} &= 2\omega_{d-2}\int_{r_s}^\infty \frac{dr}{|\dot{r}|}\, r^{d-2}\sqrt{-f(r) + \frac{\dot{r}^2}{f(r)}} \\
&= 2\omega_{d-2}\int_{r_s}^\infty \frac{r^{2(d-2)}\, dr}{\sqrt{a(r) - a(r_s)}}.
\end{align}
For large $t_b$, we can approximate \eqref{areafunctional} by setting $\dot{r} = 0$ and $r = r_m$, finding:
\begin{equation}
A_\text{HM}\big|_{\theta_0 = \pi/2} = 2\omega_{d-2}\sqrt{-a(r_m)}t_b + \text{constant},
\end{equation}
where the constant includes a universal ($t_b$-independent) large-$r$ divergence.  We can subtract this divergence explicitly by writing
\begin{equation}
A_\text{HM}\big|_{\theta_0 = \pi/2} = 2\omega_{d-2}\lim_{R\to\infty}\left[\int_{r_s}^R \frac{r^{2(d-2)}\, dr}{\sqrt{a(r) - a(r_s)}} - \frac{\ell R^{d-2}}{d - 2}\right],
\end{equation}
where $R$ is a radial cutoff and we have neglected a possible universal constant.  See \autoref{fig:tbanalytical}.

\begin{figure}[!htb]
\centering
\includegraphics[width=0.49\textwidth]{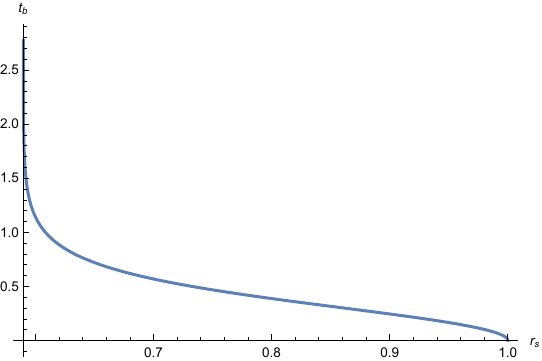}
\hfill
\includegraphics[width=0.49\textwidth]{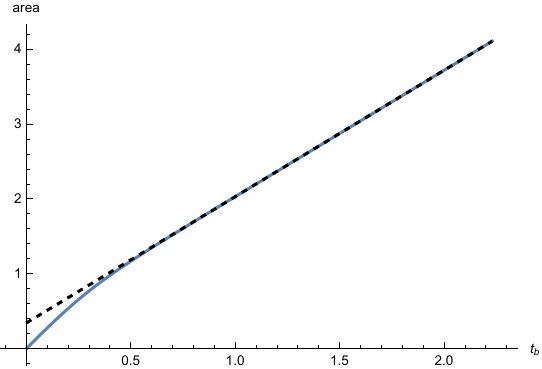}
\caption{Properties of analytical solution with $\theta_0 = \pi/2$ and the same black hole parameters as in \autoref{fig:ar}.  Left: boundary time $t_b$ versus $r_s$; $t_b$ ranges from $\infty$ to 0 as $r_s$ ranges from $r_m$ to $r_h$.  Right: area (computed with a radial cutoff) versus $t_b$, showing asymptotically linear growth.}
\label{fig:tbanalytical}
\end{figure}

Of course, this behavior cannot continue to hold for \emph{arbitrarily} large $t_b$ if the boundary of the spatial subregion is compact, as it is in our case.  This is due to the dominance of a disconnected extremal surface (with smaller area) at late times, where each component lies outside the horizon but close to it.  As noted in \cite{Hartman:2013qma}, these new surfaces live purely at $t_b$ and asymptote to the horizon at late times.

Let us understand the singular behavior of the $\theta_0 = \pi/2$ solution near $r = r_h$.  Since $t(r)$ is computed via a principal value integral, it is symmetric about $r = r_h$ for $r$ very close to $r_h$:
\begin{equation}
\lim_{\epsilon\to 0} [t(r_h - \epsilon) - t(r_h + \epsilon)] = \lim_{\epsilon\to 0} [t'(r_h - \epsilon) + t'(r_h + \epsilon)] = 0.
\end{equation}
This gives us a prescription for analytic continuation across (i.e., around) $r = r_h$.  More specifically, we can derive the matching condition by solving the EOM in a series expansion around $r = r_h$.  In terms of $t(r)$, the area functional is
\begin{equation}
A[t(r)] = \omega_{d-2}\int dr\, r^{d-2}\sqrt{\frac{1}{f(r)} - f(r)t'^2}.
\end{equation}
The equation of motion for $t(r)$ is
\begin{equation}
(2(d - 2)f(r) + 3rf'(r))t' - f(r)^2(2(d - 2)f(r) + rf'(r))t'^3 + 2rf(r)t'' = 0.
\end{equation}
Setting $r = r_h + \epsilon$ gives
\begin{align}
0 &= (3r_h f'(r_h) + O(\epsilon))t'(r_h + \epsilon) \nonumber \\
&\quad + (2r_h f'(r_h)\epsilon + O(\epsilon^2))t''(r_h + \epsilon) \\
&\quad - (r_h f'(r_h)^3\epsilon^2 + O(\epsilon^3))t'(r_h + \epsilon)^3. \nonumber
\end{align}
Cancellation of the leading terms in the EOM near $r_h$ requires (up to an overall sign) that
\begin{equation}
t'(r_h + \epsilon) = -\frac{1}{f'(r_h)\epsilon} + O(1).
\label{EOMsingularity}
\end{equation}
This is consistent with the exact solution, which implies that $t(r_h + \epsilon)$ diverges logarithmically as $\epsilon\to 0$.  Indeed, slightly inside the horizon, we have
\begin{equation}
t(r_h - \epsilon) = -\int_{r_s}^{r_h - \epsilon} \frac{dr'}{f(r')\sqrt{1 - \frac{f(r')}{f(r_s)}}},
\end{equation}
where we have substituted $a(r) = r^{2(d-2)}f(r)$.  This gives
\begin{equation}
t'(r_h - \epsilon) = -\frac{1}{f(r_h - \epsilon)\sqrt{1 - \frac{f(r_h - \epsilon)}{f(r_s)}}} = \frac{1}{f'(r_h)\epsilon} + O(1).
\end{equation}
The principal value prescription then stipulates \eqref{EOMsingularity}.

We therefore find that the function $t(r) + \frac{\log|r - r_h|}{f'(r_h)}$ is regular on all of $[r_s, \infty)$.  This gives a route to a numerical solution: we solve the EOM (starting at $r = r_s$) for
\begin{equation}
t_\text{reg}(r) := t(r) + \frac{\log(r_h - r)}{f'(r_h)},
\end{equation}
where the initial conditions $t(r_s) = 0$ and $t'(r_s) = \infty$ become
\begin{equation}
t_\text{reg}(r_s) = \frac{\log(r_h - r_s)}{f'(r_h)}, \qquad t_\text{reg}'(r_s) = \infty.
\end{equation}
We then infer $t(r)$ from $t_\text{reg}(r)$, where the latter is nonsingular at $r = r_h$.  See \autoref{fig:innerouter}.

\begin{figure}[!htb]
\centering
\includegraphics[width=0.49\textwidth]{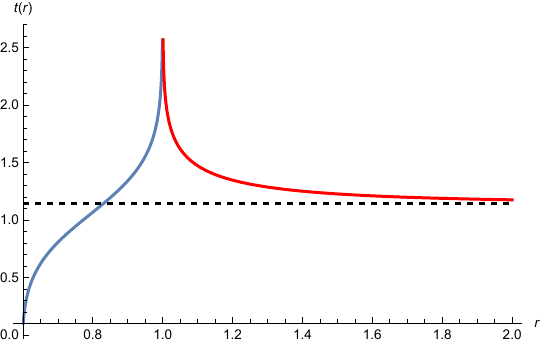}
\hfill
\includegraphics[width=0.49\textwidth]{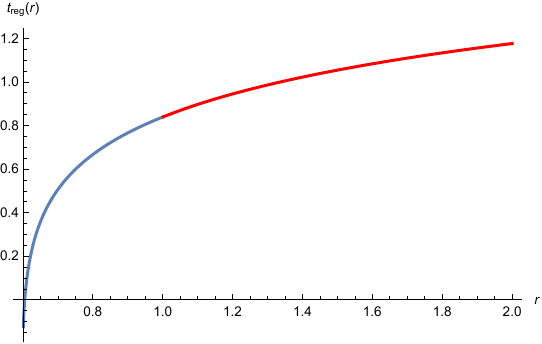}
\caption{Example of an analytical solution for $r_s = 0.6$, with the same parameter values as in \autoref{fig:ar}.  Left: inner (blue) and outer (red) solutions for $t(r)$; the asymptote is shown as a dashed black line.  Right: inner (blue) and outer (red) solutions for $t_\text{reg}(r)$, after subtracting the logarithmic divergence from $t(r)$ at $r = r_h$.}
\label{fig:innerouter}
\end{figure}

Returning to the case of arbitrary $\theta_0$, the equation of motion for $t(r)$ is
\begin{gather}
(3rf'(r) + 2r^2 f(r)((d - 3)f(r) + rf'(r))\theta'^2 - 2r^3 f(r)^2\theta'\theta'' \nonumber \\
{} + 2(d - 2)f(r)(1 + r(1 + r^2 f(r)\theta'^2)\theta'\cot\theta))t' \\
{} - f(r)^2(2(d - 2)f(r)(1 + r\theta'\cot\theta) + rf'(r))t'^3 + 2rf(r)(1 + r^2 f(r)\theta'^2)t'' = 0, \nonumber
\end{gather}
and the equation of motion for $\theta(r)$ is
\begin{gather}
2(d - 2)(1 - f(r)^2 t'^2)(1 - f(r)^2 t'^2 + r^2 f(r)\theta'^2)\cot\theta \nonumber \\
{} - r(rf'(r) + 2df(r) + f(r)^2(rf'(r) - 2df(r))t'^2 + 2rf(r)^3 t't'')\theta' \\
{} - 2(d - 1)r^3 f(r)^2\theta'^3 - 2r^2 f(r)(1 - f(r)^2 t'^2)\theta'' = 0. \nonumber
\end{gather}
Note that $\theta(r)$ cannot diverge at $r = r_h$ since the range of $\theta$ is bounded.  Assuming that $\theta(r)$ is regular and setting $r = r_h + \epsilon$ in the equation of motion for $t(r)$ gives exactly the same leading behavior for $t(r_h + \epsilon)$ as in \eqref{EOMsingularity}.  As a consistency check, note that all powers of $t'$ or $t''$ in the equation of motion for $\theta(r)$ are compensated for by powers of $f(r)$, so all terms in the equation are regular near $r_h$.  Therefore, we use the same strategy of subtracting the pole in $t(r)$ at $r = r_h$ to solve the modified EOMs for a completely regular $t_\text{reg}(r)$.

\subsection{Comments on Volumes}

In terms of $r(t)$ and $\theta(t)$, the area functional for the growing HM surface is \eqref{areafunctionalgrowingt}.  In terms of $\theta(r)$, the area functional for the static RT surface is \eqref{areafunctionalstaticr}.  Let us write down two corresponding volume functionals.  Making $\theta$ explicit, the general metric is
\begin{equation}
ds^2 = -f(r)\, dt^2 + \frac{dr^2}{f(r)} + r^2(d\theta^2 + \sin^2\theta\, d\Omega_{d-2}^2).
\end{equation}
For the growing surface, we fix $r(t)$ but integrate $\theta$ from 0 to $\theta(t)$, so the induced metric is
\begin{equation}
ds^2 = \left[-f(r) + \frac{\dot{r}^2}{f(r)}\right]dt^2 + r^2(d\theta^2 + \sin^2\theta\, d\Omega_{d-2}^2).
\end{equation}
The volume functional is then
\begin{equation}
V[r(t), \theta(t)] = \omega_{d-2}\int dt\, r(t)^{d-1}\sqrt{-f(r(t)) + \frac{\dot{r}(t)^2}{f(r(t))}}\int_0^{\theta(t)} d\vartheta\, \sin^{d-2}\vartheta.
\end{equation}
For the static surface, we fix $t = t_b$ but integrate $\theta$ from 0 to $\theta(r)$, so the induced metric is
\begin{equation}
ds^2 = \frac{dr^2}{f(r)} + r^2(d\theta^2 + \sin^2\theta\, d\Omega_{d-2}^2).
\end{equation}
The corresponding volume functional is
\begin{equation}
V[\theta(r)] = \omega_{d-2}\int dr\, \frac{r^{d-1}}{\sqrt{f(r)}}\int_0^{\theta(r)} d\vartheta\, \sin^{d-2}\vartheta.
\end{equation}
Following \cite{Susskind:2014moa}, we estimate the volume of the growing surface.  For simplicity, we take $\theta_0 = \pi/2$.  Then, as $t_b\to\infty$, the solution approaches one with constant $r = r_f$ and constant $\theta = \theta_0 = \pi/2$.  Indeed, the variational equations (for the area functional) for constant $r$ and $\theta$ simplify to
\begin{equation}
2(d - 2)f(r_f) + r_f f'(r_f) = 0, \qquad \cot\theta_0 = 0.
\end{equation}
Note that $r_f < r_h$.  The corresponding area is
\begin{equation}
A(\theta_0) = \omega_{d-2}r_f^{d-2}\sqrt{-f(r_f)}\int dt,
\end{equation}
and the corresponding volume is
\begin{align}
V(\theta_0) &= \omega_{d-2}r_f^{d-1}\sqrt{-f(r_f)}\int_0^{\pi/2} d\theta\, \sin^{d-2}\theta\int dt \\
&= \frac{1}{2}\omega_{d-1}r_f^{d-1}\sqrt{-f(r_f)}\int dt.
\end{align}
We have used
\begin{equation}
\omega_{d-1} = \frac{2\pi^{d/2}}{\Gamma(\frac{d}{2})} = \frac{\pi^{1/2}\Gamma(\frac{d - 1}{2})}{\Gamma(\frac{d}{2})}\frac{2\pi^{(d-1)/2}}{\Gamma(\frac{d - 1}{2})} = \left(\int_0^\pi d\theta\, \sin^{d-2}\theta\right)\omega_{d-2}.
\end{equation}
Therefore, at late times $t_b$, we have
\begin{equation}
V\approx \omega_{d-1}r_f^{d-1}\sqrt{-f(r_f)}t_b,
\end{equation}
where we have used $t\leftrightarrow -t$ symmetry to evaluate the integral.  This is because most of the HM surface lies at $r = r_f$; it only deviates near the ends.  The transition occurs at $t_b\sim L\sim \ell$, so
\begin{equation}
V_\text{crit}\approx \omega_{d-1}r_f^{d-1}\sqrt{-f(r_f)}\ell.
\end{equation}
This quantity is implicitly regularized.

The discussion above is only schematic, as $V[r(t), \theta(t)]$ does not truly account for the volume of the \emph{maximal-volume} spatial slice contained in the entanglement wedge.  Suppose we denote the volumes of interest by $\operatorname{vol}_\text{HM}(t)$ and $\operatorname{vol}_\text{RT}$, where the latter is time-independent and follows from $V[\theta(r)]$.  One can compute the differences $\operatorname{vol}_\text{HM}(t) - \operatorname{vol}_\text{RT}$, but comparing multiplicative factors between $\operatorname{vol}_\text{HM}(t)$ and $\operatorname{vol}_\text{RT}$ requires computing absolute rather than relative volumes.  A phys\-i\-cal\-ly meaningful regularization would be to compare $\operatorname{vol}_\text{HM}(t) - \operatorname{vol}_\text{HM}(0)$ and $\operatorname{vol}_\text{RT} \linebreak[1] - \linebreak[1] \operatorname{vol}_\text{HM}(0)$.  By setting $t$ to be the transition time, we would see how pronounced the sawtooth is.

\subsection{Numerics}

Our numerical routines for both the static and growing RT surfaces involve tolerances and approximations (e.g., finite truncations of infinite parameter ranges).  The optimal values of the numerical tolerances and ranges depend on the black hole parameters.

For illustration, we give examples of our numerical routines for $d = 3$, $\ell = 1$, $\mu = 2$, which is one of the (borderline) large black holes at the Hawking-Page threshold considered in the main text.  For these parameter values, $r_h = 1$.

For the static surface, we use a dimensionless tolerance $\epsilon$ as a proxy for 0 to avoid singularities.  We fix an $r_\text{max}$.  For a given $r_0 > r_h$, we solve for $\theta(r)$ for $r\in [r_0, r_\text{max}]$ by imposing the initial conditions $\theta(r_0) = \epsilon$ and $\theta'(r_0) = 1/\epsilon$.  We (approximately) determine the corresponding boundary value of $\theta$ (i.e., $\theta_0$) by evaluating $\theta(r_\text{max})$, as well as the corresponding area by evaluating the area functional on the solution $\theta(r)$ and integrating up to some cutoff on $r$.  See \autoref{fig:examplestatic}.

\begin{figure}
\centering
\includegraphics[width=0.9\textwidth]{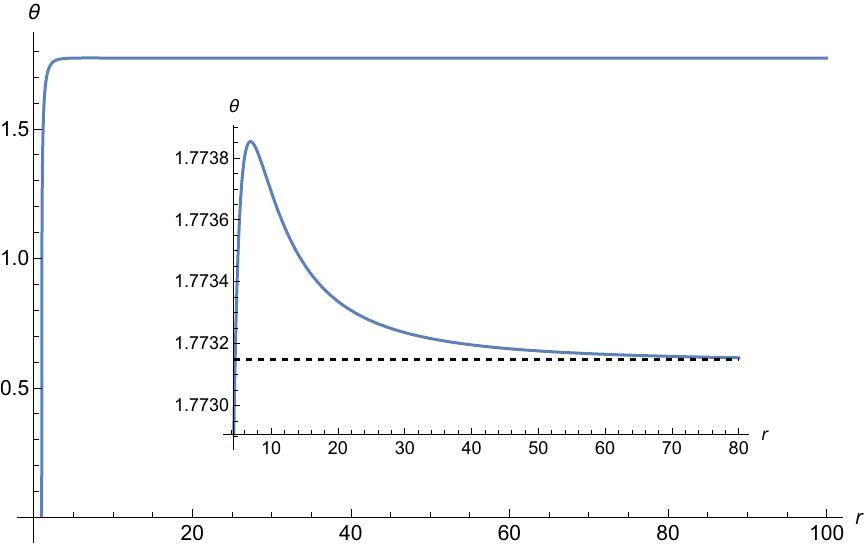}
\caption{Example of a static solution for $\theta(r)$ with $r_0 = r_h + 0.005 = 1.005$ and an asymptotic value of $\theta_0\approx 1.77315$.  Zooming in to the onset of the plateau reveals non-monotonicity (magnified in the inset, with asymptote shown).}
\label{fig:examplestatic}
\end{figure}

Using this $\theta(r)$ routine, we compute an interpolating function for $\theta_0$ as a function of $r_0$ for some range $r_0\in [r_0^\text{min}, r_0^\text{max}]$, where $r_0^\text{min}$ is slightly above $r_h$.  The $\theta(r)$ routine breaks down if $r_0$ is too large, which sets a limit on how large $r_0^\text{max}$ can be.  See \autoref{fig:th0versusr0}.

\begin{figure}
\centering
\includegraphics[width=0.8\textwidth]{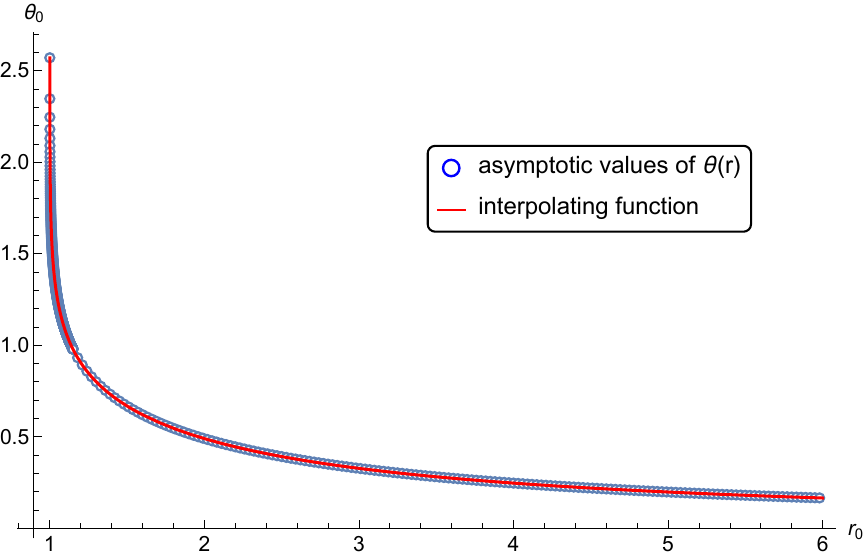}
\caption{$\theta_0$ versus $r_0$, computed via numerical solutions for $\theta(r)$ (data points in blue and interpolating function in red).  In this case, $r_0$ ranges from slightly above $r_h$ to $r_h + 5$; $\theta_0$ ranges from $\pi$ to 0 as $r_0$ ranges from $r_h$ to $\infty$.}
\label{fig:th0versusr0}
\end{figure}

For the growing surface, we compute both $t(r)$ and $\theta(r)$ for given $r_s < r_h$ and $\theta_s$.  To obtain the interior solutions, it suffices to solve the equations of motion for $r\in [r_s, r_h)$.  To obtain the solutions on all of $[r_s, \infty)$ (or rather, on $[r_s, r_\text{max}]$), we instead solve the modified equations of motion for $t_\text{reg}(r)$, which is $t(r)$ with the singular term at $r = r_h$ subtracted.  We use a dimensionless cutoff $\Lambda$ as a proxy for $\infty$ and impose that $t_\text{reg}'(r_s) = \Lambda$.  The $t_\text{reg}(r)$ routine breaks down if $\Lambda$ is too large.  See \autoref{fig:examplegrowing}.

Finally, we compare the areas of the static and growing surfaces.  We fix a radial cutoff $R < r_\text{max}$ (chosen sufficiently large that the area difference is stable).

To compute the static area as a function of boundary angle $\theta_0$, we evaluate the static area functional (integrated up to $R$) on the solution for $\theta(r)$ computed for $r_0(\theta_0)$, where $r_0(\theta_0)$ is computed by inverting the interpolating function for $\theta_0(r_0)$.

To compute the growing area as a function of $r_s$ and $\theta_s$, we evaluate the growing area functional (integrated up to $R$) on the solutions for $t(r)$ and $\theta(r)$, where $t(r)$ is obtained from $t_\text{reg}(r)$ by adding back the singular term.  We compute approximations to the corresponding boundary time $t_b$ and boundary angle $\theta_0$ by evaluating $t(r_\text{max})$ and $\theta(r_\text{max})$.

\begin{figure}[!htb]
\centering
\begin{subfigure}[c]{0.49\textwidth}
\centering
\includegraphics[width=\textwidth]{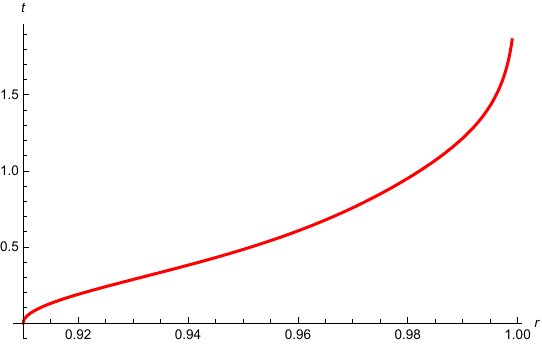}
\caption{Inner solution for $t(r)$.}
\end{subfigure}
\hfill
\begin{subfigure}[c]{0.49\textwidth}
\centering
\includegraphics[width=\textwidth]{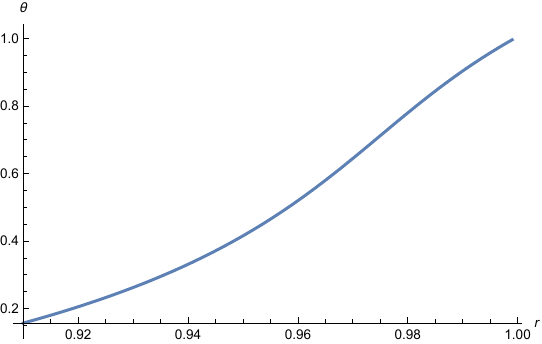}
\caption{Inner solution for $\theta(r)$.}
\end{subfigure}
\\
\begin{subfigure}[c]{0.49\textwidth}
\centering
\includegraphics[width=\textwidth]{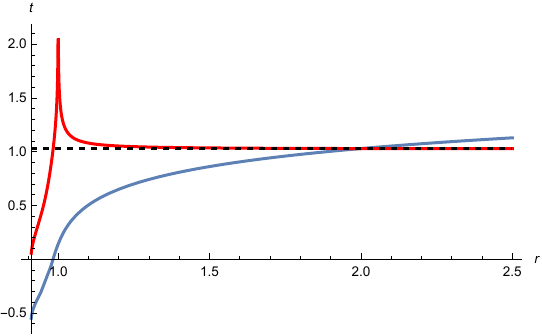}
\caption{Full solution for $t(r)$.}
\end{subfigure}
\hfill
\begin{subfigure}[c]{0.49\textwidth}
\centering
\includegraphics[width=\textwidth]{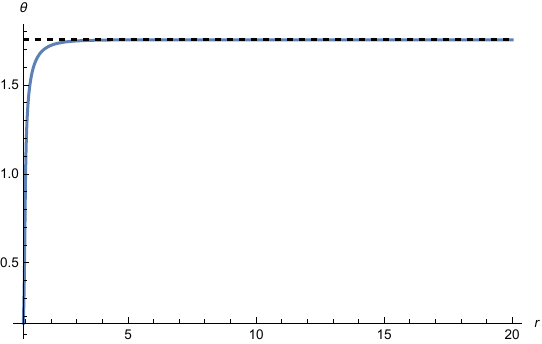}
\caption{Full solution for $\theta(r)$.}
\end{subfigure}
\caption{Example of a growing solution with $r_s = r_h - 0.09$ and $\theta_s = 0.1(\pi/2)$.  Top row: inner solutions for $t(r)$ and $\theta(r)$ with $r\in [r_s, r_h)$; $t(r)$ diverges at $r = r_h$.  Bottom row: full solutions for $t_\text{reg}(r)$ and $\theta(r)$, with $t_\text{reg}(r)$ in blue and the corresponding $t(r)$ in red; the asymptotic values $t_b\approx 1.02586$ and $\theta_0\approx 1.75316$ are easily read off.}
\label{fig:examplegrowing}
\end{figure}

For a given $(r_s, \theta_s)$, we compute $t_b(r_s, \theta_s)$ and $\theta_0(r_s, \theta_s)$, as well as the difference between the growing area and the static area corresponding to $\theta_0(r_s, \theta_s)$.  For fixed $r_s$, we scan over values of $\theta_s$ that produce close to zero area difference (within some tolerance).  Having found an appropriate $\theta_s$, the transition time for the boundary angle $\theta_0(r_s, \theta_s)$ is taken to be $t_b(r_s, \theta_s)$.  Thus we obtain, pointwise, the transition time as a function of $\theta_0$.  We convert this into the transition time as a function of subsystem fraction $p$.

\section{Complexity of Subsystems of Random Quantum Circuits} \label{app:rqccomp}

Here, we precisely define some of the quantum information-theoretic concepts central to our discussion and present proofs of the theorems stated in the main text.

\subsection{Quantum Information Preliminaries}

The local degrees of freedom are $q$-dimensional qudits with Hilbert space $\mathbb{C}^q$. We will often consider systems of $n$ qudits with corresponding tensor product Hilbert space $(\mathbb{C}^{q})^{\otimes n} = \mathbb{C}^{q^n}$. Sometimes, we will allow for $O(\poly(n))$ ancilla qudits alongside $n$ system qudits. We will work with matrix norms to measure distances between density matrices on the systems of interest.
If $X$ is an $m\times m$ complex matrix, i.e., if $X \in \mathbb{C}^{m \times m}$, then the Schatten $p$-norm of $X$ is denoted by $\|X \|_p$ and defined as $\|X \|_p := (\tr(|X|^p))^{1/p}$, where $|X| := \sqrt{X^\dagger X}$. Equivalently, if the singular values of $X$ are $\{s_i\}_{i=1} ^m$, then the Schatten $p$-norm is the vector $\ell_p$-norm of the singular values: $\|X\|_p = \big(\sum_i s_i^p\big)^{1/p}$. The trace distance $\dist(\rho, \sigma)$ between two equal-dimensional density matrices $\rho$ and $\sigma$ is defined to be $\dist(\rho, \sigma) := \frac{1}{2}\lVert\rho - \sigma\rVert_1$. Finally, we use $O(f(x))$ and $\Omega(f(x))$ to denote asymptotic upper bounds and asymptotic lower bounds, respectively. Specifically, when we write $f(x) = O(g(x))$, we mean that there exist universal constants $C > 0$ and $x_0$ such that for all $x > x_0$, $f(x) \leq C g(x)$. Likewise, when we write $f(x) = \Omega(g(x))$, we mean that there exist universal constants $c > 0$ and $x_0$ such that for all $x > x_0$, $f(x) \geq c g(x)$.

The dynamical models we consider are $n$-qudit random quantum circuits. They are composed of quantum gates, which are unitary operators from the group $U(q^2)$. By definition, the action of a two-qudit gate on an $n$-qudit state is the natural action of the unitary operator on two qudits (which must be specified) and the identity operator on all other qudits. Random quantum circuits are an ensemble (a distribution) of unitaries on $U(q^n)$ with  probability equal to the convolution of the probabilities of picking specific two-qudit gates at specific locations in the circuit. The property of random quantum circuits that makes them analytically tractable is their closeness to Haar-random unitaries (the uniform distribution over the unitary group). The closeness of the two distributions is captured by the notion of an $\varepsilon$-approximate unitary $k$-design. Let $\mu_H$ denote the Haar measure on the unitary group $U(D)$. Colloquially speak\-ing, distributions $\nu$ on $U(D)$ are $\varepsilon$-approximate unitary $k$-designs if the $k^\text{th}$ ``moments'' of $\nu$ are $\varepsilon$-``close'' to those of $\mu_H$. The moments are defined as $k$-fold quantum channels, and the notion of closeness is captured by the complete positivity of scaled differences of those channels, as we define below. For any distribution $\nu$ on the unitary group $U(D)$, we define the $k$-fold channel of $\nu$ on an operator $\op$ acting on $(\C^D)^{\otimes k}$ as
\begin{equation}
    \label{def:moment_channel}
    \Phi^{(k)}_\nu (\op) := \int d\nu\, U^{\otimes k}\op(U^\dagger)^{\otimes k}\,.
\end{equation}
An exact unitary $k$-design is a distribution $\nu$ such that $\Phi^{(k)}_\nu = \Phi^{(k)}_{\mu_H}$, i.e., the $k^\text{th}$ moments exactly match those of the Haar measure.

\begin{definition}[$\varepsilon$-approximate unitary $k$-designs]
    A distribution $\nu$ on the unitary group $U(D)$ is a relative-error $\varepsilon$-approximate unitary $k$-design if the $k$-fold channels are such that for any $\varepsilon > 0$, both
    \begin{equation}
        \Phi_{\nu} ^{(k)} - (1 - \varepsilon) \Phi_{\mu_H} ^{(k)} \succeq 0  \quad{\rm and}\quad (1 + \varepsilon) \Phi_{\mu_H} ^{(k)} - \Phi_{\nu} ^{(k)} \succeq 0
    \end{equation}
    are simultaneously true, where $\Phi \succeq 0$ means that the channel $\Phi$ is completely positive.
\end{definition}

We now define the random quantum circuit models that we consider. To simplify these definitions, we will assume that $n$ is even. We label the $n$ qudits by $\{0, 1, \dots, n-1\}$.

\begin{definition}[Brickwork random quantum circuits]
    \label{def:bwrqc}
    Let $\bwrqc$ denote a depth-$t$ brickwork random quantum circuit on $n$ qudits defined by a $t$-fold composition of $U_{0,1}\otimes U_{2,3}\otimes \cdots \otimes U_{n-2, n-1}$ at odd and $U_{1,2}\otimes U_{3,4}\otimes \cdots \otimes U_{n-3,n-2} \otimes B_{n-1,0}$ at even iterations in the composition, where each $U_{i,i+1}$ is a Haar-random two-qudit gate drawn with respect to the Haar measure $\mu_H(U(q^2))$ that acts nontrivally on qudits $i$ and $i+1$, i.e., $U_{i,i+1} = (U)_{i,i+1}\otimes \iden_{[n]\setminus \{i,i+1\}}$ and $U\sim \mu_H(U(q^2))$. Depending on whether we consider open or periodic boundary conditions, $B_{n-1,0} = \iden_{n-1,0}$ or $U_{n-1,0}$, respectively.
\end{definition}

Before we define the next model, we cover some prerequisite notation. Let $\xi, d \in \mathbb{N}$. Let $\xi$ divide $n$ into $m$ parts where $m$ is even, i.e., $n/\xi = m \in \mathbb{N}$ such that $m\bmod 2=0$. We arrange the qudits in a periodic 1D geometry. We define a ``patch'' labeled by $p \in \{0, 1, \dots, m-1\}$ as a contiguous collection of $2\xi$ qudits. The $p^{\text{th}}$ patch contains the qudits with labels $\{p \xi, p \xi + 1, \dots, (p+2) \xi - 1\}$. Let $U_{p}^{(d)}$ denote a depth-$d$ brickwork random quantum circuit with open boundary conditions (cf.\ \autoref{def:bwrqc}) on the qudits in patch $p$.

\begin{definition}[Patchwork random quantum circuits]
    \label{def:pwrqc}
    Patchwork random quantum circuits, denoted by $\pwrqc$, of depth $2d$ and patch size $2\xi$ on $n$ qudits are distributions over $U(q^n)$ of the unitaries
    \[
    (U_{1}^{(d)} \otimes U_{3}^{(d)} \otimes \cdots \otimes U_{m-1}^{(d)}) \cdot (U_{0}^{(d)} \otimes U_{2}^{(d)} \otimes \cdots \otimes U_{m-2}^{(d)}).
    \]
\end{definition}
\begin{remark}
    Note that in \autoref{def:pwrqc}, the patch labeled by $p=m-1$ contains the qudits labeled by $\{n - \xi, \dots, n-1, 0, \dots, \xi - 1\}$, and hence this model can be thought of as a locally interacting, one-dimensional random quantum circuit with periodic boundary conditions.
\end{remark}

\begin{remark}
\label{rem:time}
The notion of depth or time in patchwork random quantum circuits is different from that in brickwork random quantum circuits. Recall that in brickwork random quantum circuits (\autoref{def:bwrqc}), the location of gates is independent of the choice of time or depth. At an even depth $t$, a brickwork random quantum circuit is $t/2$ repeated applications of the same arrangement of gates. In contrast, in patchwork random quantum circuits, the location of gates is specified by the depth $d$. In other words, depth-$d$ patchwork random quantum circuits are not any repeated applications of the same layers of gates. For this reason, we define a family of patchwork random quantum circuits $\{\pwrqc(t)\}_t$ explicitly indexed by the depth, highlighting that for different $t$, the structure of the random circuit with respect to the placement of the gates is different. When we refer to a depth-$t$ patchwork random quantum circuit, we mean $\pwrqc(t)$. Contrast this with a brickwork random quantum circuit $\bwrqc^{*t}$ of even depth $t$.
\end{remark}

We consider bipartitions of the $n$ qudits in a random quantum circuit into subsystems $A$ and its complement $B$. The number of qudits in $A$ and $B$ and the Hilbert space dimensions of those subsystems are denoted by $\na$ and $\nb$, and $\da := q^{\na}$ and $\db := q^{\nb}$, respectively. We are interested in the quantum state complexity of states that are time-evolved by random quantum circuits restricted to subsystem $A$. To define quantum state complexity, we first let $\mathsf{G}$ denote a universal gate set on $n$ qudits. For example, $\mathsf{G}$ could consist of the Hadamard, $T$, and CNOT gates for states of $n$ qubits. Let $\mathsf{G}_{\cxl}$ denote the set of all size-$r$ circuits constructed from $\mathsf{G}$, i.e., all unitaries that are products of $r$ gates from $\mathsf{G}$ applied to different qudits. Then the $\err$-circuit complexity of a mixed state is defined as follows.
\begin{definition}[Mixed state complexity]
    \label{def:complexity}
    The $\err$-mixed state complexity, denoted $\CC(\rho_A)$, of a mixed state $\rho_A$ on system $A$ of $n_A$ qudits is at most $\cxl$ if there exists a unitary $U \in \mathsf{G}_r$ on $n_A$ qudits and $O(\poly(n_A))$ ancillas such that
    \begin{equation}
        \dist\big(\rho_A,\tr_{A^c}(U \ketbra{\vphi} U^\dagger)\big) \leq \delta\,,
    \end{equation}
    where $\dist$ denotes the trace distance, $\ket{\vphi}$ is some fixed unentangled state on the $n_A$ qudits and $O(\poly(n_A))$ ancillas, and $\tr_{A^c}$ is the trace over the ancillas. Alternatively, the $\err$-mixed state complexity $\CC(\rho_A)$ is at least $\cxl$ if for all unitaries $U \in \mathsf{G}_r$,
    \begin{equation}
        \dist\big(\rho_A,\tr_{A^c}(U \ketbra{\vphi} U^\dagger)\big) > \delta\,.
    \end{equation}
\end{definition}

In the following, we will always work with this definition of complexity. Hence we will simply refer to it as ``complexity'' rather than as ``$\err$-mixed state complexity.'' We will also take the qudit dimension to be $q = 2$, considering systems of local qubits. This is because previous results in \cite{chen2024incompressibility, schuster2025random}, on which we rely, are proven specifically for $q=2$.

Our goal is to prove precise theorems about growth and saturation of complexity of marginals of output states of random quantum circuits. We address complexity growth in the two regimes $\na > \nb$ and $\na < \nb$ via \autoref{thm:complexity_growth_na_gt_nb} and \autoref{thm:complexity_growth_na_lt_nb}, respectively. We address complexity saturation by quoting known results in the literature (including those of \cite{knill1995approximation} and \autoref{thm:cotler2022fluctuations}, which is Theorem 1 of \cite{cotler2022fluctuations}) and by providing our own \autoref{thm:patchwork_purity}. In the following theorems and proofs, we denote the probability of an event $E(x)$ of a random variable $x$ sampled from the distribution $X$ by $\pr_{x \sim X}(E(x))$.

\begin{theorem}[Complexity growth for $\na > \nb$]
    \label{thm:complexity_growth_na_gt_nb}
    Assume $A$ is a contiguous subsystem of a one-dimensional $n$-qubit system with periodic boundary conditions. For some $\err>0$, the time-evolved state $\ssa(t) = \tr_{B}(U \ketbra{\psi} U^\dagger)$ of a depth-$t$ brickwork random quantum circuit $U \sim \bwrqc$ (recall \autoref{def:bwrqc}) obeys 
    \begin{equation}
        \pr_{U \sim \bwrqc}\big(\CC(\rho_A(t)) > \cxl(t) + 1\big) \geq 1 - O(\exp(-n)),
    \end{equation}
    where
    \begin{align}
        r(t) =  \Omega\left(\frac{t}{\log(n) \log^7(t)}\right).
    \end{align}
    
\end{theorem}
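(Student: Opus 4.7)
The plan is the standard counting-plus-moment argument for random-circuit complexity lower bounds, adapted to the subsystem setting via a fidelity inequality that exploits the rank constraint $\rank(\ssa(t)) \leq \db$. The two main inputs are an approximate-design convergence theorem for brickwork circuits and a Haar moment estimate.

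First, I would invoke a design convergence result of the type used in \cite{chen2024incompressibility, schuster2025random}: at depth $t$, $\bwrqc$ is a relative-error $\varepsilon$-approximate unitary $k$-design on $n$ qubits with $k = \Omega(t/\log^7 t)$ and $\varepsilon$ exponentially small in $n$. On the counting side, because $\CC$ permits $p(n) = O(\poly(n))$ ancillas, any size-$\cxl$ circuit drawn from a fixed finite universal gate set $\mathsf{G}$ acts on $n + p(n)$ qubits, giving at most $N_\cxl \leq (|\mathsf{G}|\,(n+p(n))^2)^\cxl = 2^{O(\cxl \log n)}$ distinct candidates $U' \in \mathsf{G}_\cxl$. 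I plan to union-bound over these candidates the event that the associated reduction $\sigma_A(U') = \tr_{A^c}(U'\ketbra{\vphi}U'^\dagger)$ is $\err$-close in trace distance to $\ssa(t)$.

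For each fixed $U'$, the Schatten-norm inequality $\|X\|_1 \leq \sqrt{\rank(X)}\,\|X\|_2$ applied to $X = \sqrt{\ssa}\sqrt{\sigma_A}$, together with $\rank(\ssa) \leq \db$ and Fuchs--van de Graaf, yields
\[
\dist(\ssa(t), \sigma_A) \leq \err \;\Longrightarrow\; \tr(\ssa(t)\,\sigma_A) \geq \frac{(1-\err)^2}{\db}.
\]
The $k$-design property lets me approximate the $k$-th moment of $\tr(\ssa(t)\,\sigma_A)$ under $\bwrqc$ by its Haar counterpart up to the factor $1\pm\varepsilon$, and a Schur--Weyl / Weingarten computation bounds the latter by $\lesssim 1/\da^k$ (using only $\tr(\sigma_A^j) \leq 1$). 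Markov's inequality then gives a per-circuit failure probability of at most $(C\db/\da)^k = C^k\, 2^{-k(\na - \nb)}$ for a constant $C = C(\err)$.

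A union bound closes the argument: $\pr_{U \sim \bwrqc}[\CC(\ssa(t)) \leq \cxl] \leq N_\cxl \cdot C^k\, 2^{-k(\na - \nb)} = 2^{O(\cxl \log n) + O(k) - k(\na - \nb)}$. Choosing $\cxl = c\,t/(\log n \log^7 t)$ for $c$ a sufficiently small constant (depending on $\err$ and $\na - \nb \geq 1$), and using $k = \Omega(t/\log^7 t) \gtrsim n$, drives the right-hand side below $2^{-\Omega(n)}$. The principal technical obstacle is the borderline case $\na = \nb + 1$, where $\db/\da = 1/2$ leaves almost no margin against the constant $C$; handling it cleanly likely requires either a more refined Haar moment estimate that tracks the purity factors $\tr(\sigma_A^j)$ rather than bounding them uniformly by $1$, or fixing $\err$ at a small enough constant from the outset so that $C < 2$ is guaranteed.
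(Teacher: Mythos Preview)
Your approach is essentially the paper's: union bound over candidate circuits, pass from trace distance to the overlap $\tr(\ssa\sigma)$ via $\rank(\ssa)\leq \db$, apply Markov at order $k$, and bound the Haar moment by Schur--Weyl (the paper's Proposition~1 gives $\Ex_\mu[(\tr(\rho\sigma))^k]\leq k!/\da^k$; note the $k!$ you suppressed).

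The one substantive slip is the brickwork design depth. The result of \cite{chen2024incompressibility} is $t = O\big((nk+\log(1/\varepsilon))\log^7 k\big)$, so $k(t)=\Omega\big(t/(n\log^7 t)\big)$, not $\Omega(t/\log^7 t)$; the linear-in-$n$ factor is not optional. With the correct $k(t)$, your union bound becomes $2^{O(r\log n)-k(\na-\nb)+O(k\log k)}$, and achieving $r = c\,t/(\log n\,\log^7 t)$ forces $(\na-\nb)/n=\Omega(1)$, not merely $\na-\nb\geq 1$. The paper makes exactly this assumption in its proof (``as long as $\na$ is a constant fraction greater than $1/2$ of the total system size''): the factor of $n$ from the design depth is absorbed by $\na-\nb=\Theta(n)$, leaving only the $\log n$ from circuit counting. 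So your concern about the borderline $\na=\nb+1$ case is aimed at the wrong place---the binding constraint comes from the design depth, not from sharpening the Haar moment to track $\tr(\sigma^j)$.
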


\begin{theorem}[Complexity growth for $\na < \nb$]
    \label{thm:complexity_growth_na_lt_nb}
    Assume $A$ is a contiguous subsystem of a one-dimensional $n$-qubit system with periodic boundary conditions. For some $\err>0$, the time-evolved state $\ssa(t) = \tr_{B}(U \ketbra{\psi} U^\dagger)$ of a depth-$t$ patchwork random quantum circuit $U \sim \pwrqc$ (recall \autoref{def:pwrqc}) obeys
    \begin{equation}
        \pr_{U \sim \bwrqc}\big(\CC(\rho_A(t)) > \cxl(t) + 1\big) \geq 1 - O(\exp(-n)),
    \end{equation}
    where
    \begin{align}
        r(t) = \Omega\left(\frac{\na}{\log^2(n)  \log^7(t)} \left(\left(t - \frac{2t^2}{\na}\right) - \frac{t}{\na} \log(t)\right)\right).
    \end{align}
\end{theorem}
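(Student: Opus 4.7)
The plan is to adapt the design-based complexity lower bound strategy of \cite{brandao2021models, Haferkamp:2021uxo, chen2024incompressibility} to the subsystem setting. The proof proceeds via a union-bound/counting argument: one controls the size of the ``low-complexity'' set of subsystem states, then uses concentration properties inherited from the $k$-design structure of patchwork random quantum circuits to show that $\ssa(t)$ typically lies outside this set.

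First, I would invoke the fast design convergence results for random quantum circuits \cite{chen2024incompressibility, schuster2025random} applied to each patch. Since the $\pwrqc$ at depth $t = 2d$ consists of two layers of brickwork circuits on patches of size $2\xi$, the reduced distribution on qubits close to $A$ is an $\varepsilon$-approximate $k$-design for $k = \Omega(t / \log^7(t))$, up to an error $\varepsilon$ that can be taken inverse-exponential in $k$. Next, I would bound the number of mixed states on $A$ achievable by circuits of size $\cxl$ with $\poly(n)$ ancillas. A standard covering argument over gate choices and locations yields a $\delta$-net of cardinality $|\mathcal{S}_\cxl| \leq \exp(O(\cxl \log n))$. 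By construction, any state of complexity $\leq \cxl$ lies within trace distance $\delta$ of some element of $\mathcal{S}_\cxl$, so establishing the complexity lower bound reduces to showing that $\ssa(t)$ is bounded away from every element of $\mathcal{S}_\cxl$ with high probability.

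The heart of the argument is a per-state concentration bound: for each fixed $\sigma \in \mathcal{S}_\cxl$, the design property implies $\Pr_{U \sim \pwrqc}\bigl[\dist(\ssa(t), \sigma) \leq \delta\bigr] \leq p(k, \na, \text{rank}(\ssa(t)))$, obtained by computing moments of $\dist(\ssa(t), \sigma)$ against the $k$-design and applying a Markov-type tail bound. The key quantitative input is that the achievable separation between $\ssa(t)$ and any fixed low-rank $\sigma$ is controlled by how much entanglement $U$ has generated between $A$ and $B$: roughly, the effective dimension of the support of $\ssa(t)$ grows like $2^{\min(t, \na)}$ through a light-cone/entanglement argument, and this rank growth is what enters the design moments. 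Combining this per-state bound with the union bound over $\mathcal{S}_\cxl$ and solving for $\cxl$ yields the stated lower bound, with the linear $\na \, t$ growth coming from the design parameter $k \sim t$ multiplied by the $n_A$ degrees of freedom where concentration is effective.

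The main obstacle is the $-2t^2/\na$ correction, which is where the argument loses sharpness. This term tracks the fact that as $t$ grows toward $\na/2$, the rank of $\ssa(t)$ saturates at $\da$, so the concentration probability $p(k, \na, \text{rank})$ weakens: the ``room'' in which the subsystem state can avoid low-complexity $\sigma$'s shrinks, and design moments of $\dist(\ssa(t), \sigma)$ degrade. Rigorously tracking this degradation through the moment calculation and the subsequent Markov step is the technical core of the proof, and it is precisely this step that produces the crossover at $t \sim \na/2$. As remarked in the main text, this loss is likely an artifact of the proof technique rather than a genuine feature; sharpening it would require either a rank-sensitive design-to-complexity lemma or a direct analysis that does not route through the low-rank regime of $\ssa(t)$.
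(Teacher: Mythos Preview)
Your high-level approach matches the paper's: union bound over the set of low-complexity target states, per-state concentration via Markov's inequality on design moments, and the time-dependent rank of $\rho_A(t)$ as the source of the $-2t^2/\na$ correction. Two places where your description is imprecise enough to cause trouble if carried out literally: (i) you cannot compute moments of $\dist(\rho_A,\sigma)$ directly---the paper first converts $\dist(\rho_A,\sigma)\leq\delta$ to $\tr(\rho_A\sigma)\geq(1-\delta)^2/R$ via the fidelity bound and H\"older's inequality $\|X\|_1^2\leq \rank(X)\|X\|_2^2$, and it is \emph{this} step where the rank $R\leq\rank(\rho_A)\leq q^{2t}$ enters; the design moment itself, $\Ex_\nu[(\tr(\rho_A\sigma))^k]\leq(1+\varepsilon)k!/\da^k$, is rank-independent and does not ``degrade''; (ii) it is the rank of $\rho_A$ (bounded by the light-cone argument across the two cuts of the contiguous region $A$, hence $q^{2t}$ rather than $q^t$) that controls the bound, not that of $\sigma$, and the relevant $k$-design statement from \cite{schuster2025random} is for the full $n$-qubit patchwork circuit, not for individual patches or the marginal on qubits near $A$.
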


We begin by proving \autoref{thm:complexity_growth_na_gt_nb} in detail. Thereafter, we prove \autoref{thm:complexity_growth_na_lt_nb}, borrowing largely from the proof of \autoref{thm:complexity_growth_na_gt_nb}. The only (and the crucial) change in the proof is from a constant to a time-dependent dimensional factor that appears in the calculations from application of H\"older's inequality. This factor is the rank of the subsystem density matrix, which dynamically saturates to its maximum value and is crucial in proving nontrivial subsystem complexity lower bounds for $\na < \nb$. Although this factor of the rank dynamically evolves for all choices of subsystem size, for $\na > \nb$, this early-time ($O(\nb)$-time) feature is irrelevant for proving the interesting feature of linear complexity growth for $O(\exp(\na))$ timescale. However, for $\na < \nb$, since complexity growth occurs for short ($O(\na)$) times, the rank feature is significant in that time regime.

Our results for complexity growth will rely heavily on the $\varepsilon$-approximate unitary $k$-design property of random quantum circuits. This choice of proof technique is also the reason for us to introduce the more complicated patchwork random quantum circuits, as opposed to only the simplest brickwork random quantum circuits in \autoref{def:bwrqc}. Our complexity growth and saturation results for $\na < \nb$ rely on sublinear-depth $\varepsilon$-approximate unitary $k$-designs, which are available for patchwork random quantum circuits (\autoref{def:pwrqc}). We end this subsection with a short review of recent results on the same.

Recall that a ``random quantum circuit'' on $n$ qubits is not a single circuit, but rather a distribution over $n$-qubit circuits (or, equivalently, the unitary group $U(2^n)$) such that each instance of a circuit constructed via the recipe in \autoref{def:bwrqc} or \autoref{def:pwrqc} is an element of $U(2^n)$ that occurs with a probability equal to the probability of picking the specific gates in their specific locations in that circuit. The random quantum circuits introduced in \autoref{def:bwrqc} and \autoref{def:pwrqc} are relative-error $\varepsilon$-approximate unitary $k$-designs. They achieve this property when the circuit depth (circuit time) is sufficiently large. The following theorems are restatements of Corollary 1.7 from \cite{chen2024incompressibility} and Corollary 1 from \cite{schuster2025random}, which provide upper bounds on the minimum depth required to form relative-error $\varepsilon$-approximate unitary $k$-designs. (The convergence of RQCs to unitary designs is well-studied \cite{HL08, brandao2016local, harrow2023approximate, HHJ20, Haferkampt5, chen2024incompressibility}.) We will use these in our proof of the complexity growth of subsystems of the random quantum circuits introduced in \autoref{def:bwrqc} and \autoref{def:pwrqc}.

\begin{itheorem}{Brickwork designs}{Restatement of Corollary 1.7 in \cite{chen2024incompressibility}}
    An $n$-qubit depth-$t$ brickwork random quantum circuit, as defined in \autoref{def:bwrqc}, forms a relative-error $\varepsilon$-approximate unitary $k$-design for
    \begin{align}
        \label{thm:copy_incom}
        t = O\big((nk + \log(1/\varepsilon))\log^7(k)\big)\,,
    \end{align}
    when $k \leq O(2^{2n/5})$.
\end{itheorem}

\begin{itheorem}{Patchwork designs}{Restatement of Corollary 1 in \cite{schuster2025random}}
    An $n$-qubit depth-$t$ patchwork random quantum circuit with patch size $2\xi \geq 2\log_2(nk^2/\varepsilon)$, as defined in \autoref{def:pwrqc}, forms a relative-error $\varepsilon$-approximate unitary $k$-design for
    \begin{align}
        \label{thm:copy_shall}
        t = O\big(\log(nk/\varepsilon) k\log^7(k)\big)\,.
    \end{align}
\end{itheorem}

\subsection{Complexity for Greater-Than-Half System Size \texorpdfstring{$\na > \nb$}{nA > nB}}
\label{subsec:complexity_growth_na_gt_nb}

Our proof of \autoref{thm:complexity_growth_na_gt_nb} follows a similar line of reasoning as was previously given in Ref.~\cite{brandao2021models} to prove linear complexity growth of pure-state outputs of random quantum circuits. The difference between our calculations here and those in Ref.~\cite{brandao2021models} is that here, we will compute moments of overlaps of mixed states as opposed to those of pure states. To that end, we first clarify what we precisely mean by ``overlaps'' of mixed states.

Recall that the trace distance and fidelity between two quantum states $\rho$ and $\sigma$ are given by
\begin{align}
    \dist(\rho, \sigma) := \frac{1}{2}\lVert \rho - \sigma\rVert_1 \quad\text{and}\quad \F(\rho,\sigma):= \lVert\sqrt{\rho}\sqrt{\sigma}\rVert_1^2\,.
\end{align}
While trace distance is a distance measure on states, fidelity is not (because it does not satisfy the triangle inequality). However, they are related as follows:
\begin{align}
    1-\sqrt{F(\rho,\sigma)} \leq \dist(\rho,\sigma) \leq \sqrt{1 - F(\rho,\sigma)}\,.
\end{align}
Should the trace distance be less than $\delta$, we then have the following implications:
\begin{equation}
    \label{eq:change_random_variables}
    \dist(\rho,\sigma)\leq \delta \implies \F(\rho,\sigma)\geq (1-\delta)^2 \implies \db \tr(\rho\sigma)\geq (1-\delta)^2\,.
\end{equation}
The first inequality follows from the lower bound $1-\sqrt{F(\rho,\sigma)} \leq \dist(\rho,\sigma)$. The second inequality follows from the observation that
\begin{equation}
    \F(\rho,\sigma) = \|\sqrt{\rho}\sqrt{\sigma}\|_1^2 \leq {\rm rank}\big(\sqrt{\rho}\sqrt{\sigma}\big) \|\sqrt{\rho}\sqrt{\sigma}\|_2^2\leq \db \tr(\rho\sigma)\,,
\end{equation}
where we used H\"older's inequality for Schatten norms, $\|X\|_1\leq \sqrt{{\rm rank}(X)}\, \|X\|_2$, and that the rank of $\sqrt{\rho}\sqrt{\sigma}$ is bounded as ${\rm rank}\big(\sqrt{\rho}\sqrt{\sigma}\big)\leq {\rm rank}\big(\sqrt{\rho}\big) = {\rm rank}(\rho) = \min\{\da,\db\} = \db$ for $\db \leq \da$. We refer to $\tr(\rho\sigma)$ as the overlap of density matrices. In the proof below, we will take $\dist(\rho, \sigma)$ to be our random variable and use Markov's inequality to bound the probability that trace distances are large. We will find it operationally easier to pass from that random variable to the new random variable $\tr(\rho\sigma)$ using \autoref{eq:change_random_variables} and then to use Markov's inequality on the overlaps.

\begin{proof}[Proof of \autoref{thm:complexity_growth_na_gt_nb}]
    Using the definition of complexity, we rewrite the probability that the complexity is at most $r$ in terms of trace distance. Consider the collection of events
    \begin{align}
        \bigcup_{r' \leq \cxl} \bigcup_{V \in \mathsf{G}_{r'}} \big\{\dist(\ssa, \tr_{A^c} (V\ketbra{\vphi}V^\dagger)) \leq \err\big\},
    \end{align}
    where $\rho_A = \tr_B(U \ketbra{\psi} U^{\dagger})$, $\ket\psi$ is some fixed initial state, and $U$ is drawn from a distribution of random quantum circuits denoted by $\nu$, i.e., $U \sim \nu$. Recall $\ket\psi$ is our initial $n$-qubit state which we evolve by a random circuit and $\ket\vphi$ is the $(n+\poly(n))$-qubit state with which we construct our approximation to $\rho_A$. 
    If any of these events occurs (i.e., if for any $r'$ and $V$, the corresponding trace distance is below $\err$), then the complexity of $\ssa$ is at most $\cxl\geq r'$. Therefore, 
    \begin{align}
        \pr_{U \sim \nu}(\CC(\ssa) \leq \cxl) &= \pr_{U \sim \nu} \bigg(\bigcup_{r' \leq \cxl} \bigcup_{V \in \mathsf{G}_{r'}} \big\{\dist(\ssa, \tr_{A^c} (V\ketbra{\vphi}V^\dagger)) \leq \err\big\}\bigg)\,.
    \end{align}
    Note: we do not have to account for different choices of subsystems of the qudits plus ancillas as, without loss of generality, we can fix the partial trace to be over all but the first $n_A$ qudits. This is because a size-$r$ circuit preparing a mixed state on any subsystem of $n_A$ qudits is equivalent to a size-$r$ circuit preparing the same state on the first $n_A$ qudits up to a relabeling of qudits and thus also a size-$r$ circuit contained in $\mathsf{G}_r$. Using the union bound for probabilities, we find
    \begin{align}
        \pr_{U \sim \nu}(\CC(\ssa) \leq \cxl) &\leq \sum_{r' \leq \cxl} \sum_{V \in \mathsf{G}_{r'}} \pr_{U \sim \nu} \left(\dist(\ssa, \tr_{A^c} (V\ketbra{\vphi}V^\dagger)) \leq \err\right).
    \end{align}
    Moving forward, any specific choice of $V$ and hence of the state $\tr_{A^c} (V\ketbra{\vphi}V^\dagger)$ will not be important, so we simplify the notation by denoting the mixed state as $\sigma=\tr_{A^c} (V\ketbra{\vphi}V^\dagger)$. Proceeding, we upper-bound the probability that the state $\rho_A$ is close to some $\sigma$ using the fidelity as
    \begin{align}
        \pr_{U \sim \nu} \left(\dist(\ssa, \sigma) \leq \err\right) &\leq \pr_{U \sim \nu} \left(\F(\ssa, \sigma) \geq (1-\err)^2 \right) \\
        &\leq \pr_{U \sim \nu} \left(\rank(\sqrt{\ssa} \sqrt{\sigma}) \lVert \sqrt{\ssa} \sqrt{\sigma} \rVert_2 ^2 \geq (1-\err)^2 \right) \\
        &= \pr_{U \sim \nu} \left(\tr(\ssa\sigma) \geq \frac{(1-\err)^2}{\rank(\sqrt{\ssa} \sqrt{\sigma})} \right).
    \end{align}
    Denoting a general upper bound on the rank (independent of $\rho_A$ and $\sigma$) by $R$, we find
    \begin{align}
        \pr_{U \sim \nu} \left(\tr(\ssa\sigma) \geq \frac{(1-\err)^2}{R} \right) &= \pr_{U \sim \nu} \left((\tr\big(\ssa \sigma)\big)^{k} \geq \frac{(1-\err)^{2k}}{R^k} \right) \\
        &\leq \frac{R^k \Ex_{\nu}\Big[ \big(\tr(\ssa\sigma)\big)^k \Big]}{(1-\err)^{2k}}
    \end{align}
    for any $k$, where the last inequality is Markov's inequality. In order to proceed, we need to upper-bound the moments of the overlaps of $\rho_A$ and $\sigma$, for which we use the following proposition.
    \begin{proposition}
        \label{prop:rhosigma_moment}
        Let $\nu$ be a relative-error $\ep$-approximate unitary $k$-design, and let $\sigma$ be any fixed quantum state. The moments of $\tr(\rho\sigma)$, where $\rho = \tr_B (U\ketbra{\psi}U^\dagger)$ and $U$ is drawn from $\nu$, obey 
        \begin{equation}
            \Ex_\nu\Big[ \big(\tr(\rho\sigma)\big)^k \Big] \leq (1+\varepsilon)\frac{k!}{\da^k}\,.
        \end{equation}
    \end{proposition}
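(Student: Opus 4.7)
The plan is to recast $(\tr(\rho_A \sigma))^k$ as the pairing of a fixed positive-semidefinite observable against the $k$-fold twirl of the initial state, then swap the $\nu$-twirl for the Haar twirl using the relative-error design hypothesis, and finally evaluate the Haar moment directly via the symmetric-subspace identity for uniformly random pure states. No step in this outline is genuinely difficult; the argument is essentially bookkeeping, and the main thing to get right is its structure.

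Concretely, first I would write $\rho_A^{\otimes k} = \tr_{B^k}\!\big(U^{\otimes k}|\psi\rangle\langle\psi|^{\otimes k}(U^\dagger)^{\otimes k}\big)$ and $\tr(\rho_A \sigma) = \tr\!\big((\sigma\otimes \iden_B)\, U|\psi\rangle\langle\psi|U^\dagger\big)$, so that
\[
(\tr(\rho_A \sigma))^k = \tr\!\Big((\sigma\otimes \iden_B)^{\otimes k}\; U^{\otimes k}|\psi\rangle\langle\psi|^{\otimes k}(U^\dagger)^{\otimes k}\Big).
\]
Averaging over $U\sim\nu$ pulls the randomness onto the initial state, giving
\[
\mathbb{E}_\nu\!\big[(\tr(\rho_A \sigma))^k\big] = \tr\!\Big((\sigma\otimes \iden_B)^{\otimes k}\; \Phi^{(k)}_\nu(|\psi\rangle\langle\psi|^{\otimes k})\Big).
\]
Since $(\sigma\otimes \iden_B)^{\otimes k}\succeq 0$ and $|\psi\rangle\langle\psi|^{\otimes k}\succeq 0$, the relative-error condition $(1+\varepsilon)\Phi^{(k)}_{\mu_H} - \Phi^{(k)}_\nu\succeq 0$ immediately yields
\[
\mathbb{E}_\nu\!\big[(\tr(\rho_A \sigma))^k\big] \leq (1+\varepsilon)\,\mathbb{E}_{\mu_H}\!\big[(\tr(\rho_A \sigma))^k\big].
\]
To close out, I would use that $U|\psi\rangle$ is Haar-uniform on pure states of $\mathbb{C}^D$ with $D=d_A d_B$, so that $\Phi^{(k)}_{\mu_H}(|\psi\rangle\langle\psi|^{\otimes k}) = \Pi_{\mathrm{sym}}/\binom{D+k-1}{k}$. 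Bounding the projector by the identity gives $\tr((\sigma\otimes \iden_B)^{\otimes k}\Pi_{\mathrm{sym}})\leq (\tr\sigma\cdot d_B)^k = d_B^k$, and the elementary estimate $\binom{D+k-1}{k}\geq D^k/k!$ then produces $\mathbb{E}_{\mu_H}[(\tr(\rho_A\sigma))^k]\leq k!/d_A^k$, which together with the previous step is exactly the claim.

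The only point that requires a bit of care is structural: the moment must be written as $\tr(O\cdot \Phi^{(k)}_\nu(\rho))$ with both $O$ and $\rho$ positive semidefinite, because this is the form in which the completely-positive definition of an $\varepsilon$-design passes directly to an inequality on scalar expectations. A more combinatorial route, expanding $\Pi_{\mathrm{sym}}=\tfrac{1}{k!}\sum_{\pi\in S_k}P_\pi$ and using $\tr((\sigma\otimes \iden_B)^{\otimes k}P_\pi) = d_B^{\#\mathrm{cycles}(\pi)}\prod_c \tr(\sigma^{|c|})$ with $\tr(\sigma^j)\leq 1$, reaches the same bound but is strictly more work than the projector trick; I would reserve it only if a sharper constant were needed. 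Beyond this, I anticipate no genuine obstacle.
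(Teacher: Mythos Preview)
Your proposal is correct and follows the same overall strategy as the paper: express the moment as $\tr\big((\sigma\otimes\iden_B)^{\otimes k}\,\Phi^{(k)}_\nu(|\psi\rangle\langle\psi|^{\otimes k})\big)$, pass to the Haar twirl via the relative-error design hypothesis, and bound the Haar moment using the symmetric-subspace identity. The only differences are cosmetic: for the Haar step the paper takes precisely the combinatorial route you mention as an alternative (expanding $\Pi_{\rm sym}$ over permutations and using $\tr(\sigma^j)\leq 1$ together with $\sum_\pi d_B^{\ell(\pi)}\leq k!\,d_B^k$), while your projector bound $\Pi_{\rm sym}\leq\iden$ reaches the same $k!/d_A^k$ in one line; and for the approximate-design step the paper spectrally decomposes $(\sigma\otimes\iden_B)^{\otimes k}$ and argues eigenvalue by eigenvalue, which is equivalent to your direct use of positivity against PSD input and PSD observable. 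Your execution is slightly more economical, but the proofs are otherwise the same.
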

    Before we present the proof of \autoref{prop:rhosigma_moment}, we complete the proof of \autoref{thm:complexity_growth_na_gt_nb} as follows:
    \begin{equation}
        \pr_{U \sim \nu}(\CC(\ssa) \leq \cxl) \leq \sum_{r' \leq \cxl} \sum_{V \in \mathsf{G}_{r'}} \frac{R^k (1 + \varepsilon)k!}{\da^k (1-\err)^{2k}} \leq |\mathsf{G}_a|^{r} \frac{R^k (1 + \varepsilon)k!}{\da^k (1-\err)^{2k}} \,,
    \end{equation}
    where we define $|\mathsf{G}_a|:= O(\poly(n)) |\mathsf{G}|$, the cardinality of our universal gate set $\mathsf{G}$ on $n$ qubits and $O(\poly(n))$ ancillas. 
    This upper bound on the number of $\sigma$'s preparable with size-$r$ circuits using a polynomial number of ancillas is proved as follows. There are $|\mathsf{G}|$ gates that can be applied to any pair of qubits. If there are $n$ qubits and $O(\poly(n))$ ancillas, then there are $\binom{O(\poly(n))}{2} < O(\poly(n))$ pairs of qubits. Therefore, there are at most $O(\poly(n)) |\mathsf{G}|$ possible unitaries from $U(2^{O(\poly(n))})$ that we can construct using our gate set. Lastly, we observe that $\sum_{r'\leq r} x^{r'} \leq x^{r+1}$ for $x\geq 2$, which follows from summing the geometric series, and $(O(\poly(n)) |\mathsf{G}|)^{r+1} = (O(\poly(n)) |\mathsf{G}|)^r$ for constant-sized universal gate sets. Therefore, $\sum_{r' \leq \cxl} \sum_{V \in \mathsf{G}_{r'}} 1 \leq (O(\poly(n)) |\mathsf{G}|)^r$ and
    \begin{align}
        \label{eq:repeat_proof_till_here}
        \pr_{U \sim \nu}(\CC(\ssa) \leq \cxl) \leq |\mathsf{G}_a|^r \frac{R^k (1 + \varepsilon)k!}{\da^k (1-\err)^{2k}}.
    \end{align}
    In the case that $\na > \nb$, $R \leq \db$ because the Schmidt rank across a bipartition is at most the Hilbert space dimension of the smaller part in the partition.
    Demanding that the probability upper bound remain exponentially suppressed in $\na$, i.e.,
    \begin{align}
        |\mathsf{G}_a|^r \frac{R^k (1 + \varepsilon)k!}{\da^k (1-\err)^{2k}} \leq \frac{1}{q^{\na}}
    \end{align}
    and using Stirling's approximation for $k!$, we determine the following bound on $\cxl$ (the complexity lower bound) for all $\varepsilon < 1$:
    \begin{equation}
        r \leq \frac{1}{\log(|\mathsf{G}_a|)}\Big((\na - \nb) k \log(q) - k\log(k) + 2k \log(1-\delta) - \na \log(q)\Big)\,.
    \end{equation}
    Finally, we may replace $k$ by the circuit depth $t$ to form an $\varepsilon$-approximate unitary $k$-design in relative error. Let $k = k(t)$; then the complexity lower bound is
    \begin{align}
        \label{eq:insert_koft_nagtnb}
        \frac{\na}{\log(|\mathsf{G}_a|)} \left((2 - n/\na) k(t) \log(q) - \frac{k(t)}{\na} \log(k(t)) + \frac{2k(t)}{\na} \log(1-\delta) - \log(q)\right).
    \end{align}
    We refer to Corollary 1.7 of \cite{chen2024incompressibility}, reproduced in \autoref{thm:copy_incom}, for the asymptotic form of $k(t)$:
    \begin{align}
        t &= C' (nk \log(2) + \log(1/\varepsilon)) \log^{7}(k)\\
        \label{eq:invert_lambertw_1}
        &= C n k \log^{7}(k)
    \end{align}
    for a constant fixed $\varepsilon > 0$ and corresponding large-enough constants $C', C > 0$. We invert \autoref{eq:invert_lambertw_1} to find $k$ as a function of $t$ using the Lambert $W$-function, whose defining equation is $x = W(x) e^{W(x)}$, as was done in \cite{oszmaniec2024saturation}. To follow the procedure there, we define $\tau := t / (Cn)$. Then, we note that \autoref{eq:invert_lambertw_1} is equivalent to $\tau = k \log^\alpha(k)$. We can find $k$ as a function of $\tau$, i.e., $k = k(\tau)$. In general, $k(\tau) := \tau / \alpha^\alpha W^{\alpha}(\tau^{1/\alpha} / \alpha)$ solves the equation $\tau = k(\tau) \log^\alpha(k(\tau))$. In particular, we invert \autoref{eq:invert_lambertw_1} to find that
    \begin{align}
        k(t) = \left\lfloor \frac{t}{7^7 Cn W^7\left(\frac{1}{7}\left(\frac{t}{Cn}\right)^{1/7}\right)}\right\rfloor.
    \end{align}
    Finally, we give a lower bound on $k$ by realizing that for $x > e$, $W(x) \leq \log(x)$:
    \begin{equation}
        k(t) \geq \left\lfloor \frac{t}{7^7 Cn  \log^7\left(\frac{1}{7}\left(\frac{t}{Cn}\right)^{1/7}\right)}\right\rfloor = \left\lfloor \frac{t}{Cn  \log^7\left(\frac{t}{7^7 Cn}\right)}\right\rfloor\,.
        \label{eq:koft_nagtnb}
    \end{equation}
    Before we attempt the final step of the proof by inserting $k(t)$ into \autoref{eq:insert_koft_nagtnb}, we need to make sure that we do so in the regime where \autoref{eq:insert_koft_nagtnb} is a growing function of $k$, because \autoref{eq:koft_nagtnb} is a lower bound on $k$. Taking the derivative of \autoref{eq:insert_koft_nagtnb}, we notice that as long as $\na$ is a constant fraction greater than $1/2$ of the total system size $n$, then unless $k = O(2^{\na ^2})$, the derivative remains positive asymptotically. However, asymptotically, the complexity saturates before $k = O(2^{\na ^2})$. Therefore, we may safely insert \autoref{eq:koft_nagtnb} into \autoref{eq:insert_koft_nagtnb} to complete the proof of \autoref{thm:complexity_growth_na_gt_nb}, which says that the complexity corresponding to a subsystem of a depth-$t$ random circuit grows as
    \begin{align}
        \Omega\left(\frac{t}{\log(n) \log^7(t)}\right).
    \end{align}
\end{proof}

\begin{proof}[Proof of \autoref{prop:rhosigma_moment}]
First, we note that the Haar-averaged moments of $\tr(\rho\sigma)$ are bounded as
\begin{equation}
    \Ex_{\mu_H}\Big[ \big(\tr(\rho\sigma)\big)^k \Big] = \int d\mu_H(U)\, \tr\big(\big(\tr_B U\ketbra{\psi}U^\dagger\big)^{\otimes k}\sigma^{\otimes k}\big) = \int d\mu_S(\phi)\, \tr\big(\big(\tr_B \ketbra{\phi}\big)^{\otimes k}\sigma^{\otimes k}\big)\,,
\end{equation}
as the distribution of states $U\ketbra{\psi}U^\dagger$, for $U\sim\mu_H$ and any fixed state $\ket\psi$, is the Haar measure on the space of states $\mu_S$, induced from the Haar measure on the space of unitaries. For Haar-random states $\ket\phi\sim \mu_S$, the average of the $k^\text{th}$ moments of the states is equal to the projector onto the symmetric subspace \cite{harrowchurch}
\begin{equation}
    \Ex_{\ket\phi \sim \mu_S}\big[ \ketbra{\phi}^{\otimes k}\big] = \binom{D+k-1}{k}^{-1}\Pi_{\rm sym}\,, \, \text{ where } \, \Pi_{\rm sym} = \frac{1}{k!}\sum_{\pi\in S_k} P_\pi
\end{equation}
and where $P_\pi$ is a permutation operator on the $k$-fold space and $S_k$ is the symmetric group on $k$ elements. Applying this formula, we can compute the Haar moments of $\tr(\rho\sigma)$ as
\begin{align}
    \Ex_{\mu_H}\Big[ \big(\tr(\rho\sigma)\big)^k \Big] &= \frac{1}{\binom{D+k-1}{k}}\frac{1}{k!} \sum_{\pi\in S_k} \tr\big(P_\pi^{(A)} \sigma^{\otimes k}\big) \tr\big(P_\pi^{(B)}\big)\\
    &= \frac{1}{\binom{D+k-1}{k}}\frac{1}{k!} \sum_{\pi\in S_k} \db^{\ell(\pi)} \prod_{i=1}^{\ell(\pi)} \tr(\sigma^{\lambda_i(\pi)})\\
    &= \frac{1}{\prod_{i=1}^{k}(D+k-i)}\sum_{\pi\in S_k} \db^{\ell(\pi)} \prod_{i=1}^{\ell(\pi)} \tr(\sigma^{\lambda_i(\pi)})\,,
\end{align}
where, for a permutation $\pi\in S_k$, $\lambda(\pi)$ is the cycle type of the permutation, i.e., $\lambda\vdash k$ is an integer partition of $k$ given by $\lambda = (\lambda_1, \lambda_2, \ldots, \lambda_{\ell(\pi)})$ where $\ell(\pi)$ is the length of the cycle type of the permutation. As $\sigma$ is Hermitian, nonnegative, and trace-one, we have $\tr(\sigma^i)\leq 1$. As $\ell(\pi)\leq k$, we have the bound $\sum_{\pi\in S_k} \db^{\ell(\pi)} \leq k!\, \db^k$, from which we conclude that
\begin{equation}
    \label{eq:final_rel_err_moments}
    \Ex_{\mu_H}\Big[ \big(\tr(\rho\sigma)\big)^k \Big] \leq \frac{\sum_{\pi\in S_k} \db^{\ell(\pi)}}{D(D+1)\cdots(D+k-1)}\leq k! \frac{\db^k}{D^k} = \frac{k!}{\da^k}\,.
\end{equation}

So far, we have considered Haar moments. To bound the moments of $\tr(\rho\sigma)$ over an \emph{approximate} unitary design $\nu$, we consider
\begin{equation}
    \Ex_\nu\Big[ \big(\tr(\rho\sigma)\big)^k \Big] = \Ex_\nu\Big[ \big(\tr(\rho\sigma)\big)^k \Big] - \Ex_{\mu_H}\Big[ \big(\tr(\rho\sigma)\big)^k \Big] + \Ex_{\mu_H}\Big[ \big(\tr(\rho\sigma)\big)^k \Big]\,.
\end{equation}
Let $M := (\sigma \otimes \iden_{B})^{\otimes k} = \sum_{\lambda} \lambda \ketbra{\lambda}$, where all $\lambda \geq 0$. We find that
\begin{align}
    \Ex_\nu\Big[ \big(\tr(\rho\sigma)\big)^k \Big] - \Ex_{\mu_H}\Big[ \big(\tr(\rho\sigma)\big)^k \Big] &= \tr \left(\Big( \Phi_\nu^{(k)}\big(\ketbra{\psi}^{\otimes k}\big) - \Phi_{\mu_H}^{(k)}\big(\ketbra{\psi}^{\otimes k}\big) \Big) M\right) \\
    &= \sum_{\lambda} \lambda \bra{\lambda} \Big( \Phi_\nu^{(k)}\big(\ketbra{\psi}^{\otimes k}\big) - \Phi_{\mu_H}^{(k)}\big(\ketbra{\psi}^{\otimes k}\big) \Big) \ket{\lambda} \\
    &\leq \sum_{\lambda} \lambda \bra{\lambda} \ep\,\Phi_{\mu_H}^{(k)}\big(\ketbra{\psi}^{\otimes k}\big) \ket{\lambda} \\
    &\leq \tr \left(\ep \, \Phi_{\mu_H}^{(k)}\big(\ketbra{\psi}^{\otimes k}\big) M \right) \\
    &\leq \varepsilon \Ex_{\mu_H}\Big[ \big(\tr(\rho\sigma)\big)^k \Big].
\end{align}
Finally, using \autoref{eq:final_rel_err_moments} (moment bounds for $k$-designs), the moments of $\tr(\rho\sigma)$ for ap\-prox\-i\-mate unitary $k$-designs are bounded as
\begin{equation}
    \Ex_\nu\Big[ \big(\tr(\rho\sigma)\big)^k \Big] \leq (1+\varepsilon)\frac{k!}{\da^k}\,.
\end{equation}
\end{proof}

Lastly, we prove that at exponential times, the subsystem complexity lower bound is exponential. We can do this using exponentially high-degree designs. 
\begin{proposition}
    Consider a subsystem $A$ of an $n$-qubit brickwork random quantum circuit of depth $t$, with $n_A>n_B$. When the circuit depth is $t=\Omega(2^{3n} n^5)$, the complexity of the time-evolved state $\rho_A(t) = \tr_B(U\ketbra{\psi}U^\dagger)$ is exponential, $\CC(\rho_A(t))=\Omega(2^{n_A-n_B})$.
\end{proposition}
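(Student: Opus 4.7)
The plan is to rerun the moment–Markov argument used in the proof of \autoref{thm:complexity_growth_na_gt_nb}, but with an exponentially large design order $k$. The starting inequality is the same: for any relative-error $\varepsilon$-approximate unitary $k$-design $\nu$, and for $n_A>n_B$ (so that the rank bound $R\leq d_B$ applies),
\begin{equation}
\pr_{U \sim \nu}\!\big(\CC(\rho_A) \leq r\big) \;\leq\; |\mathsf{G}_a|^{r}\,\frac{d_B^{k}\,(1+\varepsilon)\,k!}{d_A^{k}(1-\delta)^{2k}}.
\end{equation}
Everything up through this bound carries over verbatim from the previous proof; all the new work is in the choice of $k$ and in the downstream design-convergence step.

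Next I would take $k$ exponential in $n_A-n_B$: set $k = c\cdot 2^{\,n_A-n_B}$ for a small fixed constant $c<1/e$. A Stirling expansion $\log(k!) = k\log k - k + O(\log k)$ turns the logarithm of the right-hand side into
\begin{equation}
r\log|\mathsf{G}_a| \;+\; k\log k \;-\; (n_A-n_B)\,k\log 2 \;-\; k \;+\; O(\log k).
\end{equation}
The crucial algebraic cancellation is $k\log k - (n_A-n_B)\,k\log 2 = k\log c$, which is strictly negative. Choosing $r = C\cdot 2^{\,n_A-n_B}$ with $C$ a sufficiently small constant (of order $1/\log|\mathsf{G}_a|$) then pushes the total exponent below $-n$, so $\pr(\CC(\rho_A)\leq r) \leq e^{-\Omega(n)}$, yielding the claimed $\Omega(2^{\,n_A-n_B})$ complexity lower bound with overwhelming probability.

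The third step is to verify that depth $t=\Omega(2^{3n}n^{5})$ is enough to guarantee a relative-error $\varepsilon$-approximate unitary $k$-design at $k=\Theta(2^{\,n_A-n_B})$. By Corollary~1.7 of \cite{chen2024incompressibility}, depth $O(nk\log^{7}k)$ suffices provided $k\leq O(2^{2n/5})$. Since $n_A-n_B\leq n$ forces $k\leq 2^{n}$, this bound is at most $O(n^{8}\,2^{n})$, which is comfortably dominated by $2^{3n}n^{5}$ for large $n$.

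The hard part will be the regime $n_A-n_B > 2n/5$, where the hypothesis $k\leq O(2^{2n/5})$ required by the linear-depth design-convergence result is violated. I expect this to be the main technical obstacle. One natural fix is to plug in the maximal admissible $k=2^{2n/5}$, which by the same argument still yields the (weaker but exponential) lower bound $\Omega(2^{2n/5})$; another is to replace the linear-depth result by a spectral-gap-type design-convergence bound valid for higher $k$ at exponential depth. The generous depth budget $2^{3n}n^{5}$ in the statement is chosen precisely so that either route succeeds, closing the gap between the two arguments and giving $\Omega(2^{\,n_A-n_B})$ uniformly in the bipartition.
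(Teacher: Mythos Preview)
Your approach matches the paper's: the moment--Markov inequality with $k\sim 2^{\,n_A-n_B}$ is exactly what is used there (the paper takes $k=d_A/d_B$ and bounds $k!\leq (k/2)^k$ in place of your Stirling expansion, arriving at the same $|\mathsf{G}_a|^{r}\,2^{-d_A/d_B}$ estimate). For the design-convergence step, the paper adopts precisely the second of your two proposed fixes: rather than invoking the linear-depth result of \cite{chen2024incompressibility} (whose $k\leq O(2^{2n/5})$ hypothesis you correctly flag as the obstacle), it uses the $k$-independent spectral-gap lower bound $\Delta(H)\geq C\,n^{-4}2^{-2n}$ of \cite{Haferkampt5} together with Lemmas~19--20 of \cite{brandao2016local}, which at $k=2^n$ and $\varepsilon=2^{-n}$ yields a relative-error $k$-design once $t=\Omega(2^{3n}n^5)$, covering every $k\leq 2^n$ and hence all bipartitions uniformly.
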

\begin{proof}
This follows from a slight modification of the bound in \autoref{thm:complexity_growth_na_gt_nb}. First, Ref.~\cite{brandao2016local} proved an exponentially small but $k$-independent lower bound on the spectral gap of random quantum circuits. This was improved in Ref.~\cite{Haferkampt5} to $\Delta(H)\geq C n^{-4} 2^{-2n}$, where $C<1$ is a constant. The gap $g_{\rm bw}$ of brickwork RQCs is then $1-g_{\rm bw}\geq C' \Delta(H)$. Combining Lemmas 19 and 20 in Ref.~\cite{brandao2016local} and taking $k=2^n$ and $\ep=1/2^n$, we find that when $t=\Omega(2^{3n} n^5)$, RQCs form $k=2^n$ designs. Now recall that in the proof of \autoref{thm:complexity_growth_na_gt_nb}, we found
\begin{align}
    \pr_{U \sim \nu}(\CC(\ssa) \leq \cxl) \leq |\mathsf{G}_a|^r \frac{\db^k (1 + \varepsilon)k!}{\da^k(1-\err)^{2k}}\leq |\mathsf{G}_a|^r \frac{(1 + \varepsilon)}{(1-\err)^{2k}} \left(\frac{k\db}{2\da}\right)^k\,,
\end{align}
using $k!\leq (k/2)^k$ for $k\geq 6$. Taking $k=\da/\db$, we find that $\pr_{U \sim \nu}(\CC(\ssa) \leq \cxl) \lesssim |\mathsf{G}_a|^r 2^{-\da/\db}$. Thus with extremely high probability, when the circuits are $t=\Omega(2^{3n} n^5)$ and form $k=\da/\db$ designs, the complexity of the subsystem is exponential: $\CC(\ssa) =\Omega(2^{n_A-n_B})$.
\end{proof}

We end the subsection by commenting on possible improvements to \autoref{prop:rhosigma_moment} and \autoref{thm:complexity_growth_na_gt_nb}. Our bound can be tightened a bit by considering centered moments $\Ex_\nu\Big[ \big|\tr(\rho\sigma) - 1/\da \big|^k \Big]$, which turn out to concentrate better. For example, for subexponential moments, we can show that
\begin{equation}
    \Pr\Big( \dist(\rho,\sigma)\leq \delta\Big) \leq \Pr\bigg( \Big|\tr(\rho\sigma)-\frac{1}{\da}\Big|^{2k} \geq \frac{\Delta^{2k}}{\db^{2k}}\bigg)
    \leq \frac{\db^k}{\da^{2k}}\frac{(2k)!(1+\ep)}{\Delta^{2k}},
\end{equation}
where $\Delta := 1 - \frac{1}{\da} - \delta^2$. However, these improvements in concentration do not qualitatively change our results, and thus we opted for the simpler concentration result in \autoref{prop:rhosigma_moment}. The improved dimension scaling above corresponds to a linear complexity growth with slightly larger slope. We believe that even tighter concentration should follow from centered moments of the fidelity. This might allow one to see the correct slope of the linearly growing subsystem complexity.

\subsection{Complexity for Less-Than-Half System Size \texorpdfstring{$\na < \nb$}{nA < nB}}

\begin{proof}[Proof of \autoref{thm:complexity_growth_na_lt_nb}] 
    Here, we borrow largely from the proof of \autoref{thm:complexity_growth_na_gt_nb} presented in \autoref{subsec:complexity_growth_na_gt_nb}. In fact, the proof is identical up to \autoref{eq:repeat_proof_till_here}. Thereafter, the proof differs as follows. In the proof of \autoref{thm:complexity_growth_na_gt_nb}, we replaced the rank $R$ by $\db$, the maximum Schmidt rank of a bipartition of a pure state, which is equal to the Hilbert space dimension of the smaller part in the partition. Now note that for any bipartition, the rank of $\ssa$ increases by at most a factor of $q^4$ after two layers (an even and an odd layer) of gates. If the gate set were exactly equal to $U(q^2)$, then this would be true with probability 1 (refer to Appendix A of \cite{nahum2017quantum}). For simplicity, consider even $t$ such that the rank increases by a factor of $\q^{2t}$ after $t$ layers, corresponding to $t/2$ repetitions of even and odd layers of the circuit. We then find that
    \begin{align}
        r \leq \frac{1}{\log(|\mathsf{G}_a|)}\big((\na - 2t) k \log(q) - k\log(k/2) + 2k \log(1-\delta) - \na \log(q)\big)\,,
    \end{align}
    where again $|\mathsf{G}_a| = O(\poly(n))|\mathsf{G}|$.
    Finally, we may do as we did earlier and replace $k$ by the circuit depth $t$ (which is naturally the number of convolutions in the unitary design results; see Figure~1 of \cite{chen2024incompressibility}) to form $\varepsilon$-approximate unitary $k$-designs in relative error from \cite{schuster2025random}. Let $k = k(t)$; then the complexity lower bound is
    \begin{align}
        \label{eq:insert_koft_naltnb}
        \frac{\na}{\log(|\mathsf{G}_a|)} \left((1 - 2t/\na) k(t) \log(q) - \frac{k(t)}{\na} \log(k(t)) + \frac{2k(t)}{\na} \log(1-\delta) - \log(q)\right).
    \end{align}
    In Corollary 1 of \cite{schuster2025random}, reproduced in \autoref{thm:copy_shall}, the authors showed that
    \begin{equation}
        t = C' \log(nk/\varepsilon) k \log^7(k)
    \end{equation}
    for a sufficiently large constant $C' > 0$. In the case of $\na < \nb$, we will only be interested in $k \linebreak[1] = \linebreak[1] O(\poly(n))$ because beyond that, superpolynomial circuit size is required, by which time the complexity has already saturated. Therefore, in this case, for a sufficiently large constant $C > 0$,
    \begin{align}
        \label{eq:invert_lambertw_2}
        t = C \log(n) k \log^7 (k).
    \end{align}
    As we did in the proof of \autoref{thm:complexity_growth_na_gt_nb}, we invert \autoref{eq:invert_lambertw_2} to find the lower bound
    \begin{align}
        k \geq\left\lfloor \frac{t}{C \log(n)  \log^7\left(\frac{t}{7^7 C\log(n)}\right)} \right\rfloor,
    \end{align}
    which we can insert into \autoref{eq:insert_koft_naltnb} as long as that equation has a positive derivative with respect to $k$. The latter property is available (asymptotically in $n$) as long as $t$ is a constant fraction of $\na / 2$. We conclude that the complexity lower bound is
    \begin{align}
        \Omega\left(\frac{\na}{\log^2(n)  \log^7(t)} \left(\left(t - \frac{2t^2}{\na}\right) - \frac{t}{\na} \log(t)\right)\right)\approx \Omega\left(\frac{\na}{\log^2(n)  \log^7(t)} \left(t - \frac{2t^2}{\na}\right)\right)\,.
    \end{align}
    The factor of 2 arises from the fact that after two layers (an even and an odd layer), the rank of the subsystem grows by at most a factor of $q^4$, thus $q^2$ per layer on average. For odd time steps, the exact rank increase will depend on the choice of subsystem, but at even time steps, after an equal number of even and odd layers, the rank is $q^{2t}$. Thus we quote our results for even times.
\end{proof}

As written, our proof for subsystem complexity growth for less-than-half subsystems uses log depth designs and holds for patchwork RQCs. The same bounds would hold for brickwork RQCs if it were proven that they form relative-error approximate $k$-designs in $\log(n)$ depth. Even establishing that brickwork RQCs formed $k$-designs in depth $n^\alpha$ for some $\alpha<1$ would give a nontrivial complexity lower bound, larger than the late-time complexity value and establishing that the complexity must rise and then fall. We note that recent work in Ref.~\cite{Laracuente25} shows that brickwork RQCs have relative entropy decay after log depth, giving some evidence that brickwork RQCs are also log depth designs.

\subsection{Saturation Complexity and Timescales}

\subsubsection*{Complexity Saturation for Brickwork RQCs}

\begin{theorem}[Theorem~1 of \cite{cotler2022fluctuations}]
    \label{thm:cotler2022fluctuations}
    Assume $A$ is a contiguous subsystem of a one-dimensional $n$-qubit system with periodic boundary conditions. For some $\err>0$, the time-evolved state $\ssa = \tr_{B}(U \linebreak[1] \ketbra{\psi} \linebreak[1] U^\dagger)$ of a depth-$t$ brickwork random quantum circuit $U \sim \bwrqc$ obeys
    \begin{equation}
        \Pr_{U\sim\bwrqc}\Big( \dist\big(\ssa,\iden_A/\da\big) \geq \err\Big) \leq \frac{\da e^{-\lambda (t-1)}}{4\err^2-\frac{\da}{\db}}\,,
    \end{equation}
    where $\lambda = 2\log\big(\frac{5}{4}\big)\approx 0.45$, we require that $\delta^2\geq \frac{D_A}{4D_B}$, and $\ket{\psi}$ is an arbitrary pure state.
\end{theorem}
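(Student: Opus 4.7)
The plan is to reduce the claimed tail bound to a bound on the expected subsystem purity $\Ex[\tr(\ssa^2)]$, and then to evaluate that expectation via the standard second-replica statistical-mechanics mapping for brickwork random quantum circuits.

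First I would pass from trace distance to Hilbert--Schmidt distance via Cauchy--Schwarz: for any $\da$-dimensional density matrix $\ssa$,
\begin{equation}
4\,\dist(\ssa,\iden_A/\da)^2 \;=\; \|\ssa - \iden_A/\da\|_1^2 \;\leq\; \da\,\|\ssa - \iden_A/\da\|_2^2 \;=\; \da\,\tr(\ssa^2) - 1,
\end{equation}
so $\{\dist(\ssa,\iden_A/\da) \geq \err\} \subseteq \{\da\,\tr(\ssa^2) - 1 \geq 4\err^2\}$. This reduces matters to applying Markov's inequality to a centered version of the non-negative variable $\da\,\tr(\ssa^2) - 1$. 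Specifically, I would subtract off the residual Haar floor $\da/\db$ (which is the asymptotic value of $\da\,\Ex_{\mathrm{Haar}}[\tr(\ssa^2)] - 1$ when $\da\le \db$); this shift requires the hypothesis $4\err^2 \geq \da/\db$ to keep the resulting Markov denominator positive, and reduces the problem to showing that the expected centered purity satisfies $\Ex[\tr(\ssa^2)] - 1/\da - 1/\db \lesssim e^{-\lambda(t-1)}$.

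Second I would compute this expected purity via the replica trick. Writing
\begin{equation}
\tr(\ssa^2) \;=\; \tr\!\left((U\ketbra{\psi}U^\dagger)^{\otimes 2}\,(S_A \otimes \iden_B)\right),
\end{equation}
where $S_A$ swaps the two replicas on $A$, the Haar average of each two-qubit gate $U_{i,i+1}$ projects $U^{\otimes 2}(\cdot)(U^\dagger)^{\otimes 2}$ onto the $S_2$-invariant subspace. On each 2-qudit site this subspace is spanned by the non-orthogonal pair $\{|e\rangle\!\rangle,|s\rangle\!\rangle\}$ corresponding to identity and swap, which produces the well-known mapping of $\Ex[\tr(\ssa^2)]$ to the partition function of a two-state domain-wall statistical mechanics model on the brickwork geometry. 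The boundary conditions are $s$ on the sites of $A$ and $e$ on the sites of $B$ along the output slice, together with $e$ everywhere along the input slice because $|\psi\rangle$ is a product state. For a contiguous subsystem $A$, the dominant configurations carry exactly two domain walls, descending from the two endpoints of $A$. The per-layer attenuation of a single domain wall is the subleading eigenvalue of the single-gate transfer matrix on $\{|e\rangle\!\rangle,|s\rangle\!\rangle\}$; a direct Weingarten computation for Haar two-qubit gates on qubits gives this eigenvalue as $q/(q^2+1)=2/5$. Collecting the two walls and accounting for the brickwork layering produces the overall rate $e^{-\lambda}=(4/5)^2$, i.e.\ $\lambda=2\log(5/4)$, and the boundary weight of $S_A$ supplies the $\da$ prefactor.

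The main technical obstacle will be the careful bookkeeping of the non-orthogonal Gram matrix of $\{|e\rangle\!\rangle,|s\rangle\!\rangle\}$, which enters the transfer-matrix analysis and must be tracked alongside the bare eigenvalues to extract the sharp exponent $\lambda=2\log(5/4)$ rather than merely the correct scaling. Subleading saddles with additional domain walls---in particular walls that wrap the spatial circle under the periodic boundary conditions---must be shown to give strictly smaller contributions than the leading two-wall configuration, and an off-by-one effect from the first layer acting on the product input is the source of the ``$t-1$'' in the exponent. Once these combinatorial details are controlled, the first step's Markov inequality directly converts the exponentially decaying second moment into the advertised tail bound.
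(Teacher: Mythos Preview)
Your reduction from trace distance to purity via $\|\ssa-\iden_A/\da\|_1^2\le \da\|\ssa-\iden_A/\da\|_2^2=\da\tr(\ssa^2)-1$, followed by the shift by $\da/\db$ and Markov's inequality, is line-for-line what the paper does. The divergence is at the purity step: rather than running the domain-wall computation, the paper simply quotes the bound
\[
\Ex_{\bwrqc}\big[\tr(\ssa^2)\big]\;\le\;\frac{1}{\da}+\frac{1}{\db}+\Big(\frac{2q}{q^2+1}\Big)^{2(t-1)}
\]
as Proposition~1 of Ref.~\cite{cotler2022fluctuations} and sets $q=2$. Your replica/transfer-matrix sketch is essentially the standard proof of that very proposition (two domain walls from the endpoints of $A$, each with per-layer weight $2q/(q^2+1)$ after summing the left/right step), so you are re-deriving the cited lemma rather than taking a different route. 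What your approach buys is self-containedness and a mechanistic explanation of where $\lambda=2\log(5/4)$ and the $t-1$ come from; what the paper's approach buys is brevity.

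One small inaccuracy to flag: you set the input boundary condition to ``$e$ everywhere because $\ket\psi$ is a product state,'' but the statement allows an arbitrary pure $\ket\psi$. In the actual purity calculation this is absorbed by the first brickwork layer, whose Haar average on each two-site block projects onto $\{e,s\}$ regardless of the input and is the real source of the $t-1$ in the exponent; you mention the ``off-by-one effect from the first layer'' but should recognize that this same step is what eliminates any dependence on the structure of $\ket\psi$, not merely a layer-counting artifact.
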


\begin{proof}
The proof of this statement follows from a computation of the expected purity of brickwork random quantum circuits. First, we note that
\begin{equation}
    \bigg\|\ssa - \frac{\iden_A}{\da}\bigg\|_1^2 \leq \da \bigg\|\ssa - \frac{\iden_A}{\da}\bigg\|_2^2 = \da\tr\big(\ssa^2\big) - 1\,,
\end{equation}
using the relation between Schatten norms $\|X\|_1 \leq \sqrt{\rank(X)} \|X\|_2$. Next, we apply this inequality in a statement about probabilities:
\begin{align}
    \label{eq:purity_follow_steps_start}
    \Pr_{\bwrqc}\Big( \dist\big(\ssa,\iden_A/\da \big) \geq \err\Big) &= \Pr_{\bwrqc}\bigg(\bigg\|\ssa - \frac{\iden_A}{\da}\bigg\|_1^2 \geq 4\err^2\bigg)\\ 
    &\leq \Pr_{\bwrqc}\Big(\da\tr\big(\ssa^2\big) - 1 \geq 4\err^2\Big)\\
    &= \Pr_{\bwrqc}\bigg(\da\tr\big(\ssa^2\big) - 1 - \frac{\da}{\db} \geq 4\err^2 - \frac{\da}{\db} \bigg)\\
    \label{eq:sub_avg_bw_purity}
    &\leq \frac{\da\Ex_{\bwrqc}\big[\tr\big(\ssa^2\big)\big] - 1 - \frac{\da}{\db}}{4\err^2 - \frac{\da}{\db}}\,,
\end{align}
where in the last step we used Markov's inequality and require that $\delta^2\geq \frac{\da}{4\db}$.

Assuming that $A$ is a contiguous subsystem of a 1D brickwork random quantum circuit, we then quote Proposition~1 of \cite{cotler2022fluctuations},
\begin{equation}
    \Ex_{\bwrqc}\big[ \tr(\ssa^2)\big]\leq \frac{1}{\da} + \frac{1}{\db} + \bigg(\frac{2q}{q^2+1}\bigg)^{2(t-1)}\,.
    \label{eq:avgpur}
\end{equation}
Substituting this expectation value into \autoref{eq:sub_avg_bw_purity} and considering local qubits with $q=2$, we complete the proof:
\begin{equation}
    \Pr_{\bwrqc}\Big( \dist\big(\ssa,\iden_A/\da\big) \geq \err\Big) \leq \frac{\da e^{-\lambda (t-1)}}{4\err^2 - \frac{\da}{\db}}\,.
\end{equation}
\end{proof}

\autoref{thm:cotler2022fluctuations} clarifies two points about complexity in the regime $\na < \nb$: (i) the complexity saturates at timescales of order $O(\na)$, and (ii) the saturation complexity is that of preparing the maximally mixed state on $A$.
\begin{corollary}
    \label{cor:cotler2022fluctuations}
    For the same scenario as in \autoref{thm:cotler2022fluctuations}, after a depth $t = O(\na)$, the complexity of subsystem $A$ is equal to that of $\iden_A / \da$, i.e., $\CC(\ssa) = \CC(\iden_A / \da) = O(n_A)$ with high probability.
\end{corollary}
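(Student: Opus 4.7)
The plan is to combine the concentration result of \autoref{thm:cotler2022fluctuations} with an explicit low-complexity preparation of the maximally mixed state: once $\ssa(t)$ is shown to be close to $\iden_A/\da$ in trace distance, any circuit producing $\iden_A/\da$ also produces $\ssa(t)$ up to a controlled error, by the triangle inequality for the trace distance. So the subsystem complexity upper bound is inherited from that of $\iden_A/\da$, which in turn admits an easy $O(\na)$-gate construction.

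First, I would apply \autoref{thm:cotler2022fluctuations} with accuracy parameter $\err/2$ in place of $\err$. Choosing $t = c\,\na/\lambda$ for a sufficiently large universal constant $c > \log 2$, the numerator $\da\, e^{-\lambda(t-1)} = 2^{\na}\,e^{-\lambda(t-1)}$ is exponentially small in $\na$. The denominator $4(\err/2)^2 - \da/\db = \err^2 - \da/\db$ is bounded away from zero for any fixed $\err > 0$, since $\na < \nb$ forces $\da/\db$ to be exponentially small in $\nb-\na$; this simultaneously verifies the theorem's hypothesis $(\err/2)^2 \geq \da/(4\db)$. The conclusion is that with probability at least $1 - e^{-\Omega(\na)}$, one has $\dist(\ssa(t), \iden_A/\da) \leq \err/2$.

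Next, I would exhibit an explicit size-$O(\na)$ circuit that prepares $\iden_A/\da$ exactly, using $\na$ ancillas (well within the $O(\poly(n))$ budget of \autoref{def:complexity}). The standard construction applies a Hadamard and a CNOT to each $(A,\text{ancilla})$ pair starting from $\ket{0}^{\otimes 2\na}$, producing $\na$ EPR pairs; tracing out the ancillas leaves $\iden_A/\da$ on $A$. This uses $2\na$ gates; if the universal gate set $\mathsf{G}$ does not literally contain $H$ and CNOT, the Solovay-Kitaev theorem incurs only polylogarithmic overhead, so $\CC(\iden_A/\da) = O(\na)$.

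Combining the two ingredients, with high probability the output of the explicit $O(\na)$-gate circuit lies within trace distance $0 + \err/2 = \err/2 < \err$ of $\ssa(t)$, so by the definition of $\CC$ we obtain $\CC(\ssa(t)) \leq O(\na)$, matching $\CC(\iden_A/\da)$ in the asymptotic sense. The proof is essentially mechanical once \autoref{thm:cotler2022fluctuations} is in hand; the only step requiring any care is choosing $c$ large enough that the factor $2^{\na}e^{-\lambda(t-1)}$ beats $\da$, and verifying the denominator condition of the theorem after substituting $\err \to \err/2$, both of which follow straightforwardly from $\na < \nb$ and constant $\err$.
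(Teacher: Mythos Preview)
Your proposal is correct and follows essentially the same approach as the paper: invoke \autoref{thm:cotler2022fluctuations} to get $\dist(\ssa(t),\iden_A/\da)$ small after depth $O(\na)$, then exhibit the $O(\na)$-gate EPR-pair construction of $\iden_A/\da$ and conclude via the definition of $\CC$. The paper's argument is a one-line remark; you are simply more explicit about the choice of constant in $t$, the denominator condition, and the $\err/2$ margin (which is only needed if Solovay--Kitaev is invoked, since the EPR construction is otherwise exact), none of which changes the substance.
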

The complexity of the maximally mixed state is exactly $n_A$, prepared from an initial product state using a circuit of $n_A$ Hadamards and $n_A$ CNOTs on $2n_A$ qubits. For any universal gate set, the complexity is $\CC(\iden_A / \da) = O(n_A)$.

\subsubsection*{Comments}

The complexity of the time-evolved subsystem $\rho_A(t)$ becomes trivial at time $t\approx n_A\log(2)/\lambda$, which is $t\approx 1.55\, n_A$ for local qubits. But for $q$-dimensional local qudits, $\lambda = 2\log\big(\frac{q^2+1}{2q}\big)$, and thus the time $t\approx n_A\log(q)/\lambda$ it takes for the subsystem to become maximally mixed is $t\approx n_A/2$ as $q\ra\infty$. We note that this precisely coincides with the time at which the complexity lower bound for less-than-half subsystems in \autoref{thm:complexity_growth_na_lt_nb} becomes trivial. We conjecture that at large local dimension, the complexity drops sharply at time $t\approx n_A/2$. Unfortunately, our current proof technique, using moments of the density matrix to bound the complexity, is largely agnostic to the local dimension. Proving subsystem complexity lower bounds which peak at $t>n_A/4$ requires a new approach.

We further remark that the purity decay for brickwork RQCs not only tells us when the complexity of the subsystems has saturated, but can also give us lower bounds on the subsystem complexity growth. The expected purity lower-bounds the rank of the marginal as $\Ex[S_0(\rho_A)]\geq \Ex[S_2(\rho_A)]\geq -\log \Ex[\tr(\rho_A^2)]$. Thus from the purity decay at a given time step, say after a single layer of the circuit, we can conclude that the rank must have increased, and therefore that $\CC(\rho(t))$ has increased by an $O(1)$ amount. Strictly speaking, the rank can increase without changing the complexity of the state so long as we stay within $\delta$ of the $\rho$ in trace distance. Thus we should consider the smoothed rank, the smallest Schmidt rank of a state $\sigma$ which approximates $\rho$. As the purity of a fixed subsystem $A$ decays, the number of gates that spanned the bipartition becomes $\Omega(n_A)$. But the averaged purity in \autoref{eq:avgpur} holds for any contiguous subsystem: thus the rank must be increasing across all contiguous bipartitions of our 1D system. We can repeat this argument to get a subsystem complexity growth to $\Omega(n_A^2)$, at which point the complexity will start to decrease as the light cones of the boundaries of the subsystem $A$ meet and overlap.

\subsubsection*{Complexity Saturation for Patchwork RQCs}

\autoref{cor:cotler2022fluctuations} would have been sufficient for our discussion had we known that a complexity lower bound (\autoref{thm:complexity_growth_na_lt_nb}) also holds for the similar setup of brickwork random quantum circuits. However, for technical reasons originating from \cite{schuster2025random}, we can only prove the lower bound in \autoref{thm:complexity_growth_na_lt_nb} for patchwork random quantum circuits (see \autoref{def:pwrqc}), and thus, we need versions of \autoref{thm:cotler2022fluctuations} and \autoref{cor:cotler2022fluctuations} for patchwork random quantum circuits. We state and prove those analogues in \autoref{thm:patchwork_purity} and \autoref{cor:patchwork_purity}.

\begin{theorem}
    \label{thm:patchwork_purity}
    Assume $A$ is a contiguous subsystem of a one-dimensional $n$-qubit system with periodic boundary conditions. For some $\err>0$, the time-evolved state $\ssa = \tr_{B}(U \ketbra{\psi} U^\dagger)$ of a depth-$t$ patchwork random quantum circuit $U \sim \pwrqc$ obeys
    \begin{equation}
        \pr_{U\sim\pwrqc}\Big( \dist\big(\ssa,\iden_A/\da\big) \geq \err\Big) \leq \frac{\da e^{-t/C + \log(n)}}{\err^2 - \frac{\da}{4\db}}
    \end{equation}
    where $\ket{\psi}$ is an arbitrary pure state and $C > 0$ is a constant. For $\epsilon \in (0, 1)$ concentration,
    \begin{equation}
        \pr_{U\sim\pwrqc}\Big( \dist\big(\ssa,\iden_A/\da\big) \geq \err\Big) \leq \epsilon\,,
    \end{equation}
    we demand that $n_A \leq \frac{n}{2} - \log_2\Big(\frac{1}{\epsilon}\Big) - \log_2\Big( \frac{8n}{4\delta^2 - \frac{D_A}{D_B}} \Big)$.
\end{theorem}

\begin{proof}
    The proof relies on the result of \cite{schuster2025random} that patchwork random quantum circuits of depth $t \linebreak[1] = \linebreak[1] \widetilde{O}(\ln(nk/\varepsilon) k\log^7(k))$ form $\varepsilon$-approximate unitary $k$-designs. Following the steps \autoref{eq:purity_follow_steps_start}--\autoref{eq:sub_avg_bw_purity} identically from the proof of \autoref{thm:cotler2022fluctuations} with only $\bwrqc$ replaced by $\pwrqc$, we find
    \begin{align}
        \label{eq:return_markov}
        \pr_{\pwrqc}\Big( \dist\big(\ssa,\iden_A/\da\big) \geq \delta\Big) &\leq \frac{\da\Ex_{\pwrqc}\big[\tr\big(\ssa^2\big)\big] - 1 - \frac{\da}{\db}}{4\err^2 - \frac{\da}{\db}}
    \end{align}
    for $\delta \geq \frac{\da}{4\db}$. We proceed by using the swap trick to express the purity $\tr(\ssa^2)$ of $\ssa$ as a trace of two copies of the pure state $U\ket{\psi}$, where $\rho_A = \tr_B(U\ketbra{\psi}U^\dagger)$. Consider a doubled Hilbert space (on $2n$ qubits) and the state $(U\ket{\psi})^{\otimes 2}$. We identify the systems $A$ and $B$ on the two copies, and we refer to them as $A_1$, $A_2$, $B_1$, $B_2$. We define a unitary operator $\mathbb{S}$ on the doubled Hilbert space such that
    \begin{align}
        \mathbb{S}\ket{\psi_{A_1}, \psi_{B_1}, \psi_{A_2}, \psi_{B_2}} := \ket{\psi_{A_2}, \psi_{B_1}, \psi_{A_1}, \psi_{B_2}},
    \end{align}
    where $\ket{\psi_{A_1}, \psi_{B_1}, \psi_{A_2}, \psi_{B_2}}$ is an arbitrary tensor product state across the four systems $A_1$, $B_1$, $A_2$, $B_2$. In words, $\mathbb{S}$ swaps the state across the systems $A_1$ and $A_2$. With the help of this operator, we can express the purity $\tr(\ssa^2)$, where the trace is over system $A$ in the single-copy picture, as $\tr(\mathbb{S} U^{\otimes 2} \ketbra{\psi}^{\otimes 2} U^{\dagger \otimes 2})$, where the trace is over systems $A_1 A_2 B_1 B_2$ in the double-copy picture. We now return to the Markov inequality in \autoref{eq:return_markov}:
    \begin{align}
        \pr_{\pwrqc}\Big( &\dist\big(\ssa,\iden_A/\da\big) \geq \delta\Big) \nonumber \\
        &\leq \frac{\da}{4\err^2 - \frac{\da}{\db}} \left(\Ex_{\pwrqc} \left[\tr(\mathbb{S} U^{\otimes 2} \ketbra{\psi}^{\otimes 2} U^{\dagger \otimes 2})\right] - \frac{1}{\da} - \frac{1}{\db}\right)\\
        &= \frac{\da}{4\err^2 - \frac{\da}{\db}} \left(\tr(\mathbb{S} \Ex_{\pwrqc} [U^{\otimes 2} \ketbra{\psi}^{\otimes 2} U^{\dagger \otimes 2}]) - \frac{1}{\da} - \frac{1}{\db}\right)\\
        &= \frac{\da}{4\err^2 - \frac{\da}{\db}} \left(\tr(\mathbb{S} \Phi_{\pwrqc}^{(2)} (\ketbra{\psi})) - \frac{1}{\da} - \frac{1}{\db}\right)\\
        &= \frac{\da}{4\err^2 - \frac{\da}{\db}} \left(\tr(\mathbb{S}( \Phi_{\pwrqc}^{(2)} - \Phi_{\mu_H}^{(2)}) (\ketbra{\psi})) + \tr(\mathbb{S} \Phi_{\mu_H}^{(2)} (\ketbra{\psi})) - \frac{1}{\da} - \frac{1}{\db}\right)
    \end{align}
    where, in the second-to-last expression, we packaged the expectation values into $2$-fold channels (refer to \autoref{def:moment_channel}) and, in the last expression, we added and subtracted the $2$-fold Haar-random channels. Next, we use two facts to simplify the right-hand side: (1) the trace of $\mathbb{S}$ times the difference of the channels is bounded above by the maximum distinguishability of the channels as measured by the diamond norm, and (2) the purity of subsystems of Haar-random states is given by $\frac{\da + \db}{\da \db + 1} \leq \frac{1}{\da} + \frac{1}{\db}$. Using these two points, we find
    \begin{align}
        \pr_{\pwrqc}\Big( \dist\big(\ssa,\iden_A/\da\big) \geq \delta\Big) &\leq \frac{\da}{4\err^2 - \frac{\da}{\db}} \big\lVert \Phi_{\pwrqc}^{(2)} - \Phi_{\mu_H}^{(2)} \big\rVert_{\diamond}.
    \end{align}
    In particular, suppose $\pwrqc$ forms relative-error 2-designs with error $\varepsilon$; then the relative error bounds the diamond norm error at the cost of a factor of two (Lemma 3 in \cite{brandao2016local}):
    \begin{align}
        \label{eq:sub_err_ub}
        \pr_{\pwrqc}\Big( \dist\big(\ssa,\iden_A/\da\big) \geq \delta\Big) \leq \frac{2\da \varepsilon}{4\err^2 - \frac{\da}{\db}}.
    \end{align}
    Finally, we bound the error by the size of the random circuit. The depth $t$ of such a patchwork random quantum circuit that provides relative-error approximate unitary designs, derived in \cite{schuster2025random}, is given by
    \begin{align}
        &t \leq C \ln(2n/\varepsilon) \implies \varepsilon \leq 2n \exp(-t/C)
    \end{align}
    for some constant $C > 0$. Recall from \autoref{rem:time} that increasing the time $t$ in patchwork random quantum circuits corresponds to picking an entirely new circuit from a family of circuits. We substitute this upper bound on the error $\varepsilon$ into \autoref{eq:sub_err_ub} and complete the proof:
    \begin{equation}
        \frac{2\da \varepsilon}{4\err^2 - \frac{\da}{\db}} \leq \frac{4\da n \exp(-t/C)}{4\err^2 - \frac{\da}{\db}} = \frac{\da \exp(-t/C + \log(n))}{\delta^2 - \frac{\da}{4\db}}\,.
    \end{equation}
    A subtlety arises when considering patchwork random quantum circuits with regards to the error of the approximate unitary design $\varepsilon$ and the size of the patches $2\xi$ (refer to \autoref{def:pwrqc}). Namely, if we demand an exponentially small $\varepsilon$, then the patch size $2\xi = 2 \log_2(nk^2 / \varepsilon)$ required for sublinear-depth $\ep$-approximate unitary $k$-designs from Ref.~\cite{schuster2025random} could exceed the total system size $n$, which is illogical. This issue is significant for the current theorem,  \autoref{thm:patchwork_purity}, and the following \autoref{cor:patchwork_purity}. In particular, to guarantee an $\epsilon \in (0, 1)$ concentration in \autoref{eq:sub_err_ub}, we demand that $\varepsilon \leq \epsilon (4\delta^2 - \frac{D_A}{D_B}) / 2 D_A$, which constrains $n_A$ by in turn requiring that the patch size $2\xi = 2 \log_2(nk^2 / \varepsilon)$ remain less than or equal to $n$ for patchwork $\ep$-approximate unitary $2$-designs as follows:
\begin{equation}
    \label{eq:patch_constraint}
    2 \log_2(nk^2 / \varepsilon) \leq n \,\implies\, 2 \log_2\Bigg(\frac{8 n D_A}{\epsilon\Big(4\delta^2 - \frac{D_A}{D_B}\Big)}\Bigg) \leq n \,\implies\, n_A \leq \frac{n}{2} - \log_2\Big(\frac{1}{\epsilon}\Big) - \log_2\Bigg(\frac{8 n}{4\delta^2 - \frac{D_A}{D_B}}\Bigg) \,.
\end{equation}
\end{proof}

\begin{corollary}
    \label{cor:patchwork_purity}
    For the same scenario as in \autoref{thm:patchwork_purity} with $\epsilon \in (0, 1)$ concentration, after a patchwork random quantum circuit depth $t = O(\na)$, the complexity of subsystem $A$ is equal to that of $\iden_A / \da$, i.e., $\CC(\ssa) = \CC(\iden_A / \da)$, with probability at least $1-\epsilon$.
\end{corollary}

In \autoref{cor:patchwork_purity}, by ``patchwork random quantum circuit depth,'' we are referring to distinct circuits at different times, as explained in \autoref{rem:time}. This completes our discussion of the saturation complexity for the case $\na < \nb$.

\subsubsection*{Comments}

Finally, we explain how the constraints on $n_A$ in \autoref{thm:patchwork_purity} and \autoref{cor:patchwork_purity}, coming from logical patch sizes for patchwork $\ep$-approximate unitary $2$-designs, limit our results using \autoref{fig:qi_expectation}~(a). We studied two similar models of 1D random quantum circuits: patchwork and brickwork circuits. The blue line in \autoref{fig:qi_expectation}~(a) is a complexity lower bound for all constant $p < 1/2$, and the red line after $t \geq \na / 2$ is a complexity upper bound for $p$ only as in \autoref{eq:patch_constraint} for patchwork circuits. For brickwork circuits, we can only prove complexity upper bounds. Thus, the red lines in \autoref{fig:qi_expectation}~(a) denote complexity upper bounds for subsystems of those circuits for all $p < 1/2$. In other words, we introduced patchwork circuits to prove complexity lower bounds that we could not prove with brickwork circuits because we do not yet know of a proof of sublinear-depth approximate unitary designs on those circuits. We expect this to be true, so we encourage the reader to read the red lines in  \autoref{fig:qi_expectation}~(a) as provable complexity upper bounds for brickwork circuits and the blue line as a conjectured complexity lower bound for the same, but which we can prove for patchwork circuits.

\newpage
\bibliographystyle{utphys}
\bibliography{refs}
\end{document}